%% file: main-arxiv.tex
\documentclass[11pt, letterpaper]{article}
\usepackage{subfigure}
\usepackage{titlesec, booktabs}
\usepackage[T1]{fontenc}
\usepackage[utf8]{inputenc}
\usepackage{lmodern}
\usepackage{hyperref}
\usepackage{enumitem}
\usepackage{amsthm}
\usepackage{url}
\usepackage{comment}
\usepackage{authblk}
\usepackage[ruled,vlined]{algorithm2e}
\usepackage{adjustbox}

\usepackage{xr-hyper} 
\usepackage{hyperref}

\SetKwRepeat{Do}{do}{while}%
\SetKwInput{KwInit}{Initialization}%
\SetKwIF{If}{ElseIf}{Else}{if}{then}{else if}{else}{endif}%
\usepackage{mathtools}

\newtheorem{theorem}{Theorem}[section]
\newtheorem{lemma}{Lemma}[section]

\newtheorem{proposition}{Proposition}[section]

\newtheorem{assumption}{Assumption}[section]

\usepackage{color}
\usepackage{epsfig}
\usepackage{epstopdf}
\usepackage[section]{placeins}
\usepackage{graphicx}
\usepackage{amsmath}
\usepackage{amssymb}
\usepackage{amsfonts}
\usepackage{mathrsfs}
\usepackage{multirow}
\usepackage{array}
\usepackage{subfigure} 
\usepackage{chngcntr}
\usepackage{natbib} 
\usepackage{apalike}

\newsavebox{\theorembox}
\newsavebox{\lemmabox}
\newsavebox{\claimbox}
\newsavebox{\factbox}
\newsavebox{\corollarybox}
\newsavebox{\examplebox}
\newsavebox{\remarkbox}
\newsavebox{\assbox}
\newsavebox{\propositionbox}
\newsavebox{\problembox}
\newsavebox{\defbox}

\savebox{\theorembox}{\noindent\bf Theorem}
\savebox{\lemmabox}{\noindent\bf Lemma}
\savebox{\factbox}{\noindent\bf Fact}
\savebox{\corollarybox}{\noindent\bf Corollary}
\savebox{\examplebox}{\noindent\bf Example}
\savebox{\remarkbox}{\noindent\bf Remark}
\savebox{\assbox}{\noindent\bf Assumption}
\savebox{\propositionbox}{\noindent\bf Proposition}
\savebox{\problembox}{\noindent\bf Problem}
\savebox{\defbox}{\noindent\bf Definition}

\def\blackslug{\hbox{\hskip 1pt \vrule width 4pt height 8pt depth 1.5pt
\hskip 1pt}}

 \newtheorem{@remark}{\bf Remark}[section]
 \newenvironment{remark}{\begin{@remark}\rm}{\end{@remark}}

\newcommand{\argmin}{\mathop{\rm argmin}}

\newcommand{\Var}{{\rm Var}}
\newcommand{\Cov}{{\rm Cov}}

\def\blot{\quad {$\vcenter{\vbox{\hrule height.4pt
             \hbox{\vrule width.4pt height.9ex \kern.9ex \vrule
width.4pt}
             \hrule height.4pt}}$}}

\usepackage{multirow}
\providecommand{\keywords}[1]
{
{  \small	
  {\textit{Keywords---}} #1}
}
\usepackage{cleveref}

\renewcommand{\baselinestretch}{1.05}
\numberwithin{equation}{section}

\begin{document}

\title{Latent community paths in VAR-type models via \\ dynamic directed spectral co-clustering}
\author[1]{Younghoon Kim}
\author[2]{Changryong Baek\footnote{Corresponding author. Department of Statistics, Sungkyunkwan University, 25-2, Sungkyunkwan-ro, Jongno-gu, Seoul, Korea 03063, crbaek@skku.edu}}
\affil[1]{Cornell University}
\affil[2]{Sungkyunkwan University}

\date{\today}

\maketitle

\baselineskip17pt

\begin{abstract}
	This paper proposes a dynamic network framework for uncovering latent community paths in high-dimensional VAR-type models. By embedding a degree-corrected stochastic co-blockmodel (ScBM) into the transition matrices of VAR-type systems, we separate sending and receiving roles at the node level and summarize complex directional dependence in an interpretable low-dimensional form. Our method integrates directed spectral co-clustering with eigenvector smoothing to track how directional groups split, merge, or persist over time. This framework accommodates both periodic VAR (PVAR) models for cyclical seasonal evolution and generalized VHAR models for structural transitions across ordered dependence horizons. We establish non-asymptotic misclassification bounds for both procedures and provide supporting evidence through Monte Carlo experiments. Applications to U.S.\ nonfarm payrolls distinguish a recurrent business-centered core from more mobile, seasonally sensitive sectors. In global stock volatilities, the results reveal a compact U.S.-centered long-horizon block, a Europe-heavy developed core, and a more dynamic short-horizon reallocation of peripheral and bridge markets.
\end{abstract}

\keywords{Periodic autoregression, heterogeneous autoregression, dynamic network, community detection, stochastic block model, spectral clustering}


\include{body-supp}

\include{append}

\makeatletter
\@removefromreset{table}{section}
\@removefromreset{figure}{section}
\@addtoreset{table}{section}
\@addtoreset{figure}{section}
\makeatother
\renewcommand{\thetable}{\Alph{section}\arabic{table}}
\renewcommand{\thefigure}{\Alph{section}\arabic{figure}}

\end{document}

%% file: body-supp.tex
\section{Introduction}
\label{se:intro}

The analysis of high-dimensional vector autoregressive (VAR) models is often constrained by the curse of dimensionality. Even when regularization techniques yield sparse estimates \citep[e.g.,][]{basu2015regularized,brownlees2018realized,barigozzi2023fnets}, direct inspection of individual coefficients---each representing predictive influence in the sense of \citet{granger1969investigating}---rarely provides a coherent account of system-wide dependence. In many macroeconomic and financial applications, the main question is not simply which coefficients are nonzero, but how the system organizes itself into interpretable groups and how that organization changes across seasons or dependence horizons.

To address this, we study a dynamic network framework for \emph{latent community paths}: ordered trajectories of directional groups that capture how clusters of ``senders'' and ``receivers'' of shocks split, merge, persist, and recompose over time. By shifting attention from isolated coefficients to the dynamic evolution of group structure, we turn a high-dimensional estimation problem into one of recovering interpretable trajectories of directional influence.

The core of our methodology embeds a degree-corrected stochastic co-blockmodel (ScBM; \cite{qin2013regularized}) into the transition matrices of VAR-type systems. While recent literature has advanced block recovery in VAR models via spectral methods \citep[e.g.,][]{gudhmundsson2021detecting, brownlees2022community}, these approaches largely rely on symmetrization, which inherently suppresses directional roles, or treat network snapshots as static entities. More recently, \citet{gudhmundsson2025detecting} moved toward directional groups through an ScBM representation, yet primarily focused on lag aggregation. We overcome these limitations by integrating directed spectral co-clustering, which preserves the asymmetry between sending and receiving roles \citep{rohe2016co, wang2020spectral}, with eigenvector smoothing across ordered stages in the spirit of \citet{liu2018global}. This yields a dynamic community-recovery scheme tailored to track the continuous evolution of dependent multivariate time series.

We develop this framework in two distinct high-dimensional settings. First, in the periodic VAR (PVAR) model \citep[see, e.g.,][]{ursu2009modelling, baek2018periodic}, we uncover how directed community structures evolve cyclically across seasons, moving beyond static partitions. Furthermore, in the generalized VHAR model, we replace the conventional daily--weekly--monthly aggregation \citep{corsi2009simple, baek2021sparse} with arbitrary fixed lengths to track how global dependencies transition across ordered short-, medium-, and long-run horizons \citep{LeeBaek2023Crypto}.

The primary theoretical contribution of this paper is establishing rigorous non-asymptotic misclassification bounds for directed community recovery within the ScBM--PVAR and generalized ScBM--VHAR frameworks. The approach is modular by design. We first derive estimation error bounds in operator norm for regularized lasso estimators in high-dimensional time series \citep{wong2020lasso}. These bounds are then integrated with random-graph perturbation arguments for degree-corrected co-blockmodels \citep{qin2013regularized} to control the distance between the estimated and population transition matrices. In a second step, we extend the eigenvector smoothing perturbation argument of Liu et al. (2018) to accommodate ordered chains of left and right singular-vector projectors across evolving seasons or horizons. The resulting bound on the Frobenius norm of row-normalized singular-vector perturbations directly translates into a misclassification rate, completing a clean theoretical pipeline from first-stage estimation error through spectral perturbation to clustering error. The modular structure also allows the co-clustering theory to be updated as tighter first-stage estimators become available. Finally, we provide verifiable sufficient conditions under which the core spectral assumptions hold, connecting the abstract conditions to primitive properties of the ScBM network parameters.

The empirical relevance of our framework is demonstrated in two applications. In the U.S. nonfarm payroll analysis (PVAR), the estimated paths reveal a recurrent business-centered core together with broader seasonal reallocation among more mobile sectors. In the global realized-volatility application (VHAR), the estimated paths reveal a compact U.S.-centered long-horizon block, a Europe-heavy developed core, and pronounced cross-horizon reorganization at shorter horizons. These results show that the proposed framework provides an interpretable account of how directional dependence is reorganized across seasons and horizons.

The rest of the paper is organized as follows. Section~\ref{se:model} introduces the models and population networks. Section~\ref{se:algorithm} presents the estimators and the dynamic co-clustering procedure. Section~\ref{se:property} establishes the theoretical misclassification bounds. Sections~\ref{se:simulation} and \ref{se:appl} report simulation results and empirical applications, respectively. Section~\ref{se:conclusion} concludes.

\medskip
\noindent\textbf{Notation.} For a vector $v \in \mathbb{R}^d$, we denote its Euclidean norm by $\|v\| = \big(\sum_{i=1}^d v_i^2\big)^{1/2}$. For a matrix $M$, we write $\|M\|$ for the spectral norm, $\|M\|_F$ for the Frobenius norm, and $\|M\|_1$ and $\|M\|_{\infty}$ for the matrix $\ell_1$ and $\ell_{\infty}$ norms, that is, the maximum absolute column sum and maximum absolute row sum, respectively. Let $\rho(M)$ denote the spectral radius, write $M^\prime$ for the transpose of $M$, and use $\mathrm{Tr}(M)$ for its trace. We denote by $\mathbb{Z}$ the set of integers, by $\mathbb{N}$ the set of positive integers, and by $\mathbb{N}_0 := \mathbb{N}\cup\{0\}$ the set of nonnegative integers. Finally, big-$\mathcal{O}$ and big-$\Omega$ denote deterministic asymptotic upper and lower bounds, respectively, while $\mathcal{O}_{\mathbb{P}}$ denotes stochastic boundedness in probability. We write $a_n \asymp b_n$ if there exist constants $0 < c < C < \infty$ such that
$c\, b_n \le a_n \le C\, b_n$ for all sufficiently large $n$.

\section{Model description}
\label{se:model}

We introduce three ScBM-based specifications for multivariate autoregressions. ScBM-VAR is the basic building block; ScBM-PVAR and generalized ScBM-VHAR extend it to seasonal and ordered-horizon settings. 

\subsection{ScBM-VAR model}
\label{sse:model-ScBM-VAR}

Let $\mathcal{G}=(\mathcal{V},\mathcal{E},\mathcal{W})$ be a directed weighted graph with node set $\mathcal{V}=\{1,\ldots,q\}$, edge set $\mathcal{E}\subset \mathcal{V}\times\mathcal{V}$, and weights $\mathcal{W}=\{[w]_{ij}\in\mathbb{R}\setminus\{0\}:(i,j)\in\mathcal{E}\}$. 
The adjacency matrix $A \in \mathbb{R}^{q \times q}$ satisfies $[A]_{ij} = [w]_{ij}$ if $(i,j) \in \mathcal{E}$ and $0$ otherwise, where the nonzero weights satisfy $|[w]_{ij}| \le w_{\max}$ almost surely. We allow $K_y$ sending communities and $K_z$ receiving communities, with membership matrices $Y\in\{0,1\}^{q\times K_y}$ and $Z\in\{0,1\}^{q\times K_z}$. Under the ScBM, edge probability between nodes $i$ and $j$ satisfies
\begin{displaymath}
	\mathbb{P}((i,j)\in\mathcal{E})=\mathbb{P}([A]_{ij}\neq 0)=[\Theta^y]_{ii}[\Theta^z]_{jj}[B]_{y_i z_j}, \quad y_i=1,\ldots,K_y, \ z_j=1,\ldots,K_z,
\end{displaymath}
where $\Theta^y$ and $\Theta^z$ are diagonal propensity matrices and $B\in[0,1]^{K_y\times K_z}$ is the community link matrix. Let $\{\mathcal{V}_k\}$ be a partition of the node set. For an identifiability purpose, 
\begin{displaymath}
	\sum_{i\in \mathcal{V}_k}[\Theta^y]_{ii}=1,\quad k=1,\ldots,K_y,
	\qquad
	\sum_{j\in \mathcal{V}_k}[\Theta^z]_{jj}=1,\quad k=1,\ldots,K_z.
\end{displaymath}
The corresponding population adjacency matrix is
\begin{equation*}\label{e:adjacency_var}
	\mathcal{A}:=\mathbb{E}[A]=\mu \Theta^y YBZ'\Theta^z,
\end{equation*}
for some constant $\mu>0$.

Let $\{Y_t\}_{t=1}^T$ be a $q$-dimensional mean-zero time series. The VAR$(p)$ model is
\begin{equation}\label{e:VAR}
	Y_t=\sum_{h=1}^p \Phi_h Y_{t-h}+\varepsilon_t,
	\quad
	\varepsilon_t\sim\mathrm{WN}(0,\Sigma_\varepsilon),
	\quad t=1,\ldots,T.
\end{equation}
At each lag $h$, the ScBM is embedded in the transition matrix as
\begin{equation}\label{e:trans_VAR}
	\Phi_h=\phi_h\bigl(P_h^{\tau_h}\bigr)^{-1/2}A_h'\bigl(O_h^{\tau_h}\bigr)^{-1/2},
	\qquad h=1,\ldots,p,
\end{equation}
where $\phi_h$ is a scalar to ensure the stability of the VAR($p$) model and
\begin{displaymath}
	[P_h^{\tau_h}]_{jj}:=\sum_k [A_h']_{kj}+\tau_h,
	\qquad
	[O_h^{\tau_h}]_{ii}:=\sum_k [A_h']_{ik}+\tau_h,
\end{displaymath}
with $\tau_h := q^{-1}\sum_{i,j}[A_h']_{ij}$ denoting the average out-degree regularizer \cite[e.g.,][]{chaudhuri2012spectral,qin2013regularized}, which is known to improve clustering performance under high heterogeneity in degrees. The transpose in \eqref{e:trans_VAR} ensures that columns represent lagged senders and rows current receivers. We assume that the latent network is common across lags $h=1,\ldots,p$, and that $\{\phi_h\}$ is chosen to satisfy
\begin{displaymath}
	\det(I-\Phi_1 z-\cdots-\Phi_p z^p)\neq 0,
	\qquad |z|\le 1,
\end{displaymath}
so that the VAR$(p)$ model is stable.

\subsection{ScBM-PVAR model}
\label{sse:model-ScBM-PVAR}

For each season $m = 1,\ldots,s$, let $G_m = (\mathcal{V}_m,\mathcal{E}_m,\mathcal{W}_m)$ be a directed weighted graph with adjacency matrix $A_m \in \mathbb{R}^{q \times q}$, where $[A_m]_{ij} = [w_m]_{ij}$ if $(i,j) \in \mathcal{E}_m$ and $0$ otherwise. We assume that there exists a finite constant $w_{\max,m} > 0$ such that $|[w_m]_{ij}| \le w_{\max,m}$ almost surely for all $(i,j) \in \mathcal{E}_m$. Let $Y_m\in\{0,1\}^{q\times K_{y_m}}$ and $Z_m\in\{0,1\}^{q\times K_{z_m}}$ denote the sending and receiving membership matrices, respectively. With propensity matrices $\Theta_m^y,\Theta_m^z$ and community link matrix $B_m$, the population adjacency matrix is
\begin{equation}\label{e:adjacency_population}
	\mathcal{A}_m := \mathbb{E}[A_m] = \mu_m \Theta_m^y Y_m B_m Z_m' \Theta_m^z.
\end{equation}

In addition to the setup of each season, we assume a cyclic evolution of communities across seasons, namely,
\begin{equation}\label{e:season_cyclic}
	Z_{m-1}=Y_m,\quad m=2,\ldots,s,
	\qquad
	Y_1=Z_s.
\end{equation}
Note that ``sending'' and ``receiving'' denote directional connectivity ``within'' each season, following the convention of \cite[e.g.,][]{rohe2016co,wang2020spectral}.
The restriction \eqref{e:season_cyclic} implies that the receiving membership in one season is carried over as the sending membership in the next season. Due to this circular structure, the community evolution remains identical with the reversed flow; $Z_{m}=Y_{m-1}$, $m=2,\ldots,s$, and $Z_1=Y_s$.

The periodic VAR model is
\begin{displaymath}
	Y_t = \sum_{h=1}^{p_m} \Phi_{h,m} Y_{t-h} + \varepsilon_{t,m},
	\qquad
	\varepsilon_{t,m} \sim WN(0,\Sigma_{\varepsilon,m}),
	\qquad
	t = 1,\ldots,T,
\end{displaymath}
where $p_m$ may depend on the season. Writing $t = m + ns$ yields the equivalent representation
\begin{displaymath}
	Y_{m+ns} = \sum_{h=1}^{p_m} \Phi_{h,m} Y_{m+ns-h} + \varepsilon_{m+ns,m},
	\qquad
	\varepsilon_{m+ns,m} \sim WN(0,\Sigma_{\varepsilon,m}).
\end{displaymath}
The seasonal transition matrices are parameterized as
\begin{equation}\label{e:trans_PVAR}
	\Phi_{h,m}
	=
	\phi_{h,m}
	(P_{h,m}^{\tau_{h,m}})^{-1/2}
	A_{h,m}'
	(O_{h,m}^{\tau_{h,m}})^{-1/2},
	\qquad
	h=1,\ldots,p_m,\quad m=1,\ldots,s,
\end{equation}
with a scalar $\phi_{h,m}$ and
\begin{displaymath}
	[P_{h,m}^{\tau_{h,m}}]_{jj}
	:=
	\sum_k [A_{h,m}']_{kj} + \tau_{h,m},
	\qquad
	[O_{h,m}^{\tau_{h,m}}]_{ii}
	:=
	\sum_k [A_{h,m}']_{ik} + \tau_{h,m},
	\qquad
	\tau_{h,m}
	:=
	q^{-1}\sum_{i,j}[A_{h,m}']_{ij}.
\end{displaymath}
Within each season, the latent network is taken to be common across lags, and $\phi_{h,m}$ is chosen so that the process is stable.

For both estimation and theory, it is convenient to stack one full seasonal cycle. Assume $T=Ns$ so that the sample contains $N$ complete cycles, and define
\begin{displaymath}
	Y_n^*
	=
	\bigl(
	Y_{ns+s}',Y_{ns+s-1}',\ldots,Y_{ns+1}'
	\bigr)'
	\in \mathbb{R}^{qs},
	\qquad
	\varepsilon_n^*
	=
	\bigl(
	\varepsilon_{ns+s}',\varepsilon_{ns+s-1}',\ldots,\varepsilon_{ns+1}'
	\bigr)'
	\in \mathbb{R}^{qs}.
\end{displaymath}
Then the ScBM-PVAR model can be written in the stacked form
\begin{equation}\label{e:PVAR_period_cycle_simple}
	Y_n^*
	=
	\sum_{h=1}^{p^*}\Psi_h^* Y_{n-h}^* + \varepsilon_n^*,
\end{equation}
where $p^*=\lceil (\max_{m=1,\ldots,s} p_m)/s \rceil$ and $\Psi_h^* \in \mathbb{R}^{qs\times qs}$ is obtained from the original seasonal coefficient matrices $\{\Phi_{h,m}\}_{m=1}^s$; see, for example, Section~2 of \citet{ursu2009modelling}. Thus, the ScBM-PVAR model is represented as a VAR$(p^*)$ model on the stacked process $\{Y_n^*\}$. When $s=1$, this reduces to the ScBM-VAR model in \eqref{e:VAR}--\eqref{e:trans_VAR}.

\subsection{Generalized ScBM-VHAR model}
\label{sse:model-ScBM-VHAR}

Let $1<b_M<b_L$. The generalized VHAR model is
\begin{equation}\label{e:VHAR}
	Y_t=\Phi_{(S)}Y_{t-1}^{(S)}+\Phi_{(M)}Y_{t-1}^{(M)}+\Phi_{(L)}Y_{t-1}^{(L)}+\varepsilon_t,
	\quad
	\varepsilon_t\sim \mathrm{WN}(0,\Sigma_\varepsilon),
	\quad t=1,\ldots,T,
\end{equation}
where
\begin{equation*}\label{e:connection_Y_dates}
	Y_t^{(S)}=Y_t,
	\qquad
	Y_t^{(M)}=\frac{1}{b_M}\sum_{h=0}^{b_M-1}Y_{t-h},
	\qquad
	Y_t^{(L)}=\frac{1}{b_L}\sum_{h=0}^{b_L-1}Y_{t-h}.
\end{equation*}
The effective sample size is $N=T-b_L$. When $(b_M,b_L)=(5,22)$, the generalized VHAR model reduces to the classical daily-weekly-monthly specification \citep{corsi2009simple}. Equation \eqref{e:VHAR} is equivalent to a constrained VAR($b_L$). If $\Phi_1,\ldots,\Phi_{b_L}$ denote the implied VAR($b_L$) transition matrices, then
\begin{equation}\label{e:VHAR_coefficients}
	\Phi_1=\Phi_{(S)}+\frac{\Phi_{(M)}}{b_M}+\frac{\Phi_{(L)}}{b_L},
	~
	\Phi_2=\cdots=\Phi_{b_M}=\frac{\Phi_{(M)}}{b_M}+\frac{\Phi_{(L)}}{b_L},
	~
	\Phi_{b_M+1}=\cdots=\Phi_{b_L}=\frac{\Phi_{(L)}}{b_L}.
\end{equation}
Hence, the number of free coefficients decreases from $b_Lq^2$ in an unrestricted VAR($b_L$) model to $3q^2$. The ScBM restriction is then constructed as, similar to \eqref{e:trans_PVAR},
\begin{equation*}\label{e:trans_VHAR}
	\Phi_{(h)}=\phi_{(h)}\bigl(P_{(h)}^{\tau_{(h)}}\bigr)^{-1/2}A_{(h)}'\bigl(O_{(h)}^{\tau_{(h)}}\bigr)^{-1/2},
	\qquad h\in\{S,M,L\},
\end{equation*}
with a scalar $\phi_{(h)}$ and 
\begin{displaymath}
	[P_{(h)}^{\tau_{(h)}}]_{jj}:=\sum_k [A_{(h)}']_{kj}+\tau_{(h)},
	\qquad
	[O_{(h)}^{\tau_{(h)}}]_{ii}:=\sum_k [A_{(h)}']_{ik}+\tau_{(h)},
	\qquad
	\tau_{(h)}:=q^{-1}\sum_{i,j}[A_{(h)}']_{ij}.
\end{displaymath}

For each $h\in\{S,M,L\}$, the population adjacency matrix $A_{(h)}$ is specified by an ScBM with horizon-specific parameters $(\mu_{(h)},\Theta^y_{(h)},\Theta^z_{(h)},Y_{(h)},Z_{(h)},B_{(h)})$ associated with the underlying graph $\mathcal{G}_{(h)}$, similar to \eqref{e:adjacency_population}. We also assume a bounded-support condition: there exists a finite constant $w_{\max,(h)}>0$ such that $|[w_{(h)}]_{ij}|\le w_{\max,(h)}$ almost surely for all $(i,j)\in\mathcal{E}_{(h)}$. Unlike the cyclic PVAR case, these horizons are transient and are interpreted in the natural long-to-short order. We impose the restrictions $Y_{(L)}=Z_{(L)}=Y_{(M)}$ and $Z_{(M)}=Y_{(S)}$, so that the long-horizon community persists within the long-horizon block and is then inherited by the medium-horizon sending block, while the medium-horizon receiving block is inherited by the short-horizon sending block. This yields a hierarchical flow from broader horizons to shorter ones, where long-horizon structure constrains medium- and short-horizon interactions \citep{muller1997volatilities,corsi2009simple}.

\section{Estimation, spectral co-clustering and PisCES}
\label{se:algorithm}

This section presents the estimation-clustering pipeline for the ScBM-VAR, ScBM-PVAR, and generalized ScBM-VHAR models, separating transition-matrix estimation from community recovery. 

\subsection{Estimation of transition matrices}
\label{sse:algorithm_estimation}

As the first step, we estimate the transition matrices via ordinary least squares (OLS) as the baseline, particularly in low-dimensional settings. In higher dimensions, we use lasso estimation.

\medskip\noindent\textbf{ScBM-VAR and ScBM-PVAR.}
For the ScBM-VAR model, either OLS or lasso estimation \citep{basu2015regularized} is applied directly to the regression in \eqref{e:VAR}. For the ScBM-PVAR model, we use the stacked representation in \eqref{e:PVAR_period_cycle_simple}. Define
\begin{displaymath}
	W_n^*
	=
	(Y_{n-1}^{*'},\ldots,Y_{n-p^*}^{*'})'
	\in \mathbb{R}^{qsp^*},
	\qquad
	\alpha_P^*
	=
	\mathrm{vec}(\Psi_1^*,\ldots,\Psi_{p^*}^*)
	\in \mathbb{R}^{q^2s^2p^*},
\end{displaymath}
and set
\begin{displaymath}
	\mathbb{Y}_P
	=
	(Y_{p^*+1}^{*'},\ldots,Y_N^{*'})',
	\qquad
	\mathbb{X}_P
	=
	(W_{p^*+1}^{*'},\ldots,W_N^{*'})'.
\end{displaymath}
The lasso estimator for the ScBM-PVAR model is
\begin{equation}\label{e:pvar_lasso}
	\hat{\alpha}_P
	=
	\argmin_{\alpha\in\mathbb{R}^{q^2s^2p^*}}
	\left\{
	\frac{1}{N}
	\left\|
	\mathrm{vec}(\mathbb{Y}_P)
	-
	(\mathbb{X}_P\otimes I_{qs})\alpha
	\right\|_2^2
	+
	\lambda_{N,P}\|\alpha\|_1
	\right\},
\end{equation}
where $\lambda_{N,P}>0$ is a regularization parameter. In the low-dimensional case, the OLS estimator is obtained by removing the $\ell_1$ penalty and solving the corresponding least-squares problem.

\medskip\noindent\textbf{Generalized ScBM-VHAR.}
For the generalized ScBM--VHAR model, define
$$
Z = (Y_{b_L+1}', \ldots, Y_T')' \in \mathbb{R}^{N\times q},
\qquad
B =
\begin{pmatrix}
\Phi_{(S)}'\\
\Phi_{(M)}'\\
\Phi_{(L)}'
\end{pmatrix}
\in \mathbb{R}^{3q\times q},
\qquad
\beta_V^* = \operatorname{vec}(B),
$$
with $N=T-b_L$. Let $X \in \mathbb{R}^{N\times qb_L}$ be the unrestricted VAR$(b_L)$ lag-design matrix, and set
$$
X_e = X R_{b_M,b_L} \in \mathbb{R}^{N\times 3q},
$$
where $R_{b_M,b_L} \in \mathbb{R}^{qb_L\times 3q}$ is the deterministic restriction matrix implied by \eqref{e:VHAR_coefficients}. Then
$$
Z = X_e B + E,
\qquad
\operatorname{vec}(Z) = (I_q \otimes X_e)\beta_V^* + \operatorname{vec}(E).
$$
The restricted lasso estimator is
\begin{equation} \label{e:vhar_lasso}
\hat\beta_V
=
\argmin_{\beta\in\mathbb{R}^{3q^2}}
\left\{
\frac{1}{N}
\left\|
\operatorname{vec}(Z) - (I_q\otimes X_e)\beta
\right\|_2^2
+
\lambda_{N,V}\|\beta\|_1
\right\},
\end{equation}
where $\lambda_{N,V}>0$ is a regularization parameter. The restricted OLS estimator is obtained by dropping the $\ell_1$ penalty.

\subsection{Spectral co-clustering for ScBM-VAR-type models}
\label{sse:algorithm_co}

The co-clustering step has the same backbone in all models. Given estimated autoregressive matrices $\hat\Phi_m$, compute their leading left and right singular vectors, row-normalize them, optionally smooth the associated projection matrices across ordered seasons or horizons, and apply $k$-means to the matched left-right pairs.

\medskip\noindent\textbf{ScBM-VAR.} 
We begin with the ScBM-VAR model, combining the directed co-clustering idea of \cite{rohe2016co} with the VAR Blockbuster approach of \cite{gudhmundsson2021detecting}. Let
\begin{displaymath}
	\hat\Phi=\sum_{h=1}^p \hat\Phi_h',
	\qquad
	K=\min(K_y,K_z),
\end{displaymath}
and compute the leading $K$ left and right singular vectors $\hat X_L,\hat X_R\in\mathbb{R}^{q\times K}$. We then row-normalize them as
\begin{equation}\label{e:normalization}
	[\hat X_L^*]_{i\cdot}=\frac{[\hat X_L]_{i\cdot}}{\|[\hat X_L]_{i\cdot}\|},
	\qquad
	[\hat X_R^*]_{i\cdot}=\frac{[\hat X_R]_{i\cdot}}{\|[\hat X_R]_{i\cdot}\|},
	\qquad i=1,\ldots,q.
\end{equation}
If $K_y=K_z=K$, we run $k$-means on $(\hat X_L^*\ \hat X_R^*)\in\mathbb{R}^{q\times 2K}$ and use the resulting labels for both sending and receiving communities. Otherwise, we cluster $\hat X_L^*$ and $\hat X_R^*$ separately using $K_y$ and $K_z$ clusters, respectively.

\begin{remark}
	Due to identifiability constraints in degree-corrected co-block models, only $K=\min(K_y,K_z)$ is typically identifiable in a stable manner. In practice, $K$ can be selected using scree plots or information criteria for low-rank modeling \cite[e.g.,][]{bai2007determining,baek2018periodic}.
\end{remark}


\medskip\noindent\textbf{ScBM-PVAR.} We now extend the procedure to the ScBM-PVAR model and incorporate smoothing across seasons via PisCES \citep{liu2018global}. For simplicity, we take $p_m = p$ for all $m=1,\ldots,s$. For each season $m$, form
\begin{displaymath}
	\hat\Phi_m=\sum_{h=1}^p \hat\Phi_{m,h}',
\end{displaymath}
and compute its top $K_{y_m}$ left and $K_{z_m}$ right singular vectors $\hat X_{m,L}$ and $\hat X_{m,R}$. We impose the cyclic rank constraints $K_{z_{m-1}}=K_{y_m}$ for $m=2,\ldots,s$ and $K_{z_s}=K_{y_1}$. Define projector matrices
\begin{displaymath}
	\hat U_{m,L}=\hat X_{m,L}\hat X_{m,L}',
	\qquad
	\hat U_{m,R}=\hat X_{m,R}\hat X_{m,R}',
	\qquad m=1,\ldots,s,
\end{displaymath}
and initialize $\bar U_{m,L}^{(0)}=\hat U_{m,L}$ and $\bar U_{m,R}^{(0)}=\hat U_{m,R}$. For a smoothing parameter $\alpha_N>0$, PisCES updates the left sequence by
\begin{align*}
	\bar U_{1,L}^{(l+1)} &= \Pi(\hat U_{1,L}+\alpha_N \bar U_{2,L}^{(l)};K_{y_1}), \\
	\bar U_{m,L}^{(l+1)} &= \Pi(\alpha_N \bar U_{m-1,L}^{(l)}+\hat U_{m,L}+\alpha_N \bar U_{m+1,L}^{(l)};K_{y_m}), \qquad m=2,\ldots,s-1,\\
	\bar U_{s,L}^{(l+1)} &= \Pi(\alpha_N \bar U_{s-1,L}^{(l)}+\hat U_{s,L};K_{y_s}), 
\end{align*}
and the right sequence analogously:
\begin{align*}
	\bar U_{1,R}^{(l+1)} &= \Pi(\hat U_{1,R}+\alpha_N \bar U_{2,R}^{(l)};K_{z_1}), \\
	\bar U_{m,R}^{(l+1)} &= \Pi(\alpha_N \bar U_{m-1,R}^{(l)}+\hat U_{m,R}+\alpha_N \bar U_{m+1,R}^{(l)};K_{z_m}), \qquad m=2,\ldots,s-1,\\
	\bar U_{s,R}^{(l+1)} &= \Pi(\alpha_N \bar U_{s-1,R}^{(l)}+\hat U_{s,R};K_{z_s}), 
\end{align*}
where $\Pi(M;K)=\sum_{k=1}^K \nu_k\nu_k'$ is the projector onto the span of the top-$K$ eigenvectors of $M$. After convergence, write $\bar U_{m,L}=\bar X_{m,L}\bar X_{m,L}'$ and $\bar U_{m,R}=\bar X_{m,R}\bar X_{m,R}'$, row-normalize $\bar X_{m,L}$ and $\bar X_{m,R}$ as in \eqref{e:normalization}, and link adjacent seasons by $k$-means on $(\bar X_{m-1,R}^*\ \bar X_{m,L}^*)$ for $m=2,\ldots,s$ and on $(\bar X_{s,R}^*\ \bar X_{1,L}^*)$ for the cyclic pair. When $\alpha_N=0$, this reduces to season-wise spectral co-clustering.

\begin{remark}\label{rem:num_comm}
	One can estimate $K_m=\min(K_{y_m},K_{z_m})$ using scree plots or information criteria and then impose the required rank-matching constraints. Because community labels need not align across seasons, we post-process them by locally reassigning indices to minimize discrepancies between consecutive clustering results. This step affects only the labeling convention, not the fitted model.
\end{remark}

\medskip\noindent\textbf{ScBM-VHAR.} For the generalized ScBM-VHAR model, we follow the same procedure as for ScBM-PVAR, except that the three ``seasons'' are now replaced by short-, medium-, and long-horizon components with aggregation lengths $1$, $b_M$, and $b_L$. Define
\begin{displaymath}
	(\hat\Phi_1,\hat\Phi_2,\hat\Phi_3)
	=
	(\hat\Phi_{(L)}',\hat\Phi_{(M)}',\hat\Phi_{(S)}').
\end{displaymath}
We then apply the same singular-vector extraction and PisCES smoothing routine with $s=3$, imposing the rank constraints $K_{y_{(L)}}=K_{z_{(L)}}=K_{y_{(M)}}$ and $K_{z_{(M)}}=K_{y_{(S)}}$. Under the long-to-short convention, the linked stages are 
\begin{equation} \label{eq:vhar-path-constraints}
Y_{(L)}=Z_{(L)}=Y_{(M)}, \quad Z_{(M)}=Y_{(S)},
\end{equation} 
and terminal $Z_{(S)}$. After smoothing, we jointly cluster $(\bar X_{(L),R}^*\ \bar X_{(L),L}^*\ \bar X_{(M),R}^*)$ to recover the first stage, jointly cluster $(\bar X_{(M),L}^*\ \bar X_{(S),R}^*)$ to recover the second stage, and cluster $\bar X_{(S),L}^*$ separately. When $\alpha_N=0$, the procedure reduces to horizon-wise spectral co-clustering.


\subsection{Cross-validation for PisCES smoothing parameter}
\label{sse:algorithm_cv}

We select the PisCES smoothing parameter $\alpha_N$ by cross-validation, adapting the procedure of \cite{liu2018global}. Specifically, we apply the method to either seasonal or horizon-specific autoregressive matrices $\{\hat\Phi_m\}_{m=1}^s$ as follows:
\begin{itemize}
	\item[(i)] Split the off-diagonal entries into $M$ folds and define the masked matrices, for $\ell=1,\ldots,M$,
	\begin{equation*}\label{e:Phi_cv}
		[\hat\Phi_m^{(\ell)}]_{ij}=
		\begin{cases}
			[\hat\Phi_m]_{ij}, & (i,j,m)\notin \ell,\\
			0, & (i,j,m)\in \ell,
		\end{cases}
	\end{equation*}
	then compute the rank-$K_m$ completion
	\begin{equation*}\label{e:completed_laplacian}
		\tilde\Phi_m^{(\ell)}:=\hat U_{m,L}^{(\ell)}\hat D_m^{(\ell)}\hat U_{m,R}^{(\ell)'},
		\qquad K_m=\min(K_{y_m},K_{z_m}).
	\end{equation*}
	\item[(ii)] For each candidate $\alpha_N$, apply PisCES algorithm in Section \ref{sse:algorithm_co} with $\alpha_N$ to $\{\tilde\Phi_m^{(\ell)}\}_{m=1}^s$, obtain the smoothed assignments $y_m^{(\alpha_N,\ell)}$ and $z_m^{(\alpha_N,\ell)}$, and estimate
	\begin{displaymath}
		\hat\Theta_{m,i}^{(y,\alpha_N,\ell)}:=\sum_k [\tilde\Phi_m^{(\ell)}]_{ik},
		\qquad
		\hat\Theta_{m,j}^{(z,\alpha_N,\ell)}:=\sum_k [\tilde\Phi_m^{(\ell)}]_{kj},
	\end{displaymath}
	with block matrix
	\begin{displaymath}
		[\hat B_m^{(\alpha_N,\ell)}]_{kr}=
		\frac{\sum_{(i,j):y_{m,i}^{(\alpha_N,\ell)}=k,\ z_{m,j}^{(\alpha_N,\ell)}=r}[\tilde\Phi_m^{(\ell)}]_{ij}}
		{\sum_{(i,j):y_{m,i}^{(\alpha_N,\ell)}=k,\ z_{m,j}^{(\alpha_N,\ell)}=r}\hat\Theta_{m,i}^{(y,\alpha_N,\ell)}\hat\Theta_{m,j}^{(z,\alpha_N,\ell)}}.
	\end{displaymath}
	This yields
	\begin{displaymath}
		[\hat P_m^{(\alpha_N,\ell)}]_{ij}=\hat\Theta_{m,i}^{(y,\alpha_N,\ell)}\hat\Theta_{m,j}^{(z,\alpha_N,\ell)}[\hat B_m^{(\alpha_N,\ell)}]_{y_{m,i}^{(\alpha_N,\ell)}z_{m,j}^{(\alpha_N,\ell)}}.
	\end{displaymath}
	\item[(iii)] Select $\alpha_N$ by minimizing
	\begin{equation*}\label{e:selection_criterion}
		H(\ell,\alpha_N)=\sum_{m=1}^s \frac{\mathrm{Tr}(\hat\Phi_m)}{q}\left(1-\frac{\mathrm{Tr}(\hat P_m^{(\alpha_N,\ell)})}{q}\right)
	\end{equation*}
	over a grid on $[0,\alpha_{\max}]$, where $\alpha_{\max}=1/(4\sqrt{2}+2)$ \cite[Theorem~1]{liu2018global}.
\end{itemize}
This criterion may be interpreted as a quadratic approximation to the von Neumann entropy; see, for example, equations (10)–(11) in \citet{ye2014approximate}.

\section{Theoretical properties}
\label{se:property}

This section develops the theory for the ScBM-PVAR and ScBM-VHAR procedures. The key idea is modularity: once an operator-norm bound for $\hat{\Phi}-\Phi$ is available, the downstream co-clustering theory follows by substitution. All proofs are deferred to Appendix \ref{ap:appendix}.

\subsection{Stability}
\label{sse:stability}

To study the stability of the ScBM-PVAR model, we work with the stacked representation in \eqref{e:PVAR_period_cycle_simple}. Under this representation, the ScBM-PVAR model is a VAR$(p^*)$ model on the stacked process $\{Y_n^*\}$, where the coefficient matrices $\Psi_h^*$ are inherited from the original seasonal system. This allows us to analyze stability through the corresponding stacked companion form.
\begin{assumption}\label{assum:PVAR_stability_simple}
	For the stacked representation \eqref{e:PVAR_period_cycle_simple} we assume:
	\begin{itemize}
		\item[(i)] For all complex $z$ with $|z|\le 1$,
		\begin{displaymath}
			\det\!\Bigl( I_{qs} - \Psi_1^* z - \cdots - \Psi_{p^*}^* z^{p^*} \Bigr) \neq 0.
		\end{displaymath}
		\item[(ii)] There exist nonnegative numbers $\phi_h$ such that
		\begin{displaymath}
			\|\Psi_h^*\| \le \phi_h, \quad h=1,\ldots,p^*,
			\qquad
			\sum_{h=1}^{p^*} \phi_h < 1.
		\end{displaymath}
	\end{itemize}
\end{assumption}

Condition (i) is the standard VAR stability condition, and condition (ii) is a norm bound that controls the overall scale of the stacked transition matrices, in line with assumptions used in the related models \cite[e.g.,][]{gudhmundsson2021detecting,yin2023general}.

\begin{lemma}\label{lem:stability}
	Under Assumption~\ref{assum:PVAR_stability_simple}, the ScBM-PVAR model is stable, i.e., the stacked process $\{Y_n^*\}$ admits a unique and causal stationary solution.
\end{lemma}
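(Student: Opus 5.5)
We pass to the companion (VAR(1)) form of the stacked process \eqref{e:PVAR_period_cycle_simple} and use the standard stability theory for VAR models. Let $d=qs$. Define the state vector $\mathbf{Y}_n=(Y_n^{*\prime},\ldots,Y_{n-p^*+1}^{*\prime})'\in\mathbb{R}^{dp^*}$, the companion matrix $\mathbf{F}$ with first block row $(\Psi_1^*,\ldots,\Psi_{p^*}^*)$ and identity blocks on the subdiagonal, and $\mathbf{E}_n=(\varepsilon_n^{*\prime},0,\ldots,0)'$. Then $\mathbf{Y}_n=\mathbf{F}\mathbf{Y}_{n-1}+\mathbf{E}_n$. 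The noise $\{\varepsilon_n^*\}$ is white noise in $n$, because its components $\varepsilon_{ns+m,m}$ are uncorrelated across time with season-dependent covariances $\Sigma_{\varepsilon,m}$. Hence $\{\mathbf{E}_n\}$ is white noise with covariance $\mathrm{diag}(\Sigma_\varepsilon^*,0,\ldots,0)$, where $\Sigma_\varepsilon^*$ is block diagonal in the $\Sigma_{\varepsilon,m}$.

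The first step is to relate the eigenvalues of $\mathbf{F}$ to the characteristic polynomial. We use the identity
\begin{displaymath}
\det(I_{dp^*}-\mathbf{F}z)=\det\bigl(I_d-\Psi_1^*z-\cdots-\Psi_{p^*}^*z^{p^*}\bigr).
\end{displaymath}
It follows from Schur-complement block elimination of the companion structure. By Assumption~\ref{assum:PVAR_stability_simple}(i), the right-hand side has no zeros in $|z|\le1$. If $\lambda\neq 0$ were an eigenvalue of $\mathbf{F}$ with $|\lambda|\ge1$, then $z=1/\lambda$ would be a root with $|z|\le1$, which is impossible. Hence $\rho(\mathbf{F})<1$.

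The second step is to construct the solution. Since $\rho(\mathbf{F})<1$, Gelfand's formula gives constants $C<\infty$ and $r\in(\rho(\mathbf{F}),1)$ with $\|\mathbf{F}^j\|\le Cr^j$. The series
\begin{displaymath}
\mathbf{Y}_n=\sum_{j\ge0}\mathbf{F}^j\mathbf{E}_{n-j}
\end{displaymath}
therefore converges in $L^2$ and almost surely, since $\sum_j\|\mathbf{F}^j\|<\infty$ and the second moments of the noise are bounded. This series is strictly stationary when the noise is i.i.d., and weakly stationary in general. It is causal by construction and solves the recursion. Taking the first block gives $Y_n^*=\sum_{j\ge0}\Psi^{(j)}\varepsilon_{n-j}^*$, where $\Psi^{(j)}=J\mathbf{F}^jJ'$ and $J=(I_d,0,\ldots,0)$.

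The third step is uniqueness. Let $\tilde{\mathbf{Y}}_n$ be any stationary solution with finite second moments. Iterating the recursion $k$ times gives
\begin{displaymath}
\tilde{\mathbf{Y}}_n-\sum_{j<k}\mathbf{F}^j\mathbf{E}_{n-j}=\mathbf{F}^k\tilde{\mathbf{Y}}_{n-k}.
\end{displaymath}
Its $L^2$ norm is at most $Cr^k\sup_n\|\tilde{\mathbf{Y}}_n\|_{L^2}\to0$, so $\tilde{\mathbf{Y}}_n$ equals the series above.

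Assumption~\ref{assum:PVAR_stability_simple}(ii) is not needed for existence. It does, however, supply an explicit and uniform handle that the later sections will use. For $|z|\le1$ and any unit vector $v$,
\begin{displaymath}
\Bigl\|\bigl(I-\textstyle\sum_h\Psi_h^*z^h\bigr)v\Bigr\|\ge 1-\textstyle\sum_h\phi_h>0.
\end{displaymath}
This shows that (ii) alone already implies (i). It also bounds the inverse characteristic polynomial uniformly on the unit circle by $(1-\sum_h\phi_h)^{-1}$, which in turn bounds the spectral density of $\{Y_n^*\}$. I expect the main point needing care to be the determinant identity for the companion matrix. The alternative is to argue directly that any eigenvector of $\mathbf{F}$ has the form $(v',\lambda^{-1}v',\ldots)'$, which gives the root correspondence without computing the determinant. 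I would present that eigenvector argument, since it is short and clean. Everything else is standard VAR(1) theory, and the periodic structure enters only through the white-noise property of $\varepsilon_n^*$.
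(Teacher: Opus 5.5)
Your proof is correct and follows the same route as the paper: you pass to the companion form $S_n=FS_{n-1}+\eta_n$, use the correspondence between eigenvalues $\lambda$ of $F$ and zeros $z=\lambda^{-1}$ of $\det(I_{qs}-\sum_{h}\Psi_h^* z^h)$ to get $\rho(F)<1$ from Assumption~\ref{assum:PVAR_stability_simple}(i), and read off $Y_n^*$ as the first block of the causal series $\sum_{\ell\ge 0}F^\ell\eta_{n-\ell}$. Your explicit $L^2$ uniqueness argument and your observation that condition (ii) alone already implies (i) are correct details that the paper does not spell out.
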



The generalized ScBM-VHAR model in \eqref{e:VHAR} can be written as a VAR($b_L$) with constrained coefficients as in \eqref{e:VHAR_coefficients}. Hence, by treating $\{\Phi_h\}_{h=1}^{b_L}$ as the VAR($b_L$) transition matrices, stability is ensured by the usual characteristic-root condition
\begin{displaymath}
	\det\!\Big(
	I_q - \sum_{h=1}^{b_L}\Phi_h z^h
	\Big)\neq 0,
	\qquad |z|\le 1.
\end{displaymath}
Since $b_L$ is fixed, we additionally impose a deterministic norm envelope on the constrained lag matrices; this will be used later in the estimation bounds.

\begin{assumption} \label{assum:VHAR_stability}
	For the generalized ScBM-VHAR model with fixed integers $1<b_M<b_L$, we assume:
	\begin{itemize}
		\item[(i)] For all complex $z$ with $|z|\le 1$,
		\begin{displaymath}
			\det\!\Big(
			I_q - \sum_{h=1}^{b_L}\Phi_h z^h
			\Big)\neq 0.
		\end{displaymath}
		
		\item[(ii)] There exist nonnegative numbers $\phi_h$ with $\|\Phi_h\|\le \phi_h$ for $h=1,\ldots,b_L$ such that
		\begin{displaymath}
			\sum_{h=1}^{b_L}\phi_h < 1.
		\end{displaymath}
	\end{itemize}
\end{assumption}

Condition (i) is the standard stability condition for a VAR($b_L$) companion system. Condition (ii) is a convenient envelope used below to control the norms of the constrained lag matrices uniformly.

\begin{lemma} \label{cor:stability}
	Under Assumption \ref{assum:VHAR_stability}, the generalized ScBM-VHAR model is stable.
\end{lemma}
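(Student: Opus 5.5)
The plan is to reduce the generalized ScBM-VHAR model to an unrestricted VAR$(b_L)$ and then argue exactly as for Lemma~\ref{lem:stability}. By \eqref{e:VHAR_coefficients}, the recursion \eqref{e:VHAR} is identical to $Y_t=\sum_{h=1}^{b_L}\Phi_h Y_{t-h}+\varepsilon_t$. The restriction matrix only ties some lag matrices together, so it plays no role in stability. I will therefore work with the companion form: stack $\mathbf{Y}_t=(Y_t',\ldots,Y_{t-b_L+1}')'\in\mathbb{R}^{qb_L}$ and write $\mathbf{Y}_t=\mathbf{F}\mathbf{Y}_{t-1}+\mathbf{e}_t$. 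Here $\mathbf{F}$ has first block row $(\Phi_1,\ldots,\Phi_{b_L})$ and identity blocks on its subdiagonal, and $\mathbf{e}_t=(\varepsilon_t',0,\ldots,0)'$.

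Next I will connect the characteristic roots to the eigenvalues of $\mathbf{F}$. The standard identity is $\det(I_{qb_L}-\mathbf{F}z)=\det(I_q-\sum_{h=1}^{b_L}\Phi_h z^h)$, proved by block elimination. Suppose $\lambda$ is an eigenvalue of $\mathbf{F}$ with $|\lambda|\ge 1$. Then $z=1/\lambda$ satisfies $|z|\le 1$ and makes the determinant vanish, which contradicts Assumption~\ref{assum:VHAR_stability}(i). Hence $\rho(\mathbf{F})<1$.

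From $\rho(\mathbf{F})<1$, Gelfand's formula gives constants $C<\infty$ and $r\in(\rho(\mathbf{F}),1)$ with $\|\mathbf{F}^j\|\le Cr^j$. The series $\mathbf{Y}_t=\sum_{j\ge 0}\mathbf{F}^j\mathbf{e}_{t-j}$ therefore converges in mean square, because $\varepsilon_t$ is white noise with finite covariance. This series is a stationary solution and is causal. For uniqueness, take any other stationary solution and iterate the recursion $k$ times. The difference between the two solutions equals $\mathbf{F}^k$ applied to a term whose second moment is bounded, and this tends to $0$ in $L^2$. Taking the first block of $\mathbf{Y}_t$ gives the unique causal stationary solution $\{Y_t\}$.

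Assumption~\ref{assum:VHAR_stability}(ii) is not needed for this argument. It does give an alternative route to root-freeness: for $|z|\le 1$, $\|\sum_h\Phi_h z^h\|\le\sum_h\phi_h<1$, so $I_q-\sum_h\Phi_h z^h$ is invertible by a Neumann series. I will note this, since it also yields a uniform bound on $\|(I_q-\sum_h\Phi_h z^h)^{-1}\|$ that is useful later. There is no real obstacle in this proof. The only step needing care is the determinant identity, and I will cite it as standard (e.g., L\"utkepohl) rather than reprove it.
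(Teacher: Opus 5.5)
Your proposal is correct and matches the paper's proof step for step. The paper uses the same companion matrix $F$, the same determinant identity $\det(I_{qb_L}-Fz)=\det(I_q-\sum_{h=1}^{b_L}\Phi_h z^h)$ cited from L\"utkepohl, the same conclusion $\rho(F)<1$, and the same convergent linear process whose first block gives the unique causal stationary solution. Your extra details are correct and only fill in steps the paper leaves implicit: the geometric bound on $\|F^j\|$, the explicit uniqueness argument, and the observation that Assumption~\ref{assum:VHAR_stability}(ii) by itself already implies (i).
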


\subsection{Consistency of estimators}
\label{sse:consistency_estimator}

In this section, we derive the convergence rates for both models. The ScBM-PVAR and generalized ScBM-VHAR models can be analyzed analogously by stacking their respective seasons or horizons. For the ScBM-PVAR model, we define the stacked autoregressive matrix as
\begin{equation*}\label{e:stacked_trans_pvar_simple}
	\Phi 
	= \begin{pmatrix}
		\Phi_1 \\ \vdots \\ \Phi_s    
	\end{pmatrix}
	\in \mathbb{R}^{qs \times q}.
\end{equation*}
Let $\hat{\Phi}$ and $\tilde{\Phi}$ denote the sample estimator and its population counterpart, respectively, where $\tilde{\Phi}$ is the expectation taken over the network randomness. The total error then decomposes as
\begin{equation}\label{e:estimation_error_simple}
	\|\hat{\Phi} - \tilde{\Phi}\| 
	\leq \underbrace{\|\hat{\Phi} - \Phi\|}_{\text{estimation error}}
	+ \underbrace{\|\tilde{\Phi} - \Phi\|}_{\text{network randomness}}.
\end{equation}
This section focuses on bounding the estimation error, $\|\hat{\Phi} - \Phi\|$, while the network randomness term in \eqref{e:estimation_error_simple} is addressed in the subsequent section.

First, consider when the OLS estimator $\hat{\Phi}^{\mathrm{ols}}$ is used. The idea is to view $\{Y_n^*\}$ as a weakly dependent VAR process. A similar approach is also used in the standard PVAR process \citep{ursu2009modelling,boubacar2023estimating}. Let $\Sigma_Y^* := \Var(Y_n^*)$ and $\Sigma_\varepsilon^* := \Var(\varepsilon_n^*)$. Define the normalized process $(\Sigma_Y^*)^{-1/2} Y_n^*$ and its strong mixing coefficients $\alpha_{Y^*}(\ell)$ in the usual way.

\begin{assumption}\label{assum:pvar_simple}
	For the ScBM-PVAR model we assume:
	\begin{itemize}
		\item[(i)] $\{\varepsilon_n^*\}_{n\in\mathbb{Z}}$ is ergodic with $\mathbb{E}[\varepsilon_n^*]=0$, $\Var(\varepsilon_n^*) = \Sigma_\varepsilon^*$ and $\Cov(\varepsilon_n^*,\varepsilon_{n-\ell}^*)=0$ for all $\ell\neq 0$.
		\item[(ii)] $\|\Sigma_\varepsilon^*\| \le C_1$ and $\|(\Sigma_\varepsilon^*)^{-1}\| \le C_2$ for some constants $C_1,C_2>0$.
		\item[(iii)] The normalized process $\{(\Sigma_Y^*)^{-1/2}Y_n^*\}$ is strongly mixing with mixing coefficients	$ \alpha_{Y^*}(\ell) \le \exp(-c_1 \ell^{\gamma_1})$
		for all $\ell>0$, some $c_1,\gamma_1>0$.
		\item[(iv)] For any unit vector $v$ and any $\delta>0$,
		$ \mathbb{P}\bigl(|v'(\Sigma_Y^*)^{-1/2}Y_n^*| > \delta \bigr)
			\le \exp\!\bigl(1 - (\delta/c_2)^{\gamma_2}\bigr)$ for some $c_2,\gamma_2>0$ (sub-exponential tails).
		\item[(v)] The number of cycles $N = \Omega\bigl((qs)^{2/\gamma - 1}\bigr)$,
		where $1/\gamma = 1/\gamma_1 + 1/\gamma_2$ and $\gamma<1$.
	\end{itemize}
\end{assumption}

\begin{assumption}\label{assum:vhar_simple}
	Assume that Assumption~\ref{assum:pvar_simple} holds for the stable generalized ScBM--VHAR model after replacing the stacked PVAR process by the companion process
	\begin{displaymath}
		V_t := (Y_t', Y_{t-1}', \ldots, Y_{t-b_L+1}')' \in \mathbb{R}^{qb_L},
		\qquad
		\Sigma_V := \operatorname{Var}(V_t),
	\end{displaymath}
	and replacing the sample size by $N=T-b_L$.
\end{assumption}

\begin{lemma}\label{lem:estimation}
	Suppose Assumption \ref{assum:pvar_simple} holds. Then, for the stable ScBM-PVAR model,
	\begin{displaymath}
		\|\hat{\Phi}^{\mathrm{ols}} - \Phi\| 
		= \mathcal{O}_{\mathbb{P}}\!\left(\sqrt{\frac{q}{N}}\right).
	\end{displaymath}
\end{lemma}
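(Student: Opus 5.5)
The plan is to treat the stacked process $\{Y_n^*\}$ from \eqref{e:PVAR_period_cycle_simple} as a stable VAR$(p^*)$ in dimension $qs$, which Lemma~\ref{lem:stability} justifies, and to analyze the OLS estimator in that stacked regression. Writing $\mathbb{Y}_P = \mathbb{X}_P \Psi' + E$ with $\Psi=(\Psi_1^*,\ldots,\Psi_{p^*}^*)$, the OLS error is
\[
\hat\Psi^{\mathrm{ols}}-\Psi=\Bigl(\tfrac1N E'\mathbb{X}_P\Bigr)\Bigl(\tfrac1N\mathbb{X}_P'\mathbb{X}_P\Bigr)^{-1}.
\]
The seasonal blocks $\Phi_m$ are fixed linear functions of $\Psi$: they are the entries of $\Psi^*_h$ and of the contemporaneous structure, as in \citet{ursu2009modelling}, or they are obtained by season-wise OLS, which is equivalent. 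Since $s$ and $p$ are fixed, this gives $\|\hat\Phi^{\mathrm{ols}}-\Phi\|\le C\|\hat\Psi^{\mathrm{ols}}-\Psi\|$, and it suffices to bound the two factors.

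\emph{Step 1 (design matrix).} We show that $\lambda_{\min}(\frac1N\mathbb{X}_P'\mathbb{X}_P)$ is bounded away from zero with high probability. First, $\lambda_{\min}(\Var(W_n^*))\ge c>0$. This follows from Assumption~\ref{assum:pvar_simple}(ii) and the spectral-density lower bound of a stable VAR: under Assumption~\ref{assum:PVAR_stability}(ii) we have $\|\Psi^*(z)\|\le\sum_h\phi_h<1$ on the unit circle, so $\lambda_{\min}(f)\ge \lambda_{\min}(\Sigma_\varepsilon^*)/(2\pi(1+\sum\phi_h)^2)$, and likewise for the maximum. Second, we need concentration of the sample covariance in operator norm, $\|\frac1N\mathbb{X}_P'\mathbb{X}_P-\Var(W_n^*)\|=\mathcal{O}_{\mathbb{P}}(\sqrt{qs/N})$. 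For this we invoke the covariance concentration for $\beta$/$\alpha$-mixing sub-Weibull vectors of \citet{wong2020lasso}. Their argument applies to fixed unit vectors through a Bernstein inequality for mixing sequences with tail exponent $\gamma$; an $\varepsilon$-net over the sphere in $\mathbb{R}^{qsp^*}$ then lifts the bound to the operator norm. Assumption~\ref{assum:pvar_simple}(iii)--(iv) supplies the mixing rate and tails, and (v) ensures that the sub-Weibull tail term $(qs)^{1/\gamma}/N$ in the Bernstein bound is dominated by $\sqrt{qs/N}$. Combining the two gives $\|(\frac1N\mathbb{X}_P'\mathbb{X}_P)^{-1}\|=\mathcal{O}_{\mathbb{P}}(1)$.

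\emph{Step 2 (cross term).} Here we bound $\|\frac1N E'\mathbb{X}_P\|$ by a net argument over pairs of unit vectors $(u,v)$. For each pair, $\frac1N\sum_n (u'\varepsilon_n^*)(v'W_n^*)$ has mean zero, because the innovations are uncorrelated with past values by (i) and causality. The product is sub-Weibull with exponent $\gamma_2/2$ and is mixing, since $\varepsilon_n^*=Y_n^*-\sum\Psi_h^*Y_{n-h}^*$ is a measurable function of finitely many lags. A mixing Bernstein inequality then gives a deviation of order $\sqrt{x/N}+x^{1/\gamma}/N$. We take $x\asymp qs$ to pay for the $e^{C qs}$ net points, and (v) again controls the higher-order term. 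This yields $\mathcal{O}_{\mathbb{P}}(\sqrt{qs/N})=\mathcal{O}_{\mathbb{P}}(\sqrt{q/N})$ because $s$ is fixed. Multiplying the bounds from Steps 1 and 2 proves the claim.

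\emph{Main obstacle.} I expect the delicate part to be Step 2, specifically obtaining the operator-norm rate $\sqrt{q/N}$ rather than a $\log$-inflated or entrywise rate under only mixing and sub-Weibull tails. One has to check that the product of sub-Weibull variables keeps a tail exponent compatible with the $\gamma$ in (v); this is presumably why $1/\gamma=1/\gamma_1+1/\gamma_2$ appears there. One also has to check that the net cardinality $e^{Cqs}$ is absorbed without extra logarithmic factors. If the direct Bernstein route loses a log factor, I would fall back on the martingale-difference structure of $\varepsilon_n^*W_n^{*\prime}$ and apply a matrix Freedman-type inequality, with the conditional variance controlled by Step 1.
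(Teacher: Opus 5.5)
Your argument is sound and reaches the same $\sqrt{qs/N}$ rate, but through a different decomposition from the paper's. The paper writes $\hat B_P=\hat\Sigma_{X,P}^{-1}\hat\Sigma_{YX,P}'$ and $B_P=\Sigma_{X,P}^{-1}\Sigma_{YX,P}'$. It splits the error by the inverse-perturbation identity into $(\hat\Sigma_{X,P}^{-1}-\Sigma_{X,P}^{-1})\hat\Sigma_{YX,P}'+\Sigma_{X,P}^{-1}(\hat\Sigma_{YX,P}-\Sigma_{YX,P})'$, and then imports every remaining ingredient:
\begin{itemize}
\item Lemma~A.2 of \citet{gudhmundsson2021detecting}, applied to $(Y_n^{*\prime},W_n^{*\prime})'$, gives both concentration bounds;
\item Lemma~OA.13 there bounds $\|\Sigma_{X,P}\|$ and $\|\Sigma_{X,P}^{-1}\|$;
\item $\|B_P\|=\mathcal{O}(1)$ and the aggregation map $L_P$ finish the argument.
\end{itemize}
Your reduction to $\Psi$ is the same $L_P$ step. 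You instead use the exact identity $\hat\Psi-\Psi=(N^{-1}E'\mathbb{X}_P)(N^{-1}\mathbb{X}_P'\mathbb{X}_P)^{-1}$. This exploits $\mathbb{E}[\varepsilon_n^*W_n^{*\prime}]=0$ directly, so you need neither the inverse perturbation nor a bound on $\|B_P\|$. You also derive the eigenvalue bounds from the spectral density, and the concentration from a mixing Bernstein inequality plus $\varepsilon$-nets. The paper's route is shorter and modular. Yours is self-contained and shows exactly where Assumption~\ref{assum:pvar_simple}(v) enters: it makes the sub-Weibull term $(qs)^{1/\gamma}/N$ dominated by $\sqrt{qs/N}$.

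A few refinements follow.

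\textbf{Step 2 is not harder than Step 1.} Since $\Sigma_{YX,P}=\Psi\Sigma_{X,P}$, you have $N^{-1}E'\mathbb{X}_P=(\hat\Sigma_{YX,P}-\Sigma_{YX,P})-\Psi(\hat\Sigma_{X,P}-\Sigma_{X,P})$. So Step~2 follows from the same operator-norm covariance concentration for the augmented vector $(Y_n^{*\prime},W_n^{*\prime})'$, together with $\|\Psi\|\le\sum_h\phi_h<1$. No separate net over pairs $(u,v)$ is needed.

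\textbf{The martingale fallback is unavailable.} Assumption~\ref{assum:pvar_simple}(i) makes $\varepsilon_n^*$ only uncorrelated, not a martingale difference sequence, so a matrix Freedman bound does not apply.

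\textbf{The exponent issue is shared with the paper.} Quadratic terms such as $(v'W_n^*)^2$ already appear in Step~1, and they are sub-Weibull with exponent $\gamma_2/2$. A Bernstein inequality for $\alpha$-mixing sub-Weibull sequences therefore naturally calls for (v) with $2/\gamma_2$ in place of $1/\gamma_2$. The paper never has to confront this because it cites Lemma~A.2 wholesale.

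\textbf{Season-wise OLS is not equivalent to stacked OLS.} When same-cycle lags enter, the reduced-form blocks contain products such as $\Phi_{1,2}\Phi_{1,1}$. Season-wise OLS is then a different estimator from the stacked reduced-form OLS, so drop that parenthetical. The linear-reduction step itself is the one the paper makes.
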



For the generalized ScBM-VHAR model, we use the restricted VAR($b_L$) representation induced by the horizon lengths $1<b_M<b_L$, which yields the following rate.

\begin{lemma} \label{cor:estimation_vhar}
	Suppose Assumptions~\ref{assum:VHAR_stability} and \ref{assum:vhar_simple} hold for the stable generalized ScBM-VHAR model. Then, with $s=3$ and $N=T-b_L$,
	\begin{displaymath}
		\|\hat{\Phi}^{\mathrm{ols}}-\Phi\|
		=
		\mathcal{O}_{\mathbb{P}}\!\left(\sqrt{\frac{sq}{N}}\right).
	\end{displaymath}
\end{lemma}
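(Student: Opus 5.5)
The plan is to reduce the restricted VHAR OLS problem to a standard multivariate least-squares problem with a well-conditioned design, and then control the error through the usual split into a noise cross-product term and the inverse of a Gram matrix. The restricted OLS estimator is
\[
\hat B=(X_e'X_e)^{-1}X_e'Z,
\]
so that
\[
\hat B-B=\Big(\tfrac1N X_e'X_e\Big)^{-1}\tfrac1N X_e'E .
\]
The stacked matrix $\Phi=(\Phi_{(S)}',\Phi_{(M)}',\Phi_{(L)}')'$ (in the ordering used for PisCES) differs from $B$ only by a block transposition. The spectral norm of each block is bounded by that of the full matrix, so up to a factor $\sqrt{s}$ it suffices to bound $\|\hat B-B\|$.

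First, the design. Each row of $X_e$ is $V_{t-1}'R_{b_M,b_L}$, where $V_t$ is the companion vector of Assumption~\ref{assum:vhar_simple}. Hence
\[
\tfrac1N X_e'X_e=R'\Big(\tfrac1N\sum_t V_{t-1}V_{t-1}'\Big)R .
\]
Under Lemma~\ref{cor:stability} and Assumption~\ref{assum:VHAR_stability}(ii), the eigenvalues of $\Sigma_V$ are bounded above and away from zero. The upper bound comes from the spectral density of a stable VAR; the lower bound comes from $\|(\Sigma_\varepsilon)^{-1}\|\le C_2$ together with the bounded lag polynomial. The matrix $R$ has full column rank $3q$. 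Its singular values depend only on the fixed integers $b_M,b_L$, since $R=R_0\otimes I_q$ for a fixed $b_L\times 3$ matrix $R_0$ of rank $3$. Thus $R'\Sigma_V R$ is well conditioned with constants free of $q$.

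Next I would invoke the same concentration inequality used for Lemma~\ref{lem:estimation}. This is the covariance concentration for strongly mixing, sub-Weibull processes, as in \citet{wong2020lasso}, with exponent $\gamma<1$ and dimension $qb_L$. It gives
\[
\Big\|\tfrac1N\sum_t V_{t-1}V_{t-1}'-\Sigma_V\Big\|=\mathcal{O}_{\mathbb{P}}\big(\sqrt{qb_L/N}\big),
\]
which is $o_{\mathbb{P}}(1)$ under the sample-size condition (v). Consequently $\|(N^{-1}X_e'X_e)^{-1}\|=\mathcal{O}_{\mathbb{P}}(1)$.

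Second, the cross term. We have $N^{-1}X_e'E=R'\,N^{-1}\sum_t V_{t-1}\varepsilon_t'$, and $\|R\|=\mathcal{O}(1)$. The process $(V_{t-1},\varepsilon_t)$ is a subvector of the companion process at times $t$ and $t-1$, so it inherits the mixing and tail conditions, and $\mathbb{E}[V_{t-1}\varepsilon_t']=0$ by the white-noise assumption. Applying the same bound to the joint vector $(V_t',\varepsilon_t')'$, whose dimension is of order $q$, gives $\|N^{-1}\sum_t V_{t-1}\varepsilon_t'\|=\mathcal{O}_{\mathbb{P}}(\sqrt{q/N})$. This works because the off-diagonal block of its centered sample covariance is controlled by the full matrix. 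Combining the two steps, and absorbing the fixed factors $b_L$ and $s=3$, yields $\|\hat\Phi^{\mathrm{ols}}-\Phi\|=\mathcal{O}_{\mathbb{P}}(\sqrt{sq/N})$.

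The main obstacle is the transfer of assumptions. Assumption~\ref{assum:vhar_simple} imposes mixing and tails on the normalized companion process. The cross term, however, needs the same control for the joint vector of regressors and innovations, and it needs this with tails that are uniform over directions after normalization. I would first handle this by noting that $\varepsilon_t=Y_t-\sum_h\Phi_hY_{t-h}$ is a bounded linear map of $V_t$ together with $V_{t-1}$. Tail bounds for unit directions then pass through with constants depending on $\|\Sigma_V\|$ and $\|\Sigma_V^{-1}\|$, and mixing is preserved because the new vector is a finite-window function of the original process.
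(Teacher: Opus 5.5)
Your proof is correct and reaches the same rate. It organizes the error differently from the paper.

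\textbf{Different decomposition.} The paper writes $\hat B=\widehat\Sigma_{X_e}^{-1}\widehat\Sigma_{X_eZ}$ and $B=\Sigma_{X_e}^{-1}\Sigma_{X_eZ}$. It applies Lemma~A.2 of \citet{gudhmundsson2021detecting} to both the Gram matrix and the regressor--response cross-covariance $\widehat\Sigma_{X_eZ}$. It then combines the inverse-perturbation identity with $\|B\|=\mathcal{O}(1)$ and bounded $\|\Sigma_{X_e}\|$, $\|\Sigma_{X_e}^{-1}\|$.

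You instead use the exact identity $\hat B-B=(N^{-1}X_e'X_e)^{-1}N^{-1}X_e'E$. This needs only two things: an $\mathcal{O}_{\mathbb{P}}(1)$ bound on the inverse sample Gram matrix, and a bound on the mean-zero regressor--innovation cross-product. It avoids the inverse-perturbation step and the $\|B\|$ factor.

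\textbf{Transfer of assumptions.} The price is transferring the mixing and tail conditions to $(V_{t-1}',\varepsilon_t')'$. The paper faces the same kind of transfer but only asserts it: it applies the concentration lemma to the $q(b_L+1)$-dimensional window $(Y_t',x_t')'$, while Assumption~\ref{assum:vhar_simple} is stated for the $qb_L$-dimensional $V_t$. You make the step explicit by noting two facts:
\begin{itemize}
\item $\varepsilon_t$ is a bounded linear map of $(V_t,V_{t-1})$, using $\sum_h\|\Phi_h\|<1$;
\item strong mixing survives finite-window maps.
\end{itemize}
Your spectral-density argument for the two-sided conditioning of $\Sigma_V$ likewise spells out what the paper attributes to Assumption~\ref{assum:VHAR_stability}(i)--(ii) in one line.

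\textbf{Citation.} The concentration result you need is a spectral-norm bound of order $\sqrt{d/N}$ for the sample covariance under Assumption~\ref{assum:pvar_simple}(iii)--(v). The paper takes this from Lemma~A.2 of \citet{gudhmundsson2021detecting}. The results in \citet{wong2020lasso} are deviation bounds of entrywise (max-norm) type, tailored to lasso. To cite them here you would need an additional $\varepsilon$-net argument, and condition (v) $N=\Omega(d^{2/\gamma-1})$ is exactly what makes that union bound work.
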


Next we replace OLS by lasso estimation $\hat{\Phi}^{\mathrm{lasso}}$. For the ScBM-PVAR model, this is the stacked VAR($p^*$) regression, whereas for the generalized ScBM-VHAR model, it is the restricted VAR($b_L$) regression induced by $R_{b_M,b_L}$. In both cases, the downstream co-clustering theory depends on the first stage only through an operator-norm bound for $\hat{\Phi}-\Phi$. For the ScBM-PVAR model, let $\alpha_P^*$ denote the stacked coefficient vector; since $p^*$ is fixed, we suppress the distinction between $N$ and $N-p^*$ in the rates.
\begin{assumption}\label{assum:pvar_lasso}
	For the stacked ScBM-PVAR regression, we assume:
	\begin{itemize}
		\item[(i)] The vector $\alpha_P^*$ is $\ell_P$-sparse, that is,
		\begin{displaymath}
			|\mathrm{supp}(\alpha_P^*)| = \ell_P.
		\end{displaymath}
		\item[(ii)] The sample Gram matrix
		\begin{displaymath}
			\hat\Gamma_P := N^{-1}(I_{qs}\otimes (\mathbb{X}_P'\mathbb{X}_P))
		\end{displaymath}
		satisfies the restricted eigenvalue (RE) condition in equation~(4.7) of \citet{basu2015regularized}, with curvature $\alpha_{R,P}>0$ and tolerance $\tau_{R,P}\ge 0$, and $\ell_P \le \alpha_{R,P}/(32\tau_{R,P})$.
		\item[(iii)] The score vector
		\begin{displaymath}
			\hat\gamma_P :=  N^{-1}(I_{qs}\otimes \mathbb{X}_P')\,\mathrm{vec}(\mathbb{Y}_P)
		\end{displaymath}
		obeys the deviation bound (DB) in equation~(4.8) of \citet{basu2015regularized}:
		\begin{displaymath}
			\|\hat\gamma_P-\hat\Gamma_P\alpha_P^*\|_{\infty}
			\le \frac{\lambda_{N,P}}{4},
		\end{displaymath}
		with probability tending to one for
		\begin{displaymath}
			\lambda_{N,P} \asymp \sqrt{\frac{2\log(qs)+\log p^*}{N}}.
		\end{displaymath}
	\end{itemize}
\end{assumption}

These are precisely the deterministic inputs required by Proposition~4.1 of \citet{basu2015regularized}. Note that the analysis in \citet{basu2015regularized} is conducted under Gaussian assumptions, whereas Assumption \ref{assum:pvar_simple} can be interpreted as imposing conditions on sub-Gaussian random vectors. However, subsequent studies (e.g., \cite{wong2020lasso,basu2024high}) show that the suggested convergence rates remain the same without additional conditions. Hence, the theoretical guarantees derived under Gaussian assumptions extend. When the stacked ScBM-PVAR is Gaussian and stable, Proposition~4.2 and Proposition~4.3 of \citet{basu2015regularized} verify Assumption~\ref{assum:pvar_lasso}(ii)--(iii) with process dimension $p=qs$ and lag order $d=p^*$. Since $p^*$ is fixed, the rate simplifies to $\lambda_{N,P}\asymp\sqrt{\log(qs)/N}$.

\begin{lemma}\label{lem:estimation_lasso}
	Let $\hat{\alpha}_P$ be the lasso estimator in \eqref{e:pvar_lasso}
	and let $\hat{\Phi}^{\mathrm{lasso}}$ denote the season-stacked autoregressive matrix obtained from $\hat{\alpha}_P$ by the same deterministic aggregation as in the
	co-clustering algorithm. Under Assumptions~\ref{assum:PVAR_stability_simple} and \ref{assum:pvar_lasso},
	\begin{displaymath}
		\|\hat{\Phi}^{\mathrm{lasso}} - \Phi\|		= \mathcal{O}_{\mathbb{P}}\!\left(\sqrt{\ell_P}\lambda_{N,P}\right)
		= \mathcal{O}_{\mathbb{P}}\!\left(\sqrt{\frac{\ell_P\log(qs)}{N}}\right).
	\end{displaymath}
\end{lemma}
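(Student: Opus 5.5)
The proof uses the deterministic lasso consistency result of \citet{basu2015regularized} (their Proposition~4.1). Its inputs are exactly Assumption~\ref{assum:pvar_lasso}(i)--(iii). On the event where the deviation bound holds, which has probability tending to one, that result gives
\begin{displaymath}
\|\hat\alpha_P-\alpha_P^*\|_2 \le C\,\frac{\sqrt{\ell_P}\,\lambda_{N,P}}{\alpha_{R,P}}
\end{displaymath}
for a universal constant $C$ (their constant is $16$). The curvature $\alpha_{R,P}$ is treated as bounded away from zero. In the Gaussian stable case this follows from Assumption~\ref{assum:PVAR_stability_simple}, since the spectral density of the stacked process is then bounded above and below, as in Propositions~4.2--4.3 of \citet{basu2015regularized}. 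Stability of the stacked process itself is given by Lemma~\ref{lem:stability}.

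The next step transfers this vector bound to the operator norm of the season-stacked matrix. The map $\alpha\mapsto\Phi(\alpha)$ from the stacked coefficients $(\Psi_1^*,\ldots,\Psi_{p^*}^*)$ to the seasonal matrices $\Phi_m=\sum_h\Phi_{m,h}'$ is linear and deterministic. It picks out the relevant $q\times q$ blocks of the $\Psi_h^*$, transposes them, and sums over the fixed number $p$ of lags. Because the map is linear,
\begin{displaymath}
\hat\Phi^{\mathrm{lasso}}-\Phi=\Phi(\hat\alpha_P-\alpha_P^*).
\end{displaymath}
The bound then follows from $\|M\|\le\|M\|_F$. For each season, the Frobenius norm of a sum of $p$ blocks is at most $\sqrt{p}$ times the $\ell_2$ norm of the corresponding block entries, by Cauchy--Schwarz. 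Stacking over seasons, each coefficient of $\alpha$ appears in at most $p$ terms, so
\begin{displaymath}
\|\hat\Phi^{\mathrm{lasso}}-\Phi\|\le\sqrt{p}\,\|\hat\alpha_P-\alpha_P^*\|_2 .
\end{displaymath}
Since $p$ and $p^*$ are fixed, this yields the rate $\mathcal{O}_{\mathbb{P}}(\sqrt{\ell_P}\lambda_{N,P})$.

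For the second equality, substitute $\lambda_{N,P}\asymp\sqrt{(2\log(qs)+\log p^*)/N}\asymp\sqrt{\log(qs)/N}$, which holds because $p^*$ is fixed. The probability statement follows because the bound holds deterministically on an event whose probability tends to one.

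The main obstacle is bookkeeping rather than analysis. One must confirm that the stacking map $\alpha\mapsto\Phi$ is well defined when $\max_m p_m$ spans several cycles, so that a seasonal lag can sit in different $\Psi_h^*$ blocks. One must also confirm that the Lipschitz constant of this map in Frobenius norm is bounded by a constant depending only on $p$ and $s$, and not on $q$. I would check this by writing out the block-index correspondence of \citet{ursu2009modelling}, for which each seasonal coefficient occupies exactly one block position.
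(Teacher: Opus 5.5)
Your proposal is correct and follows the paper's proof essentially step for step. Proposition~4.1 of \citet{basu2015regularized}, with inputs from Assumption~\ref{assum:pvar_lasso}, gives $\|\hat\alpha_P-\alpha_P^*\|_2\le C\sqrt{\ell_P}\,\lambda_{N,P}$ with probability tending to one, and a fixed linear aggregation map $L_P$ with $\|L_P\|=\mathcal{O}(1)$ transfers this to $\|\hat\Phi^{\mathrm{lasso}}-\Phi\|$; your $\sqrt{p}$ bound via $\|\cdot\|\le\|\cdot\|_F$ and Cauchy--Schwarz just makes explicit the constant the paper asserts.

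The paper also takes for granted the linearity you flag as the bookkeeping issue. The one-block-per-coefficient correspondence of \citet{ursu2009modelling} holds for the structural stacked form, but \eqref{e:pvar_lasso} regresses only on lagged $Y_{n-h}^*$. So the within-cycle coefficients $\Phi_{h,m}$ with $h<m$ enter the reduced-form $\Psi_h^*$ through the inverse of the contemporaneous block matrix. They can be read off linearly only if the regression is run season by season.
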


For generalized ScBM-VHAR model, let $\beta_V^*$ denote the restricted coefficient vector in \eqref{e:vhar_lasso}.
\begin{assumption}\label{assum:vhar_lasso}
	For the generalized ScBM--VHAR regression, we assume:
	\begin{itemize}
		\item[(i)] The vector $\beta_V^*$ is $\ell_V$-sparse.
		
		\item[(ii)] The sample Gram matrix
		\begin{displaymath}
			\hat{\Gamma}_V
			:= N^{-1}\bigl(I_q \otimes (X_e'X_e)\bigr)
		\end{displaymath}
		satisfies the RE condition in equation (4.7) of \citet{basu2015regularized}, with curvature $\alpha_{R,V}>0$ and tolerance $\tau_{R,V}\ge 0$, and $\ell_V \le \alpha_{R,V}/(32\tau_{R,V})$.
		
		\item[(iii)] The score vector
		\begin{displaymath}
			\hat{\gamma}_V
			:=
			N^{-1}(I_q \otimes X_e')\operatorname{vec}(Z)
		\end{displaymath}
		obeys the DB
		\begin{displaymath}
			\|\hat{\gamma}_V-\hat{\Gamma}_V\beta_V^*\|_\infty
			\le
			\frac{\lambda_{N,V}}{4}
		\end{displaymath}
		with probability tending to one for
		\begin{displaymath}
			\lambda_{N,V} \asymp \sqrt{\frac{\log(sq)}{N}},
			\qquad
			s=3,
			\qquad
			N=T-b_L.
		\end{displaymath}
	\end{itemize}
\end{assumption}

For the fixed-horizon VHAR design, Proposition~1 of \citet{baek2021sparse} verifies the analogue of the Gram-matrix construction in their equation (2.10) and the RE and DB in their equation (2.11) by embedding VHAR into a stable high-order VAR model. The same argument carries over to arbitrarily fixed $(b_M,b_L)$ after replacing the specific aggregation matrix in their equation (2.12) by $R_{b_M,b_L}$. Because $b_M$ and $b_L$ are fixed, this changes only constants, so Proposition~4.1 of \citet{basu2015regularized} yields the same rate order.

\begin{lemma} \label{cor:estimation_vhar_lasso}
	Let $\hat{\beta}_V$ be the restricted lasso estimator in \eqref{e:vhar_lasso}, and let $\hat{\Phi}^{\,\mathrm{lasso}}$ denote the corresponding stacked horizon-specific coefficient matrix. Under Assumptions~\ref{assum:VHAR_stability} and \ref{assum:vhar_lasso},
	\begin{displaymath}
		\|\hat{\Phi}^{\,\mathrm{lasso}}-\Phi\|
		= \mathcal{O}_{\mathbb{P}}\!\left(\sqrt{\ell_V}\lambda_{N,V}\right)
		= \mathcal{O}_{\mathbb{P}}\!\left(\sqrt{\frac{\ell_V\log(sq)}{N}}\right),
		\qquad s=3.
	\end{displaymath}
\end{lemma}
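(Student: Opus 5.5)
The plan is to reduce the statement to Proposition~4.1 of \citet{basu2015regularized}, applied to the restricted regression \eqref{e:vhar_lasso}. We then pass from the resulting $\ell_2$ vector bound to the operator norm of the stacked horizon matrix. The structure mirrors the argument behind Lemma~\ref{lem:estimation_lasso}. The only change is that the design is $I_q\otimes X_e$, with $X_e = XR_{b_M,b_L}$, instead of the stacked PVAR design.

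\textbf{Step 1.} Assumption~\ref{assum:VHAR_stability} gives, via Lemma~\ref{cor:stability}, a stable and stationary VAR($b_L$) with constrained lags \eqref{e:VHAR_coefficients}. So $X_e$ is a well-defined regressor built from the stationary companion process $V_t$. Because $R_{b_M,b_L}$ is deterministic with fixed $b_M,b_L$, its singular values are bounded above and below by constants depending only on $(b_M,b_L)$. Hence RE and DB for $X_e$ only change constants, and Assumption~\ref{assum:vhar_lasso}(ii)--(iii) is exactly the deterministic input required.

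\textbf{Step 2.} On the event in Assumption~\ref{assum:vhar_lasso}(iii), which has probability tending to one, Proposition~4.1 of \citet{basu2015regularized} applies. It uses RE with $\ell_V\le \alpha_{R,V}/(32\tau_{R,V})$ and $\lambda_{N,V}\ge 4\|\hat\gamma_V-\hat\Gamma_V\beta_V^*\|_\infty$, and gives
\[
\|\hat\beta_V-\beta_V^*\|_2 \le \frac{16\sqrt{\ell_V}\,\lambda_{N,V}}{\alpha_{R,V}}.
\]
Here $\alpha_{R,V}$ is bounded below by a constant, since under stationarity it is governed by the minimum eigenvalue of the spectral density, as in Propositions~4.2--4.3 of \citet{basu2015regularized} and Proposition~1 of \citet{baek2021sparse}. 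Thus $\|\hat\beta_V-\beta_V^*\|_2=\mathcal{O}_{\mathbb{P}}(\sqrt{\ell_V}\lambda_{N,V})$.

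\textbf{Step 3.} The stacked horizon matrix $\Phi$, whose blocks are $\Phi_{(L)}',\Phi_{(M)}',\Phi_{(S)}'$ or the same blocks reordered, is obtained from $\beta=\mathrm{vec}(B)$ by a deterministic rearrangement and transposition of the $3q\times q$ matrix $B$. This rearrangement preserves the Frobenius norm. Therefore
\[
\|\hat\Phi^{\mathrm{lasso}}-\Phi\|\le \|\hat\Phi^{\mathrm{lasso}}-\Phi\|_F=\|\hat\beta_V-\beta_V^*\|_2 .
\]
Substituting $\lambda_{N,V}\asymp\sqrt{\log(sq)/N}$ with $s=3$ gives both displayed rates.

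The main obstacle is Step 1: justifying that the RE curvature $\alpha_{R,V}$ stays bounded away from zero uniformly in $q$. The restriction matrix $R_{b_M,b_L}$ could in principle degrade curvature. My first attempt would be to show that $R_{b_M,b_L}'R_{b_M,b_L}$ is a fixed $3\times3$ Gram matrix tensored with $I_q$ and is positive definite when $1<b_M<b_L$, because its three averaging columns are linearly independent. Then $\lambda_{\min}(\Gamma_{X_e})\ge \lambda_{\min}(R'R)\,\lambda_{\min}(\Gamma_X)$, and this carries the curvature from the unrestricted VAR($b_L$) design to $X_e$. Under the stated assumptions this is already encoded in Assumption~\ref{assum:vhar_lasso}, so the proof itself can take it as given.
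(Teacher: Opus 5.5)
Your proposal is correct and follows essentially the same route as the paper. Both apply Proposition~4.1 of \citet{basu2015regularized} to the restricted design $I_q\otimes X_e$, using the RE and DB inputs of Assumption~\ref{assum:vhar_lasso}, and then bound the operator norm of the reshaped stacked horizon matrix by $\|\hat\beta_V-\beta_V^*\|_2$. Your argument that $R_{b_M,b_L}'R_{b_M,b_L}$ is a fixed positive-definite $3\times 3$ Gram matrix tensored with $I_q$ spells out concretely the paper's remark that replacing the aggregation matrix in Baek--Park's Proposition~1 by $R_{b_M,b_L}$ changes only constants.
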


\subsection{Consistency of random graphs}
\label{sse:consistency_random}

As a continuation of Section \ref{sse:consistency_estimator}, we now bound $\|\tilde{\Phi} - \Phi\|$, i.e., the deviation arising from random graphs under the ScBM assumptions.

\begin{assumption}\label{assum_degree_simple}
	For each normalized weighted-adjacency block that enters the construction of $\Phi$, let
	\begin{displaymath}
		\delta := \min\!\left\{\min_i [O]_{ii},\, \min_j [P]_{jj}\right\},
		\qquad
		\tau := q^{-1}\sum_{i,j}[A']_{ij},
	\end{displaymath}
	and assume
	\begin{displaymath}
		\delta + \tau = \Omega\!\bigl(sqB_{sq}\bigr),
		\qquad
		B_{sq} = \Omega\!\left(\frac{\log(sq)}{sq}\right).
	\end{displaymath}
	Since the numbers of seasons, horizons, and lag terms are fixed, the same lower bound is assumed to hold uniformly over all such blocks.
\end{assumption}
This assumption corresponds to the sparse assumption of networks due to the regularization with expected degrees of order at least $\log(sq)$, which allows exact recovery in ScBM-type models \cite[e.g.,][]{rohe2016co,abbe2018community}.

\begin{theorem}\label{lem:population}
	Under Assumption~\ref{assum_degree_simple} and the stability conditions above,
	\begin{displaymath}
		\|\tilde{\Phi} - \Phi\|
		= \mathcal{O}_{\mathbb{P}}\!\left(\sqrt{\frac{\log(sq)}{sq B_{sq}}}\right)
	\end{displaymath}
	for both ScBM-PVAR and ScBM-VHAR models.
\end{theorem}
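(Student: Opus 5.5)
We reduce the stacked bound to a blockwise one and then invoke the regularized-Laplacian concentration argument of \citet{qin2013regularized}, adapted to directed, weighted, bounded adjacency matrices. Since $\Phi$ stacks a fixed number $s$ of seasonal (or horizon) matrices, each a fixed finite sum over lags of the form $\phi_{h,m}(P^{\tau}_{h,m})^{-1/2}A'_{h,m}(O^{\tau}_{h,m})^{-1/2}$, we have
\begin{displaymath}
\|\tilde\Phi-\Phi\|\le \sum_{m=1}^s\sum_{h}|\phi_{h,m}|\,\bigl\|\tilde L_{h,m}-L_{h,m}\bigr\|,
\end{displaymath}
where $L=(P^\tau)^{-1/2}A'(O^\tau)^{-1/2}$ and $\tilde L$ is its population version built from $\mathcal A=\mathbb E[A]$ and the expected degree matrices $\mathcal P^\tau,\mathcal O^\tau$. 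The coefficients $|\phi_{h,m}|$ are bounded by the stability envelopes in Assumptions~\ref{assum:PVAR_stability_simple} and \ref{assum:VHAR_stability}, and $s$ and the lag orders are fixed. It therefore suffices to bound one block uniformly, with a union bound over the fixed number of blocks; the VHAR case is identical with $s=3$ and $h\in\{S,M,L\}$.

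For a single block, split the error into two parts:
\begin{displaymath}
\|L-\tilde L\|\le \bigl\|(\mathcal P^\tau)^{-1/2}(A-\mathcal A)'(\mathcal O^\tau)^{-1/2}\bigr\| + \bigl\|L-(\mathcal P^\tau)^{-1/2}A'(\mathcal O^\tau)^{-1/2}\bigr\|.
\end{displaymath}
The first term is a centered random matrix with independent entries bounded by $w_{\max}/(\delta+\tau)$. We control it through the dilation $\begin{pmatrix}0&X\\X'&0\end{pmatrix}$ and the matrix Bernstein inequality, or equivalently Theorem~4 of \citet{qin2013regularized}. The variance proxy is at most $w_{\max}^2\max(\text{expected degree})/(\delta+\tau)^2\lesssim 1/(\delta+\tau)$, so the term is $\mathcal O_{\mathbb P}\bigl(\sqrt{\log(sq)/(\delta+\tau)}\bigr)$; the $\log(sq)$ comes from the ambient dimension $sq$ of the stacked system.

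The second term, degree normalization, is handled by the identity $\|I-(\mathcal P^\tau)^{1/2}(P^\tau)^{-1/2}\|\le\max_j|[P^\tau]_{jj}-[\mathcal P^\tau]_{jj}|/[\mathcal P^\tau]_{jj}$ and its analogue for $O^\tau$. We also use $\|(\mathcal P^\tau)^{-1/2}A'(\mathcal O^\tau)^{-1/2}\|=\mathcal O_{\mathbb P}(1)$, which follows from the first term together with $\|\tilde L\|\le 1$ up to $w_{\max}$. Scalar Bernstein applied to each degree, followed by a union bound over $2sq$ degrees, gives
\begin{displaymath}
\max_j\frac{|[P^\tau]_{jj}-[\mathcal P^\tau]_{jj}|}{[\mathcal P^\tau]_{jj}}=\mathcal O_{\mathbb P}\!\left(\sqrt{\frac{\log(sq)}{\delta+\tau}}\right)
\end{displaymath}
whenever $\delta+\tau\gtrsim\log(sq)$. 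That condition is guaranteed by Assumption~\ref{assum_degree_simple}, since $sqB_{sq}=\Omega(\log(sq))$. One subtlety: $\tau$ is itself random, so we first show that $\tau$ concentrates around its mean at a relative rate $\mathcal O_{\mathbb P}(\sqrt{\log(sq)/(sq\,\tau)})$ and then absorb that error into the same bound. Substituting $\delta+\tau=\Omega(sqB_{sq})$ into both terms yields the rate $\sqrt{\log(sq)/(sqB_{sq})}$.

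The main obstacle will be the matrix concentration step with weighted, signed entries and randomly regularized normalizers. Matrix Bernstein requires care because the normalizing matrices are random, which is why we normalize by the deterministic $\mathcal P^\tau,\mathcal O^\tau$ first. We also need the bounded-support assumption $|[w]_{ij}|\le w_{\max}$ to make the entrywise bound and the variance proxy scale as $1/(\delta+\tau)$. If the weights have nonzero mean heterogeneity, the expectation $\mathcal A$ must be interpreted as including $\mathbb E[w]$, which only changes constants.
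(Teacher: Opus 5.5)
Your overall architecture is the paper's. You use the same intermediate matrix (random $A$ normalized by the population degrees), matrix Bernstein on the dilation for the fluctuation term, scalar Bernstein with a union bound over the $2sq$ degrees for the normalization term, $\|\bar C\|=\mathcal O_{\mathbb P}(1)$ inherited from the first step, and finite aggregation over blocks. Your explicit control of the random regularizer $\tau$ is more careful than the paper, which silently uses the same $\tau$ in the realized and population normalizers.

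The genuine gap is the variance proxy. You bound it by $w_{\max}^2\max(\text{expected degree})/(\delta+\tau)^2$ and then assert this is $\lesssim 1/(\delta+\tau)$. That step requires the maximal expected degree to be $\mathcal O(\delta+\tau)$. Assumption~\ref{assum_degree_simple} only lower-bounds $\delta+\tau$, and the needed upper bound fails in exactly the heterogeneous regime that degree correction and $\tau$-regularization are designed for. For example, take $\asymp\sqrt q$ hub senders, each linked to a constant fraction of nodes. Then $\delta+\tau\asymp\sqrt q\gg\log q$ while the maximal degree is $\asymp q$, so your proxy is too large by a factor $\sqrt q$ and Bernstein no longer yields $\sqrt{\log(sq)/(\delta+\tau)}$. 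The paper gets $v^2\le c_1/c_2$ with no upper bound on degrees, via a row-by-row cancellation. The variance of entry $(i,j)$, at most $c_1\tilde{\mathcal A}(i,j)$, is divided by the product of that entry's own row and column regularized degrees. Hence the $i$th row sum of entrywise variances is at most $c_1(\delta+\tau)^{-1}\sum_j\tilde{\mathcal A}(i,j)/([\mathcal O]_{ii}+\tau)\le c_1/c_2$, because a normalized row sum is at most one; the same holds for columns.

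That cancellation needs each row of $A'$ to be divided by its \emph{own} row sum and each column by its own column sum. This is why the paper's proof works with $(\mathcal O^\tau)^{-1/2}A'(\mathcal P^\tau)^{-1/2}$, where $\mathcal O$ collects the row sums and $\mathcal P$ the column sums of $\mathcal A'$. The orientation you copied from Section~2, $(\mathcal P^\tau)^{-1/2}A'(\mathcal O^\tau)^{-1/2}$, divides row $j$ of $A'$ by the $j$th \emph{column} sum. The row-$j$ variance sum is then controlled only by $(\delta+\tau)^{-1}$ times the ratio of the $j$th row sum to the $j$th column sum plus $\tau$. That ratio is unbounded when a node's sending and receiving degrees differ greatly, which distinct $\Theta^y,\Theta^z$ permit.

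The same issue affects your claim $\|\tilde L\|\le 1$ ``up to $w_{\max}$'', which you need for the degree-perturbation term. That is the Schur-test bound, and it holds only for the matched normalization; in the hub example above the mismatched $\tilde L$ has norm $\gtrsim q^{1/4}$. So switch to the matched orientation and replace the crude max-degree estimate by the row-wise cancellation.

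A smaller point: the stability envelopes constrain the transition matrices $\phi_{h,m}L_{h,m}$, not the scalars $|\phi_{h,m}|$ themselves. It is cleaner to treat the $\phi_{h,m}$ as fixed constants, as the paper implicitly does.
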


\subsection{Singular vectors and misclassification rates}\label{sse:misclassification}

We now translate the operator-norm bounds from Sections \ref{sse:consistency_estimator} and \ref{sse:consistency_random} into perturbation bounds for the row-normalized singular vectors and the resulting misclassification rates. 
For season $m=1,\ldots,s$, let $\hat X^*_{m,L}$ and $\hat X^*_{m,R}$ denote the estimated row-normalized left and right singular vectors, let $X^*_{m,L}$ and $X^*_{m,R}$ denote their population counterparts, and let $\bar X^*_{m,L}$ and $\bar X^*_{m,R}$ denote the corresponding PisCES-smoothed versions. Likewise, for horizon $h\in\{S,M,L\}$, let $\hat X^*_{(h),L}$ and $\hat X^*_{(h),R}$ denote the estimated row-normalized singular vectors, let $X^*_{(h),L}$ and $X^*_{(h),R}$ denote their population counterparts, and let $\bar X^*_{(h),L}$ and $\bar X^*_{(h),R}$ denote the corresponding PisCES-smoothed versions. Since singular vectors are identified only up to orthogonal transformations, all bounds below are understood after suitable rotations.

The following conditions are often implicit in prior analyses of regularized spectral clustering and directed co-clustering; cf. \citet{qin2013regularized,rohe2016co,gudhmundsson2021detecting}. In the present asymmetric co-clustering framework, we state them explicitly because stable recovery of both the sending and receiving singular spaces is needed for the singular-vector perturbation and misclassification arguments below.


\begin{assumption}\label{assum:cluster_regular}
	For each seasonal or horizon-specific population matrix used in the co-clustering step, we assume:
	\begin{itemize}
		\item[(i)] the retained singular subspace is separated from the remainder by an eigengap bounded below by some constant $\delta_0>0$;
		\item[(ii)] the corresponding population singular-vector matrices satisfy a uniform row-norm lower bound of the form
		\begin{displaymath}
			\min_i \|[X_{m,\bullet}]_{i\cdot}\| \ge c_0 q^{-1/2},
			\qquad
			\min_i \|[X_{(h),\bullet}]_{i\cdot}\| \ge c_0 q^{-1/2},
		\end{displaymath}
		for some constant $c_0>0$ and $\bullet\in\{L,R\}$;
		\item[(iii)] the distinct population row-normalized centroids are separated by at least $\delta_*>0$.
	\end{itemize}
	All constants are uniform over seasons $m=1,\ldots,s$ and horizons $h\in\{S,M,L\}$.
\end{assumption}

\begin{remark}
	Assumption~\ref{assum:cluster_regular} is stated at a high level in terms of the population singular structure. It is satisfied under standard balanced-block settings with fixed numbers of sending and receiving communities, effective degree-correction weights of order $q^{-1}$, and a reduced block matrix whose nonzero singular values are separated and whose row-normalized block centroids are distinct. A formal sufficient condition is given in Proposition~\ref{prop:ass48-sufficient} of the Appendix.
\end{remark}

\begin{theorem}\label{thm:missclassification_pvar}
	Consider the ScBM-PVAR model, and write
	$$
	\eta_{P,N}=
	\begin{cases}
		\dfrac{q}{N}, & \text{if the OLS estimator is used},\\[1ex]
		\dfrac{\ell_P\log(qs)}{N}, & \text{if the lasso estimator in \eqref{e:pvar_lasso} is used}.
	\end{cases}
	$$
	Suppose Assumption~\ref{assum:cluster_regular} holds and the PisCES smoothing parameter satisfies $\alpha_N\in(0,1/(4\sqrt{2}+2))$ and $\alpha_{N}$ is sufficiently small. Then there exist orthogonal matrices $R_{m,L}$ and $R_{m,R}$, $m=1,\ldots,s$, such that, with high probability,
	$$
	\sum_{m=1}^s \sum_{\bullet \in \{L, R\}}
	\|\bar X^*_{m,\bullet}-X^*_{m,\bullet}R_{m,\bullet}\|_F
	\le
	\mathcal{O}_{\mathbb{P}}\!\left(
	\sqrt{sq\,\eta_{P,N}}
	+
	\sqrt{\frac{\log(sq)}{B_{sq}}}
	+
	\alpha_N
	\right).
	$$
	
	If $\bar{\mathcal{N}}^y_m$ and $\bar{\mathcal{N}}^z_m$ denote the sets of misclustered sending and receiving nodes at season $m$ obtained from the PisCES-smoothed singular vectors, then
	$$
	\frac{1}{sq}\sum_{m=1}^s \big( |\bar{\mathcal{N}}^y_m| + |\bar{\mathcal{N}}^z_m| \big)
	\le
	\mathcal{O}_{\mathbb{P}}\!\left(
	\eta_{P,N}
	+
	\frac{\log(sq)}{sqB_{sq}}
	+
	\alpha_N^2
	\right).
	$$
\end{theorem}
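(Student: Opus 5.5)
The proof follows the modular pipeline described in Section~\ref{se:property}. It has three stages: operator-norm control of each seasonal matrix, PisCES perturbation of the smoothed projectors, and conversion of row-normalized singular-vector error into a misclassification count.

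\textbf{Step 1: operator-norm error per season.} Combine the decomposition \eqref{e:estimation_error_simple} with Lemma~\ref{lem:estimation} (OLS) or Lemma~\ref{lem:estimation_lasso} (lasso), and with Theorem~\ref{lem:population}. This gives $\|\hat\Phi-\tilde\Phi\| = \mathcal{O}_{\mathbb{P}}(\sqrt{\eta_{P,N}}+\sqrt{\log(sq)/(sqB_{sq})})$. Since each seasonal block $\hat\Phi_m$ is a submatrix (or a fixed finite sum over $p$ lags) of the stacked matrix, the same bound holds for every $m$.

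\textbf{Step 2: unsmoothed projectors.} By Wedin's $\sin\Theta$ theorem and the eigengap in Assumption~\ref{assum:cluster_regular}(i), each unsmoothed projector satisfies
\[
\|\hat U_{m,\bullet}-U_{m,\bullet}\|_F\le \sqrt{2K}\,\|\hat\Phi_m-\tilde\Phi_m\|/\delta_0 .
\]
Here $K$ is the fixed rank.

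\textbf{Step 3: PisCES smoothing.} Adapt the PisCES argument of \citet{liu2018global} (their Theorem~1) to the left and right chains separately. Each chain is a sequence of $s$ projector updates with $\alpha_N<1/(4\sqrt2+2)$. That contraction argument shows the fixed point $\bar U_{m,\bullet}$ is within a constant multiple of $\max_m\|\hat U_{m,\bullet}-U_{m,\bullet}\|_F$, plus a bias term. The bias arises because neighbouring population projectors need not coincide: $U_{m-1,R}$ and $U_{m,L}$ share memberships through \eqref{e:season_cyclic}, but their degree weights differ, and the left and right chains of a season are linked to different neighbours. The bias is bounded by $C\alpha_N\max_m\|U_{m\pm1}-U_m\|_F/\delta_0 = \mathcal{O}(\alpha_N)$, using $\|U\|_F\le\sqrt K$. "Sufficiently small $\alpha_N$" is used so that the perturbed matrix $\hat U_m+\alpha_N(\cdots)$ keeps an eigengap of at least $\delta_0/2$. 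Transferring from projectors to bases, choose $R_{m,\bullet}$ by the usual Procrustes alignment, giving $\|\bar X_{m,\bullet}-X_{m,\bullet}R_{m,\bullet}\|_F\le\sqrt2\|\bar U_{m,\bullet}-U_{m,\bullet}\|_F$.

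\textbf{Step 4: row normalization.} For any row $i$, $\|a/\|a\|-b/\|b\|\|\le 2\|a-b\|/\|b\|$. Assumption~\ref{assum:cluster_regular}(ii) gives $\|b\|\ge c_0q^{-1/2}$, so row normalization inflates the Frobenius error by at most $2\sqrt q/c_0$. Summing over $m$ and $\bullet$ then gives the first display, after regrouping $\sqrt q\,\sqrt{\eta}\,\sqrt s$. I expect the scaling bookkeeping between per-season and stacked norms to need care. In particular, the $\sqrt q$ inflation may have to be absorbed by the eigenvalue scale $\|\tilde\Phi_m\|$, as in \citet{qin2013regularized}.

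\textbf{Step 5: misclassification.} Use the standard $k$-means argument of \citet{rohe2016co} and \citet{qin2013regularized}. The $(1+\epsilon)$-approximate $k$-means centroids are within $\mathcal{O}(\|\bar X^*-X^*R\|_F)$ of the truth in Frobenius norm. Any misclustered node must have row error at least $\delta_*/2$, by Assumption~\ref{assum:cluster_regular}(iii). Hence $|\bar{\mathcal N}_m|\le C\delta_*^{-2}\|\bar X^*_m-X^*_mR_m\|_F^2$. Joint clustering of the concatenated pairs $(\bar X^*_{m-1,R}\ \bar X^*_{m,L})$ only adds the two errors. Squaring the first bound, using $(a+b+c)^2\le3(a^2+b^2+c^2)$, and dividing by $sq$ yields the second display.

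The main obstacle is Step 3. The bias from differing neighbouring population subspaces must be shown to be only $\mathcal{O}(\alpha_N)$ under the cyclic linkage. This also requires checking that Liu et al.'s contraction argument survives the cyclic boundary and the separate left/right chains.
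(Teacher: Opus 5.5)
Your pipeline is the paper's:
\begin{itemize}
\item the operator-norm bound from \eqref{e:estimation_error_simple}, Lemma~\ref{lem:estimation} or \ref{lem:estimation_lasso}, and Theorem~\ref{lem:population};
\item a $\sin\Theta$ step;
\item the PisCES contraction with an $\mathcal{O}(\alpha_N)$ smoothing bias (Lemma~\ref{lem:app_pisces_projector});
\item Procrustes alignment;
\item row normalization (Lemma~\ref{lem:app_rownorm});
\item the $k$-means reduction (Lemma~\ref{lem:app_misclass}).
\end{itemize}
Your Step~2 is in fact sharper than the paper's. The paper uses $\|\widehat S_m-\widetilde S_m\|_F\le\sqrt{2q}\,\|\widehat\Phi_m-\widetilde\Phi_m\|$, so it carries an extra $\sqrt q$ from the outset. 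Your rank-$K$ Wedin bound instead gives per-season projector error of order $\sqrt{\eta_{P,N}}+\sqrt{\log(sq)/(sqB_{sq})}$. The factor $2\sqrt q/c_0$ from row normalization then produces exactly $\sqrt{q\,\eta_{P,N}}+\sqrt{\log(sq)/(sB_{sq})}$ per block. Summed over the $2s$ blocks, this is the first two terms of the display. Nothing needs to be absorbed into $\|\widetilde\Phi_m\|$, so drop that speculation. Your row inequality also holds for all nonzero $a,b$ by the triangle inequality. You therefore avoid the smallness hypothesis of Lemma~\ref{lem:app_rownorm}, whose verification in the paper ($o(q^{-1/2})$ Frobenius error) does not follow from the stated assumptions.

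The genuine gap is the third term. The smoothing bias is controlled only at the projector level, $\|U^{(\alpha_N)}_{m,\bullet}-U_{m,\bullet}\|_F\le C\alpha_N$. Row normalization multiplies it by $\sqrt q$ like every other term, so your argument yields $\sqrt q\,\alpha_N$, not $\alpha_N$, in the first display.

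This is not just a loose bound. Suppose consecutive sending partitions overlap only partially (neither nested nor orthogonal). Then the top-$K$ space of $U_{m,L}+\alpha_N(U_{m-1,L}+U_{m+1,L})$ rotates by an angle of order $\alpha_N$, in a direction spread over all $q$ rows. Every normalized row therefore moves by order $\alpha_N$. No regrouping removes this. It is absorbed only if $\alpha_N\lesssim\sqrt{s\,\eta_{P,N}}+\sqrt{\log(sq)/(qB_{sq})}$, and then the $\alpha_N$ term is superfluous.

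The paper's proof does not close this gap either. It reaches $+\alpha_N$ only because the factor $C\sqrt q$ of Lemma~\ref{lem:app_rownorm} is dropped, both in \eqref{eq:rownorm_bound} and in the last display of the proof of Lemma~\ref{lem:app_pvar_sv}. That same omission silently offsets its $\sqrt{2q}$ loss for the first two terms.

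What your route actually proves is the first display with $\sqrt q\,\alpha_N$ in place of $\alpha_N$. The misclassification display is unaffected, since $(sq)^{-1}(\sqrt q\,\alpha_N)^2=\alpha_N^2/s$ lies within the stated $\alpha_N^2$.

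Finally, the step you flagged as the main obstacle is routine, as in Lemma~\ref{lem:app_pisces_projector}:
\begin{itemize}
\item The input $\widehat U_{m,\bullet}+\alpha_N(\cdots)$ has eigengap at least $1-2\alpha_N$ simply because $\widehat U_{m,\bullet}$ is a projector; $\delta_0$ plays no role.
\item The contraction constant is $2\sqrt2\,\alpha_N/(1-2\alpha_N)<1$.
\item The bias bound $\|E_j\|_F\le 2\sqrt{K_{\max}}\,\alpha_N$ uses no structure of the membership linkage.
\item The recursion smooths $U_{m,L}$ against $U_{m\pm1,L}$ along an open chain, so there is no cyclic boundary to handle.
\end{itemize}
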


\begin{theorem}\label{thm:missclassification_vhar}
	Consider generalized ScBM-VHAR, and write
	$$
	\eta_{V,N}=
	\begin{cases}
		\dfrac{sq}{N}, & \text{if the OLS estimator is used},\\[1ex]
		\dfrac{\ell_V\log(sq)}{N}, & \text{if the lasso estimator in \eqref{e:vhar_lasso} is used},
	\end{cases}
	$$
	where $s=3$ and $N=T-b_L$. Suppose Assumption~\ref{assum:cluster_regular} holds and the PisCES smoothing parameter satisfies $\alpha_N\in(0,1/(4\sqrt{2}+2))$ and $\alpha_N$ is sufficiently small. Then there exist orthogonal matrices $R_{(h),L}$ and $R_{(h),R}$, $h\in\{S,M,L\}$, such that, with high probability,
	$$
	\sum_{h\in\{S,M,L\}} \sum_{\bullet \in \{L, R\}}
	\|\bar X^*_{(h),\bullet}-X^*_{(h),\bullet}R_{(h),\bullet}\|_F
	\le
	\mathcal{O}_{\mathbb{P}}\!\bigg(
	\sqrt{sq\,\eta_{V,N}}
	+
	\sqrt{\frac{\log(sq)}{B_{sq}}}
	+
	\alpha_N
	\bigg).
	$$
	If $\bar{\mathcal{N}}^y_{(h)}$ and $\bar{\mathcal{N}}^z_{(h)}$ denote the sets of misclustered sending and receiving nodes at horizon $h\in\{S,M,L\}$ obtained from the PisCES-smoothed singular vectors, then
	$$
	\frac{1}{3q}\sum_{h\in\{S,M,L\}} \big( |\bar{\mathcal{N}}^y_{(h)}| + |\bar{\mathcal{N}}^z_{(h)}| \big)
	\le
	\mathcal{O}_{\mathbb{P}}\!\left(
	\eta_{V,N}
	+
	\frac{\log(sq)}{sqB_{sq}}
	+
	\alpha_N^2
	\right).
	$$
\end{theorem}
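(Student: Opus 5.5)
The proof of Theorem~\ref{thm:missclassification_vhar} follows the same route as the PVAR case in Theorem~\ref{thm:missclassification_pvar}: first an operator-norm bound, then singular-subspace perturbation, then smoothing, then row normalization, then $k$-means. The only structural change is that the three horizons form an ordered (non-cyclic) chain of length $s=3$, with the linkage constraints \eqref{eq:vhar-path-constraints}.

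\emph{Step 1: operator-norm bound.} Combine the decomposition \eqref{e:estimation_error_simple} with Lemma~\ref{cor:estimation_vhar} (OLS) or Lemma~\ref{cor:estimation_vhar_lasso} (lasso), and with Theorem~\ref{lem:population}. This gives, for each horizon-specific block $\hat\Phi_{(h)}$ (a submatrix of the stacked $\hat\Phi$, so its norm is dominated by the stacked norm),
$$\|\hat\Phi_{(h)}-\tilde\Phi_{(h)}\| = \mathcal{O}_{\mathbb{P}}\bigl(\sqrt{\eta_{V,N}}+\sqrt{\log(sq)/(sqB_{sq})}\bigr).$$

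\emph{Step 2: subspace perturbation.} Apply Wedin's $\sin\Theta$ theorem together with the eigengap $\delta_0$ from Assumption~\ref{assum:cluster_regular}(i). This bounds $\|\hat U_{(h),\bullet}-U_{(h),\bullet}\|$, and passing to Frobenius norm costs only a factor $\sqrt{2K}$, with $K$ fixed.

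\emph{Step 3: PisCES smoothing.} Follow the argument of \citet{liu2018global}. Their fixed-point analysis requires $\alpha_N<1/(4\sqrt2+2)$, which makes the map contractive. Under that condition, the smoothed projectors satisfy
$$\sum_h\|\bar U_{(h),\bullet}-U_{(h),\bullet}\|_F \lesssim \sum_h\|\hat U_{(h),\bullet}-U_{(h),\bullet}\|_F + \alpha_N\sum_h\|U_{(h-1),\bullet}-U_{(h),\bullet}\|_F .$$
The last term is a bias, because adjacent population projectors need not coincide (only memberships are linked, not degrees or blocks). It is bounded by a constant times $\alpha_N$, and this produces the $\alpha_N$ term. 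This is the step I expect to be the main obstacle. Liu et al.\ treat symmetric eigenprojectors of a single sequence. Here the left and right chains must be treated separately, and the chain has endpoints, with the first and last stage having a single neighbour. I would first verify that their contraction argument, which writes $\Pi(\cdot)$ as a Davis--Kahan-Lipschitz map once $\alpha_N$ is small relative to $\delta_0$, applies verbatim to each one-sided chain, using the open chain $L\to M\to S$. The cyclic wrap-around of the PVAR case is absent here, so no new terms arise. Converting projector bounds to bases via $\inf_R\|\bar X-XR\|_F\le\|\bar U-U\|_F$ then gives the rotations $R_{(h),\bullet}$.

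\emph{Step 4: row normalization.} Use the elementary inequality $\|a/\|a\|-b/\|b\|\|\le 2\|a-b\|/\|b\|$ together with the row-norm floor $c_0q^{-1/2}$ from Assumption~\ref{assum:cluster_regular}(ii). Normalization therefore inflates the Frobenius error by a factor of order $\sqrt q$. Since $s=3$ is fixed, we may write $\sqrt{q\,\eta}\asymp\sqrt{sq\,\eta}$, and
$$\sqrt q\cdot\sqrt{\log(sq)/(sqB_{sq})}\asymp\sqrt{\log(sq)/B_{sq}} .$$
This gives the first display. The $\alpha_N$ term I keep as stated, absorbing constants; if needed, the normalized $\alpha_N$ bias is argued directly at the row level using the shared memberships.

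\emph{Step 5: misclassification.} Use the standard $k$-means argument of \citet{rohe2016co} and \citet{qin2013regularized}. Since population centroids are separated by $\delta_*$ (Assumption~\ref{assum:cluster_regular}(iii)), any misclustered node has a row error of at least $\delta_*/2$ up to constants. Hence the number of such nodes is at most $C\delta_*^{-2}\|\bar X^*-X^*R\|_F^2$. This holds for the joint clustering of concatenated vectors as well, because concatenation only adds squared errors. Squaring the first display and dividing by $3q$ yields the rate $\eta_{V,N}+\log(sq)/(sqB_{sq})+\alpha_N^2$.
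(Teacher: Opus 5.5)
Your pipeline is the same as the paper's: an operator-norm bound from Lemma~\ref{cor:estimation_vhar} or \ref{cor:estimation_vhar_lasso} together with Theorem~\ref{lem:population}, then subspace perturbation, the PisCES fixed-point comparison, row normalization, and $k$-means. Steps~1, 2, 3 and~5 are sound.

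\textbf{The gap: the $\alpha_N$ term in Step~4.} Row normalization multiplies the \emph{entire} Frobenius error by $(2/c_0)q^{1/2}$, so the smoothing bias $C\alpha_N$ from Step~3 becomes $C q^{1/2}\alpha_N$. This factor grows with $q$, so it cannot be absorbed into constants.

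Your fallback of treating the bias row by row via shared memberships also fails in general. The VHAR linkage ties memberships only partially ($Y_{(S)}=Z_{(M)}$ need not equal $Y_{(M)}$) and does not tie degree weights at all. So adjacent population projectors generally differ. When they do, the top-$K$ eigenspace of $U_{(h)}+\alpha_N(\text{neighbours})$ tilts the rows of a positive fraction of nodes. The tilts are node-dependent angles of order $\alpha_N$, and no common rotation removes them.

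A concrete case: take adjacent stages with $K=2$, uniform degree weights, and partitions $\{A,B\}$ and $\{A\cup B_1,B_2\}$, where $|A|=q/2$ and $|B_1|=|B_2|=q/4$.
\begin{itemize}
\item The smoothed second eigenvector of the first stage is $v_1+c\,\alpha_N(\mathbf{1}_{B_1}-\mathbf{1}_{B_2})/\sqrt{q/2}+\mathcal{O}(\alpha_N^2)$, with $c=\sqrt{2}/3\neq 0$.
\item Hence the normalized rows of $B_1$ and $B_2$ split by an amount $\asymp\alpha_N$, although their population rows coincide.
\item The row-normalized error is therefore $\asymp q^{1/2}\alpha_N$, not $\mathcal{O}(\alpha_N)$.
\end{itemize}

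\textbf{What your argument actually proves.} It gives the first display with $q^{1/2}\alpha_N$ in place of $\alpha_N$. The stated form follows only when $q^{1/2}\alpha_N$ is dominated by the other two terms, e.g.\ $\alpha_N=\mathcal{O}(\sqrt{\log(sq)/(qB_{sq})})$, or when adjacent population projectors coincide.

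The paper's proof does not close this gap either. At the end of Lemma~\ref{lem:app_pvar_sv}, which Lemma~\ref{lem:app_vhar_sv} repeats, it invokes Lemma~\ref{lem:app_rownorm} and reports $\alpha_N$ without the $q^{1/2}$ factor that lemma produces. For the first two terms, the paper's bookkeeping reaches the stated rates only because it had earlier paid an unnecessary $\sqrt{2q}$ when passing from $\|\cdot\|$ to $\|\cdot\|_F$ on the full $2q\times 2q$ dilation. Your Wedin bound plus the rank-$2K$ Frobenius conversion obtains those two terms honestly.

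The misclassification display is unaffected: $(q^{1/2}\alpha_N)^2/(3q)\asymp\alpha_N^2$, so your Step~5 still yields exactly the stated rate.

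\textbf{Step~3 is not the obstacle you expected.} Lemma~\ref{lem:app_pisces_projector} already covers an open chain of fixed length with single-neighbour endpoints and unequal ranks. The PVAR recursion in Section~\ref{sse:algorithm_co} is itself an open chain; the cyclic link appears only in the $k$-means pairing. Also, the eigengap driving the contraction is $1-2\alpha_N$, which comes from the projector structure rather than from $\delta_0$.
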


\begin{remark}
	The unsmoothed procedure is recovered by setting $\alpha_N = 0$. In that case, the PisCES updates reduce to the identity on the estimated rank-$K$ projectors, so that $\bar X^*_{m,\bullet}=\hat X^*_{m,\bullet}$ and $\bar X^*_{(h),\bullet}=\hat X^*_{(h),\bullet}$ for $\bullet\in\{L,R\}$. Hence Theorems~\ref{thm:missclassification_pvar} and \ref{thm:missclassification_vhar} contain the unsmoothed singular-vector and misclassification bounds as the special case $\alpha_N=0$, in which the additional $\alpha_N$ and $\alpha_N^2$ terms vanish.
\end{remark}

\section{Finite-sample performance}
\label{se:simulation}

We report finite-sample results for the proposed procedures using sparsely generated ScBM-PVAR and generalized ScBM-VHAR models. In each setup, the networks embedded in the transition matrices are generated once and held fixed across replications, ensuring that the innovations are the sole source of randomness. Furthermore, we do not impose degree correction in this simulation study; this allows us to isolate the effects of latent path evolution from the confounding effects of node-specific degree heterogeneity. We present the results based on the PisCES-smoothed lasso estimator, while the corresponding OLS estimation and adjusted Rand index \cite[ARI; e.g.,][]{vinh2009information} comparisons are deferred to Appendix \ref{ap:additional_sim}.

\subsection{Experiments on ScBM-PVAR models}
\label{subsec:pvar_sparsefixed_sim}

For each season $m=1,\ldots,4$, we first generate a binary support matrix $A_m$. If node $i$ belongs to sending block $k$ and node $j$ belongs to receiving block $r$, then for $i\neq j$,
\[
[A_{m}]_{ij}\sim \mathrm{Bernoulli}(\pi_{kr,m}),
\qquad
\pi_{kr,m}=\min\!\left\{\frac{\kappa_{kr}}{n^{z}_{m,r}-\mathbf{1}(k=r)},\,0.95\right\},
\]
where $n^{z}_{m,r}$ is the size of receiving block $r$ in season $m$. Hence the expected off-diagonal support per row is $O(1)$, and the matrix density is $O(q^{-1})$. Conditional on $A_m$, the raw seasonal coefficient matrix $\widetilde\Phi_m$ is generated entrywise. The own-lag diagonal is always retained,
\[
[\widetilde\Phi_{m}]_{ii}=a_{\mathrm{self}}u_{ii},
\qquad
u_{ii}\sim \mathrm{Unif}(0.95,1.05),
\]
and for $i\neq j$ with $k=y_{m,i}$ and $r=z_{m,j}$,
\[
[\widetilde\Phi_{m}]_{ij}=
\begin{cases}
	a_{\mathrm{diag}}u_{ij}, & [A_{m}]_{ij}=1,\ k=r,\\
	a_{\mathrm{upper}}u_{ij}, & [A_{m}]_{ij}=1,\ k<r,\\
	a_{\mathrm{lower}}u_{ij}, & [A_{m}]_{ij}=1,\ k>r,\\
	0, & [A_{m}]_{ij}=0,
\end{cases}
\qquad
u_{ij}\sim \mathrm{Unif}(0.90,1.10).
\]
Thus $A_m$ determines the support, while the block pair determines the coefficient magnitude. For Type~1 we use
$(a_{\mathrm{self}},a_{\mathrm{diag}},a_{\mathrm{upper}},a_{\mathrm{lower}})=(0.30,0.14,0.04,0.06)$,
$(\kappa_{\mathrm{diag}},\kappa_{\mathrm{upper}},\kappa_{\mathrm{lower}})=(2.6,0.6,0.9)$,
and for Type~2 we use
$(a_{\mathrm{self}},a_{\mathrm{diag}},a_{\mathrm{upper}},a_{\mathrm{lower}})=(0.28,0.12,0.05,0.07)$,
$(\kappa_{\mathrm{diag}},\kappa_{\mathrm{upper}},\kappa_{\mathrm{lower}})=(2.2,0.7,1.3)$.
Type~2 is therefore intentionally more difficult, because it allows relatively stronger between-block connectivity. After constructing $\widetilde\Phi_1,\ldots,\widetilde\Phi_4$, we apply a common rescaling so that $\sigma_{\max}(\Phi_4\Phi_3\Phi_2\Phi_1)=0.90$. This preserves the relative seasonal pattern while ensuring a stable but non-trivial signal.

\begin{figure}[t]
	\centering
	\includegraphics[width=.92\linewidth]{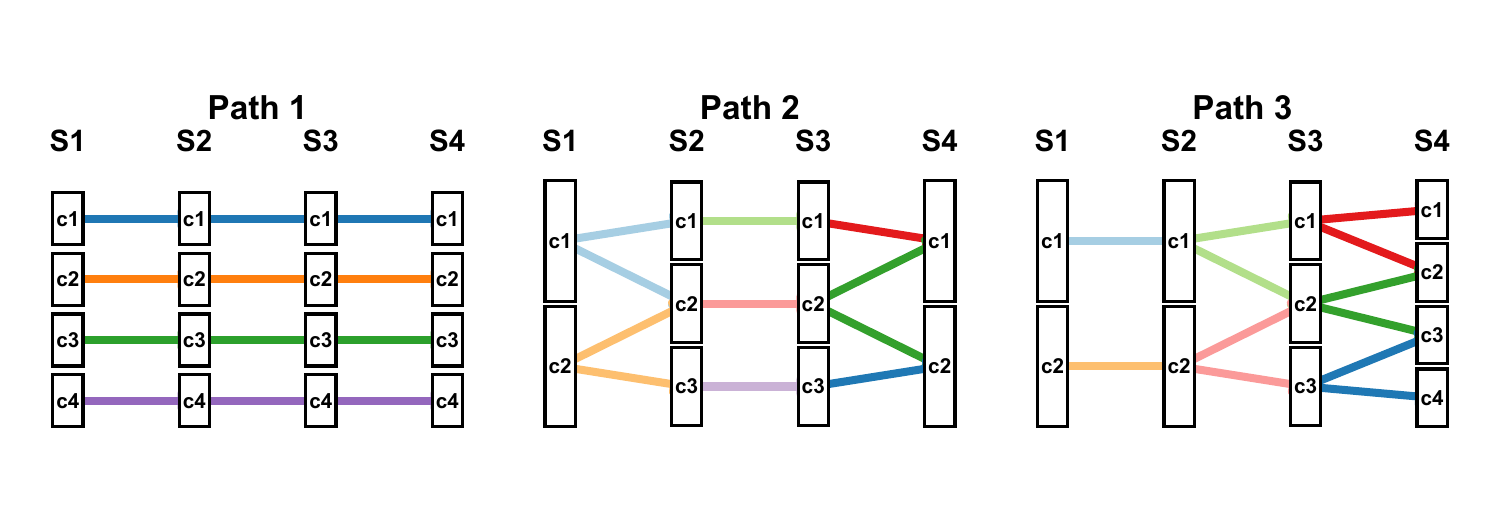}
	\vspace{-3mm}
	\caption{Community paths used in the ScBM-PVAR simulation. Path~1 is static, Path~2 follows $2\to3\to3\to2$, and Path~3 follows $2\to2\to3\to4$.}
	\label{fig:pvar_sparsefixed_paths_v3}
\end{figure}

Figure~\ref{fig:pvar_sparsefixed_paths_v3} displays the three path designs. Path~1 is static, with four communities throughout. Path~1 also serves as a static benchmark: when the block structure does not change over seasons, the proposed co-spectral clustering may be viewed as a directed extension of the symmetrized spectral clustering of \citet{gudhmundsson2021detecting}. Including this case allows us to verify that the method works well even in the non-dynamic baseline before turning to split--merge and refinement paths. Path~2 follows a split--merge pattern $2\to3\to3\to2$. Path~3 follows a nested refinement pattern $2\to2\to3\to4$. We consider $q\in\{18,36,60\}$ and $T\in\{200,500,1000,2000\}$. For each combination of dimension, path, and type, we generate one seasonal coefficient system and run $200$ replications with Gaussian innovations, diagonal innovation variance $0.5$, and burn-in length $500$. Performance is summarized by the mean spectral-norm error, mean accuracy, and mean ARI, where the four season-specific clustering results are averaged within each replication.

We compare OLS and lasso estimation results. For lasso estimation, we use the accelerated proximal gradient method \cite[FISTA;][]{beck2009fast} with a common penalty multiplier selected by block cross-validation \citep{baek2021sparse}. For each design, one pilot series is simulated. Let $N_{\mathrm{eff},m}$ be the usable sample size in season $m$. The baseline penalty is $\lambda_m^{\mathrm{base}}=\{\log(sq^2)/N_{\mathrm{eff},m}\}^{1/2}$ and we search over $c_\lambda\in\{0.10,0.15,0.20,\ldots,1.00\}$. For each candidate $c_\lambda$, we fit the seasonal lasso regressions with $\lambda_m=c_\lambda\lambda_m^{\mathrm{base}}$ and evaluate them by $10$-fold block cross-validation, summing the validation errors over the four seasons. The selected $c_\lambda$ is then fixed within the same setup. The PisCES smoothing parameter $\alpha_N$ is selected separately by a $5$-fold holdout criterion over the default grid in Section \ref{sse:algorithm_cv}.

\begin{table}[t!]
	\centering
	\scriptsize
	\setlength{\tabcolsep}{5pt}
	\begin{adjustbox}{max width=\textwidth}
		\renewcommand{\arraystretch}{1.15}
		\begin{tabular}{ccccccccccc}
			\hline
			\multirow{2}{*}{Dim} & \multirow{2}{*}{Path} & \multirow{2}{*}{Type} &
			\multicolumn{4}{c}{Spectral Norm} & \multicolumn{4}{c}{Accuracy} \\
			\cline{4-11}
			& & & $T=200$ & $T=500$ & $T=1000$ & $T=2000$ & $T=200$ & $T=500$ & $T=1000$ & $T=2000$ \\
			\hline
			\multirow{6}{*}{$q=18$}
			& path1 & type1 & 0.588 & 0.432 & 0.323 & 0.236 & 0.870 & 0.999 & 1.000 & 1.000 \\
			& path1 & type2 & 0.619 & 0.444 & 0.339 & 0.248 & 0.713 & 0.950 & 0.996 & 1.000 \\
			& path2 & type1 & 0.608 & 0.421 & 0.309 & 0.222 & 0.711 & 0.901 & 0.946 & 0.965 \\
			& path2 & type2 & 0.581 & 0.427 & 0.317 & 0.227 & 0.640 & 0.808 & 0.874 & 0.910 \\
			& path3 & type1 & 0.614 & 0.435 & 0.317 & 0.224 & 0.686 & 0.778 & 0.784 & 0.804 \\
			& path3 & type2 & 0.616 & 0.432 & 0.319 & 0.228 & 0.614 & 0.706 & 0.763 & 0.761 \\
			\hline
			\multirow{6}{*}{$q=36$}
			& path1 & type1 & 0.662 & 0.492 & 0.376 & 0.277 & 0.623 & 0.934 & 0.996 & 1.000 \\
			& path1 & type2 & 0.644 & 0.505 & 0.396 & 0.297 & 0.481 & 0.693 & 0.911 & 0.974 \\
			& path2 & type1 & 0.682 & 0.508 & 0.374 & 0.267 & 0.576 & 0.842 & 0.955 & 0.975 \\
			& path2 & type2 & 0.650 & 0.492 & 0.372 & 0.269 & 0.545 & 0.687 & 0.810 & 0.834 \\
			& path3 & type1 & 0.684 & 0.508 & 0.374 & 0.268 & 0.568 & 0.694 & 0.746 & 0.771 \\
			& path3 & type2 & 0.676 & 0.504 & 0.378 & 0.273 & 0.535 & 0.663 & 0.757 & 0.787 \\
			\hline
			\multirow{6}{*}{$q=60$}
			& path1 & type1 & 0.702 & 0.547 & 0.422 & 0.311 & 0.496 & 0.801 & 0.974 & 0.998 \\
			& path1 & type2 & 0.678 & 0.549 & 0.439 & 0.332 & 0.396 & 0.543 & 0.797 & 0.920 \\
			& path2 & type1 & 0.707 & 0.547 & 0.412 & 0.295 & 0.523 & 0.766 & 0.912 & 0.963 \\
			& path2 & type2 & 0.704 & 0.553 & 0.424 & 0.307 & 0.500 & 0.598 & 0.764 & 0.845 \\
			& path3 & type1 & 0.755 & 0.576 & 0.424 & 0.303 & 0.509 & 0.657 & 0.751 & 0.762 \\
			& path3 & type2 & 0.704 & 0.552 & 0.421 & 0.305 & 0.484 & 0.569 & 0.670 & 0.707 \\
			\hline
		\end{tabular}
	\end{adjustbox}
	\caption{PisCES-smoothed lasso results for the ScBM-PVAR models.}
	\label{tab:pvar_pisces_lasso_v3}
\end{table}

Table~\ref{tab:pvar_pisces_lasso_v3} shows the expected monotone effect of sample size: the spectral-norm error decreases and the accuracy increases in every setup. For instance, in $(q=18,\mathrm{path1},\mathrm{type1})$, the error drops and the accuracy rises as $T$ increases. Even in the most difficult setting $(q=60,\mathrm{path3},\mathrm{type2})$, the error still decreases and the accuracy increases. The relative difficulty ordering is stable throughout: larger $q$ is harder, Type~2 is harder than Type~1, and Path~3 is typically the most demanding, while Path~1 is the easiest. These patterns are consistent with the theory, because the monotone decline of the first-stage error with $T$ is accompanied by monotone improvement in community recovery.

The same ordering persists for the OLS estimation results shown in Appendix \ref{ap:additional_sim}. However, the rate of improvement is significantly slower. In particular, Figure~\ref{fig:pvar_ari_comparison} shows that the ARI increases steadily for lasso results across all three paths, whereas the OLS curves remain substantially lower, especially for $q=36$ and $q=60$ and for Paths~2 and~3. Thus the gain from lasso estimation is not minor; it is what keeps spectral co-clustering informative once the problem becomes high-dimensional or the community path becomes more complex.

\subsection{Experiments on ScBM-VHAR models}
\label{subsec:vhar_sparsefixed_sim}

We keep the three horizon-specific coefficient matrices fixed within each setup. If node $i$ belongs to sending block $k$ and node $j$ belongs to receiving block $r$ at horizon $h\in\{S,M,L\}$, we generate
\begin{displaymath}
	[A_{(h)}]_{ij} \sim \mathrm{Bernoulli}\!\left([B_{(h)}]_{rk}\right), \qquad i,j=1,\ldots,q,
\end{displaymath}
where $B_{(h)}$ is a horizon-specific block-probability matrix. We consider the following two specifications:
\begin{displaymath}
	[B_{(h)}]_{rk}=
	\begin{cases}
		p_{\mathrm{diag},h}, & k=r,\\
		p_{\mathrm{off},h}, & k\neq r,
	\end{cases}
	\qquad
	[B_{(h)}]_{rk}=
	\begin{cases}
		p_{\mathrm{diag},h}-0.03, & k=r,\\
		p_{\mathrm{off},h}+0.03, & k<r,\\
		p_{\mathrm{off},h}+0.06, & k>r,
	\end{cases}
\end{displaymath}
These correspond to Type 1 and Type 2, respectively, with all probabilities truncated to $[0.001,0.995]$. Type 1 serves as the benchmark design, with stronger own- and within-block signal and weaker between-block interactions. Type 2 is intentionally more difficult: it reduces the own- and within-block magnitudes while increasing the cross-block magnitudes, thereby weakening community separation.
Specifically, Type 1 uses $(p_{\mathrm{diag},S},p_{\mathrm{off},S})=(0.95,0.02)$, $(p_{\mathrm{diag},M},p_{\mathrm{off},M})=(0.93,0.03)$, and $(p_{\mathrm{diag},L},p_{\mathrm{off},L})=(0.91,0.04)$; the corresponding Type 2 probabilities are $(0.92,0.90,0.88)$ for diagonal entries, $(0.05,0.06,0.07)$ for upper off-diagonal entries, and $(0.08,0.09,0.10)$ for lower off-diagonal entries across the short-, medium-, and long-horizon components.

Conditional on $A_{(h)}$, we normalize by sender-block size and define $[\widetilde{A}_{(h)}]_{ij}=[A_{(h)}]_{ij}/n_{h,k}^{y}$, where $n_{h,k}^{y}$ is the size of sending block $k$ at horizon $h$. This prevents larger sending blocks from generating stronger coefficients purely because they contain more nodes, and therefore keeps the block effect interpretable as an average per-sender effect. We then set
$\widetilde{\Phi}_{(h)} = c_h\,\widetilde{A}_{(h)}$, $h\in\{S,M,L\}$ with
$(c_S,c_M,c_L) = (0.34,\;0.28\sqrt{b_M},\;0.24\sqrt{b_L})$,
and apply a common rescaling factor $a>0$ such that
$\sigma_{\max}\bigl(a(\widetilde \Phi_{(S)}+\widetilde \Phi_{(M)}+\widetilde \Phi_{(L)})\bigr)=0.90$. A scale has been chosen so that the mean stationary marginal variance equals $0.5$.

Figure~\ref{fig:vhar_sparsefixed_paths_v3} displays the three latent paths in the natural long-to-short order $\mathrm{LS}\rightarrow \mathrm{LR/MS}\rightarrow \mathrm{MR/SS}\rightarrow \mathrm{SR}$, where $\mathrm{LS}=Y_{(L)}$, $\mathrm{LR/MS}=Z_{(L)}=Y_{(M)}$, $\mathrm{MR/SS}=Z_{(M)}=Y_{(S)}$, and $\mathrm{SR}=Z_{(S)}$. Path 1 is static. Path~2 follows the split--merge pattern $2 \to 2 \to 3 \to 2$, whereas Path~3 follows the coarsening pattern $3 \to 3 \to 2 \to 2$.

\begin{figure}[t]
	\centering
	\includegraphics[width=.92\linewidth]{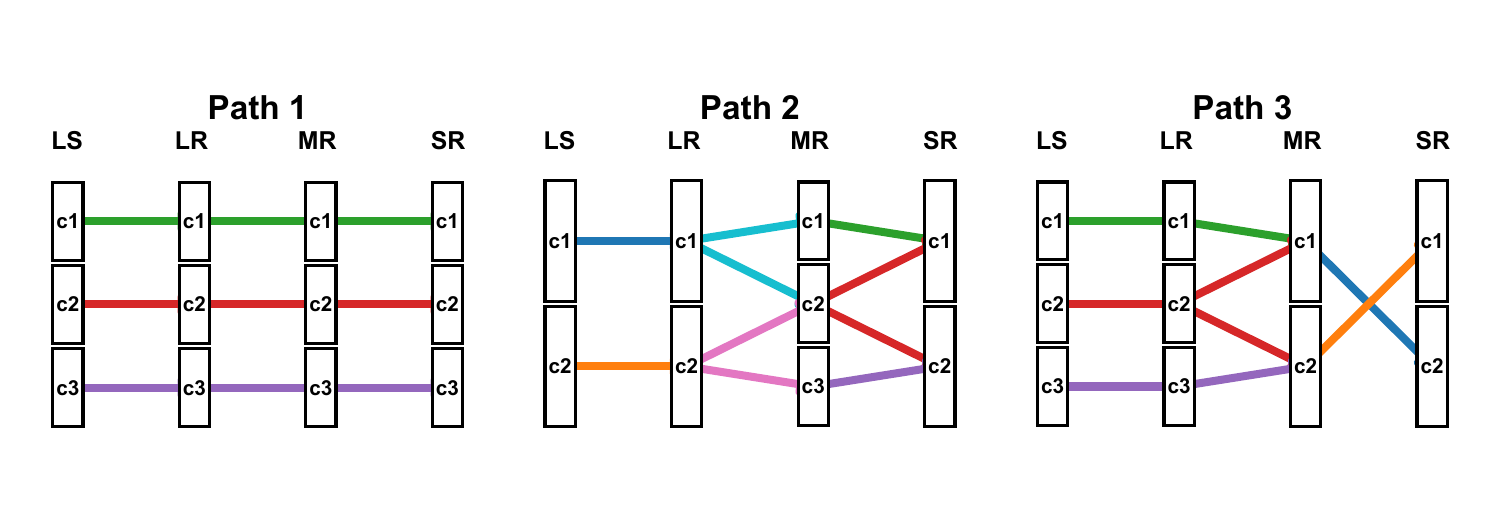}
	\vspace{-3mm}
	\caption{Community paths used in the generalized ScBM-VHAR simulation. Path 1 is static, Path 2 follows $2\to2\to3\to2$, and Path 3 follows $3\to3\to2\to2$.}
	\label{fig:vhar_sparsefixed_paths_v3}
\end{figure}

We consider $q\in\{18,24,36\}$ and $T\in\{500,1000,2000,3000\}$. Each setup is run for 100 replications with burn-in length 300, $(b_M,b_L)=(3,10)$ and Gaussian innovations with diagonal variance of 0.5. We use both OLS and lasso estimation, but report only the PisCES-smoothed lasso results in the main text. The lasso penalty is chosen by block cross-validation over $c_\lambda\in\{0.10,0.15,0.20,\ldots,1.00\}$ while the PisCES smoothing parameter is selected separately based on Section \ref{sse:algorithm_cv}.

\begin{table}[t!]
	\centering
	\scriptsize
	\setlength{\tabcolsep}{4.5pt}
	\begin{adjustbox}{max width=\textwidth}
		\renewcommand{\arraystretch}{1.12}
		\begin{tabular}{ccccccccccc}
			\hline
			\multirow{2}{*}{Dim} & \multirow{2}{*}{Path} & \multirow{2}{*}{Type} & \multicolumn{4}{c}{Spectral Norm} & \multicolumn{4}{c}{Accuracy} \\
			\cline{4-11}
			& & & $T=500$ & $T=1000$ & $T=2000$ & $T=3000$ & $T=500$ & $T=1000$ & $T=2000$ & $T=3000$ \\
			\hline
			\multirow{6}{*}{$q=18$} & path1 & type1 & 0.345 & 0.277 & 0.227 & 0.197 & 0.716 & 0.891 & 0.982 & 0.997 \\
			& path1 & type2 & 0.342 & 0.276 & 0.227 & 0.199 & 0.646 & 0.812 & 0.945 & 0.983 \\
			& path2 & type1 & 0.353 & 0.286 & 0.235 & 0.208 & 0.625 & 0.675 & 0.747 & 0.806 \\
			& path2 & type2 & 0.355 & 0.283 & 0.234 & 0.206 & 0.598 & 0.642 & 0.731 & 0.755 \\
			& path3 & type1 & 0.350 & 0.285 & 0.236 & 0.212 & 0.576 & 0.642 & 0.786 & 0.904 \\
			& path3 & type2 & 0.343 & 0.284 & 0.236 & 0.209 & 0.555 & 0.619 & 0.726 & 0.834 \\
			\hline
			\multirow{6}{*}{$q=24$} & path1 & type1 & 0.365 & 0.288 & 0.238 & 0.208 & 0.641 & 0.825 & 0.973 & 0.995 \\
			& path1 & type2 & 0.356 & 0.285 & 0.232 & 0.209 & 0.580 & 0.736 & 0.902 & 0.979 \\
			& path2 & type1 & 0.366 & 0.296 & 0.245 & 0.219 & 0.583 & 0.618 & 0.711 & 0.773 \\
			& path2 & type2 & 0.369 & 0.295 & 0.244 & 0.216 & 0.568 & 0.606 & 0.661 & 0.706 \\
			& path3 & type1 & 0.359 & 0.297 & 0.249 & 0.221 & 0.535 & 0.590 & 0.717 & 0.831 \\
			& path3 & type2 & 0.357 & 0.296 & 0.247 & 0.219 & 0.511 & 0.583 & 0.675 & 0.776 \\
			\hline
			\multirow{6}{*}{$q=36$} & path1 & type1 & 0.369 & 0.296 & 0.242 & 0.213 & 0.534 & 0.769 & 0.952 & 0.986 \\
			& path1 & type2 & 0.357 & 0.288 & 0.239 & 0.212 & 0.488 & 0.602 & 0.859 & 0.947 \\
			& path2 & type1 & 0.379 & 0.325 & 0.261 & 0.238 & 0.546 & 0.560 & 0.631 & 0.673 \\
			& path2 & type2 & 0.368 & 0.301 & 0.251 & 0.227 & 0.537 & 0.581 & 0.617 & 0.675 \\
			& path3 & type1 & 0.361 & 0.304 & 0.257 & 0.237 & 0.508 & 0.533 & 0.594 & 0.695 \\
			& path3 & type2 & 0.349 & 0.296 & 0.252 & 0.228 & 0.488 & 0.521 & 0.560 & 0.618 \\
			\hline
		\end{tabular}
	\end{adjustbox}
	\caption{PisCES-smoothed lasso results for the generalized ScBM-VHAR models.}
	\label{tab:vhar_pisces_lasso}
\end{table}

Table~\ref{tab:vhar_pisces_lasso} again shows monotone improvement as $T$ increases, although the VHAR design is clearly more challenging than the PVAR experiments. For example, in $(q=18,\mathrm{path1},\mathrm{type1})$, the spectral norm decreases and the accuracy increases relatively slowly while $T$ grows more rapidly. In the more difficult setting $(q=36,\mathrm{path2},\mathrm{type2})$, the spectral norm still decreases and the accuracy still increases, although more slowly. The same difficulty ordering is stable throughout the table: larger $q$ leads to poorer recovery, Type 2 is more difficult than Type 1, and the non-static paths are more difficult than Path 1. Path 1 is the easiest case because the block structure is unchanged across horizons, whereas Paths 2 and 3 require the method to track either a split--merge or a coarsening pattern across ordered horizons.
This is also consistent with its role as a static benchmark, where the procedure effectively reduces to a directed analogue of the static spectral clustering setting and therefore should perform best.

The OLS results displayed in Appendix \ref{ap:additional_sim} exhibit the same ordering, but with substantially larger spectral-norm errors and markedly lower ARIs. For instance, in $(q=36,\mathrm{path1}$, $\mathrm{type1},T=3000)$, the PisCES-smoothed lasso estimator achieves an ARI of $0.971$, whereas the corresponding OLS estimator achieves only $0.180$. Thus, sparse estimation is markedly more effective in the ScBM-VHAR models as well.

\section{Data applications}
\label{se:appl}

\subsection{Quarterly employees on nonfarm payrolls by industry sectors}
\label{sse:appl_pvar}

The nonfarm payroll employment series records paid U.S.\ workers in selected industries, excluding farmworkers, private household employees, and military personnel. We use the monthly U.S.\ Bureau of Labor Statistics series obtained from the Federal Reserve Economic Data (FRED).\footnote{Series descriptions and industry classifications are available at \url{https://fred.stlouisfed.org/} and \url{https://www.bls.gov/news.release/empsit.t17.htm}.} We aggregate the data to quarterly frequency from January~1990 to March~2020, which yields $T=120$ observations per series. To form a moderately high-dimensional and heterogeneous panel, we include major sectors together with selected third- and fourth-level subcategories, for a total of $q=22$ series; see Table~\ref{tab:employment_sectors} in Appendix~\ref{ap:additional_data_pvar}. Throughout, we refer to sectors by the codes reported in that table.

Before estimation, we apply a log transformation and first differencing. Figure~\ref{fig:timeplot_PVAR} in Appendix~\ref{ap:additional_data_pvar} shows time plots for the first eight series and sample ACFs and PACFs for Mining, Nondurable, Wholesale, and Retail. The ACFs display clear periodicity, while the higher-order PACFs are relatively sparse. This pattern supports a low-order periodic VAR specification. Cross-correlations, not reported here, also indicate substantial intersectoral dependence. At the same time, sectors within the same broad category do not necessarily move together. For example, the 2008 Subprime Mortgage crisis appears in most manufacturing- and trade-related series, whereas Utilities follows a distinct path and Manufacturing behaves differently from Mining and Construction. These features favor a community-based analysis over a grouping based only on broad sector labels.

\begin{figure}[t!]
	\centering
	\includegraphics[width=1.0\textwidth]{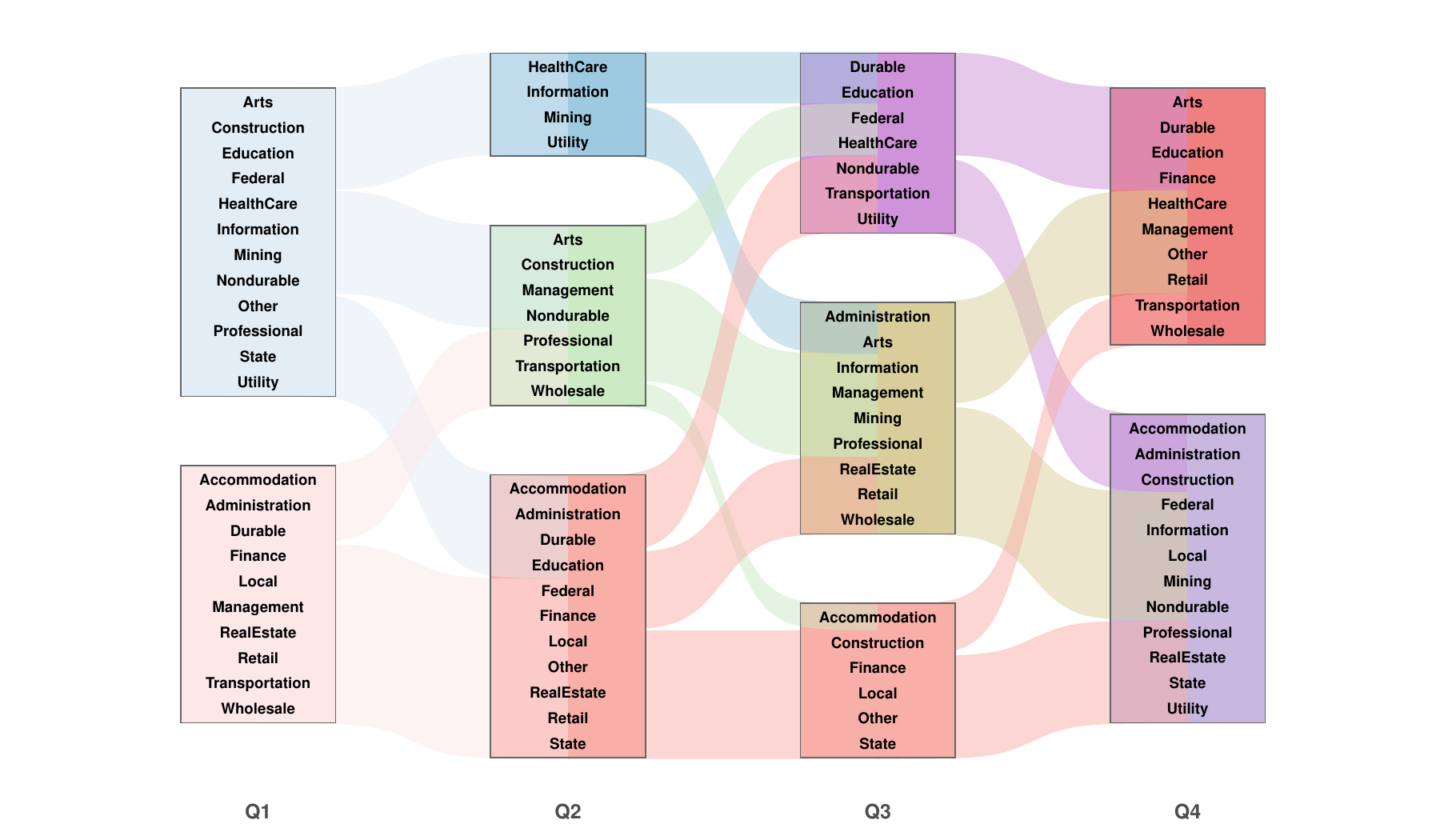}
	\caption{Aligned quarterly community paths of the 22 industry sectors in U.S.\ nonfarm payroll employment. Each column shows the lasso-based ScBM--PVAR community assignment for one quarter, and each ribbon tracks sectoral reallocation between adjacent quarters.}
	\label{fig:pvar_payroll_sankey}
\end{figure}

We fit a lasso-based ScBM--PVAR model with $s=4$ seasons corresponding to Q1--Q4 and a common lag order $p_m=1$ for all $m=1,\ldots,4$. For each quarter, we estimate the transition matrix, construct the corresponding autoregressive matrix $\hat{\Phi}_m$, and inspect its singular values; see Figure~\ref{fig:rank_PVAR} in Appendix~\ref{ap:additional_data_pvar}. To enforce the cyclic community structure in Section~\ref{sse:algorithm_co}, we adopt the admissible configuration
$$
\left(K_{y_1},K_{z_1},K_{y_2},K_{z_2},K_{y_3},K_{z_3},K_{y_4},K_{z_4}\right)
=
(2,3,3,3,3,2,2,2),
$$
which yields a sending-side seasonal path $2 \to 3 \to 3 \to 2$ across Q1--Q4.

Figure~\ref{fig:pvar_payroll_sankey} reveals a clear cyclic seasonal topology. Q1 is organized around a broad two-community split between a business--trade--property block and a production--infrastructure--public-service block. In Q2, this coarse partition expands into three groups, separating a consumer-, property-, and public-demand block, a business-coordination block, and an infrastructure--information--health block. Q3 retains a three-way structure but with substantial recomposition of memberships, now distinguishing a local-demand and public block, a business--trade--information block, and a production--infrastructure--human-capital block. In Q4, the system coarsens back to a two-community structure. The annual dependence pattern is therefore characterized by mid-year differentiation followed by end-of-year recomposition.

A few sectors remain highly stable, most notably Accommodation and Arts, whereas Wholesale, Management, Transportation, and several local-demand and public-service sectors account for much of the seasonal reallocation. Economically, this suggests a recurrent business-centered core together with broader seasonal reshuffling in more mobile sectors. For comparison, Figure~\ref{fig:pvar_payroll_sankey_2232} in Appendix reports the alternative $2 \to 2 \to 3 \to 2$ specification, which yields a coarser Q2 split but a broadly similar annual pattern.

\subsection{Realized volatilities of stock indices across different financial markets}
\label{sse:appl_vhar}

Realized volatility (RV) is an ex-post measure of return variation, typically constructed from the sum of squared intraday returns. We compute RV from 5-minute returns and follow the cleaning and aggregation procedure in Section~4.3 of \citet{corsi2009simple}; see also \citet{andersen2003modeling}. Our panel contains $q=29$ stock-index RV series for major equity markets. Since the original Oxford-Man Institute data are no longer publicly available, we use the processed dataset of \citet{baek2021sparse}. The sample spans January~3, 2010 to December~31, 2019, giving $T=2618$ observations and ending before the COVID-19 pandemic. The stock indices, MSCI classifications \citep{msci2025}, and regional information are listed in Table~\ref{tab:stock_indices} in Appendix~\ref{ap:additional_data_Vhar}.

Figure~\ref{fig:timeplot_VHAR} in Appendix~\ref{ap:additional_data_Vhar} reports time plots and sample ACF/PACF functions for four representative indices: FTSE~100, Nikkei~225, KOSPI Composite, and IPC Mexico. All four series exhibit persistent autocorrelation, consistent with long-memory-type volatility dynamics, but their volatility spikes differ markedly across markets. For instance, FTSE~100, KOSPI, and IPC Mexico show sharp increases around the third quarter of 2011, whereas Nikkei~225 peaks earlier and more gradually. KOSPI also appears closer to FTSE~100 than to Nikkei~225, which suggests that RV comovement is not driven solely by geographic proximity and motivates a network-based analysis of latent community structure.

We estimate the generalized ScBM--VHAR model under the standard specification $(b_M,b_L)=(5,22)$, so that the short-, medium-, and long-horizon components correspond to daily, weekly, and monthly RV aggregates. Scree plots of the estimated transition matrices, reported in Figure~\ref{fig:rank_VHAR} in Appendix~\ref{ap:additional_data_Vhar}, suggest three dominant long-horizon components, a somewhat richer middle-horizon structure when each horizon is treated separately, and three short-horizon components. Under the imposed long-to-short restriction \eqref{eq:vhar-path-constraints}, however, the ranks cannot be selected independently across horizons. We therefore adopt the admissible configuration
\begin{displaymath}
	\big(K_{y_{(L)}},K_{z_{(L)}},K_{y_{(M)}},K_{z_{(M)}},K_{y_{(S)}},K_{z_{(S)}}\big)
	=
	(3,3,3,3,3,3),
\end{displaymath}
which yields three effective communities at each horizon and gives a parsimonious baseline representation of the cross-horizon structure. In what follows, the three VHAR components are interpreted as long-, medium-, and short-horizon volatility spillovers. We note that an alternative admissible specification producing a $(3,3,4,2)$ horizon pattern leads to a broadly similar picture, although it yields a somewhat more dynamic short-horizon reallocation.

\begin{figure}[t!]
	\centering
	\includegraphics[width=1\textwidth,height=0.4\textheight]{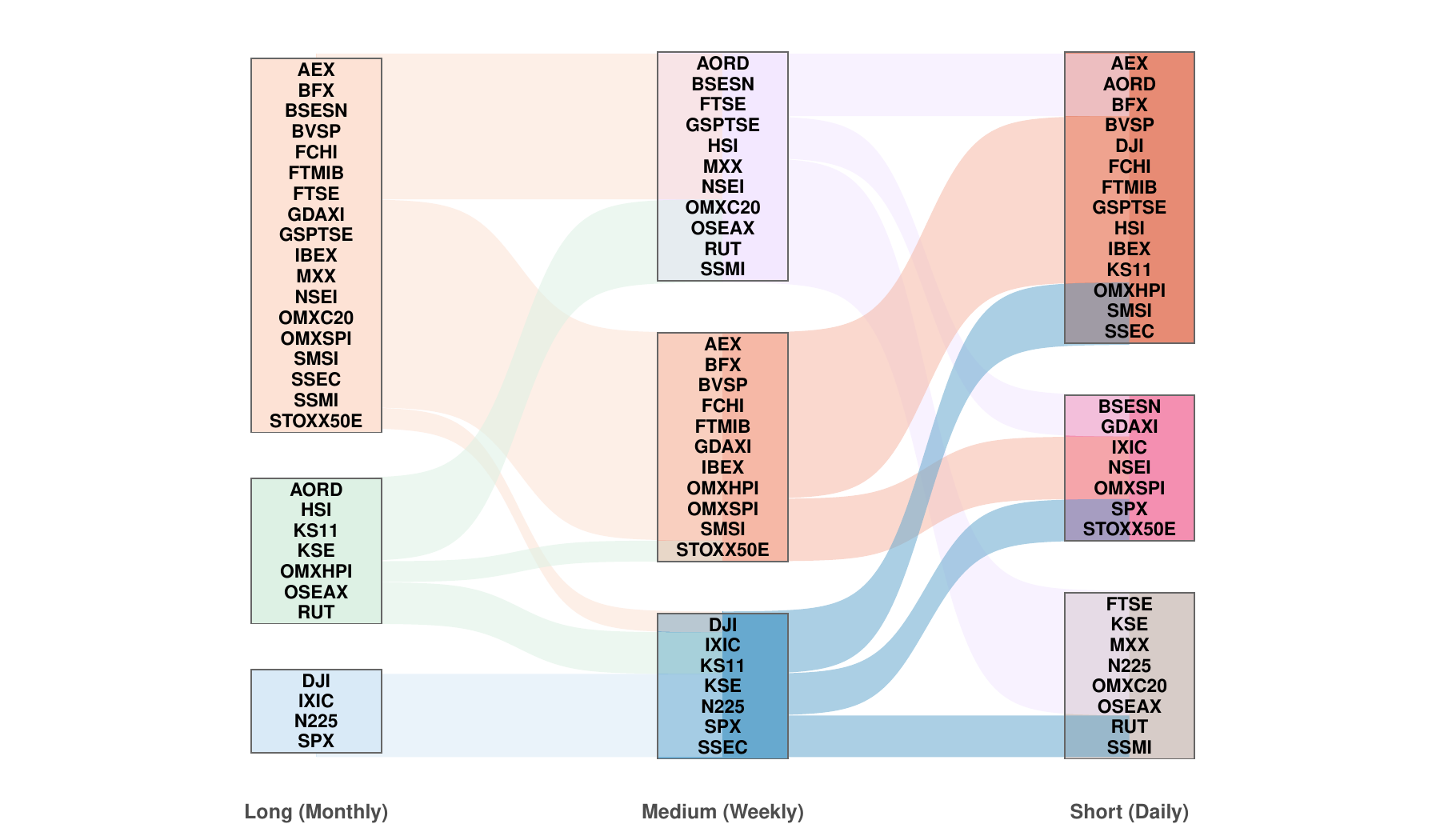}
	\vspace{-5mm}
	\caption{Sankey diagram of horizon-aggregated community memberships for realized volatilities of 29 stock indices under the lasso-based ScBM--VHAR model. From left to right, the three panels correspond to the long(monthly)-, medium(weekly)-, and short(daily)-horizon community structures.}
	\label{fig:sanky_VHAR}
\end{figure}

Figure~\ref{fig:sanky_VHAR} should be read from left to right, from the long horizon to the short horizon. It reveals a clear cross-horizon reorganization of the realized-volatility network. At an aggregate level, 17 of the 29 indices move across two distinct communities, 10 remain on a single aligned path, and only KS11 and SSEC pass through all three communities. The dependence structure is therefore not well captured by a single static partition. It is more naturally described as a sequence of horizon-specific communities with substantial but structured reallocation.

At the long horizon, one compact block consists of DJI, IXIC, SPX, and N225, forming a distinct U.S.-centered cluster with Nikkei~225 attached to it. A second and much larger block is Europe-heavy, though it also contains several emerging and peripheral markets. The remaining block is smaller and includes AORD, HSI, KS11, KSE, OMXHPI, OSEAX, and RUT. Thus the long-horizon structure is not a simple regional partition, but separates a compact U.S.-centered block, a broad Europe-heavy core, and a smaller Asia--Pacific and peripheral block.

The middle horizon yields the clearest segmentation. DJI, IXIC, SPX, and N225 remain together and are joined by KS11, KSE, and SSEC, forming a distinct U.S.--East Asia block. At the same time, AEX, BFX, FCHI, FTMIB, GDAXI, IBEX, OMXHPI, OMXSPI, SMSI, and STOXX50E form a more cohesive Europe-heavy developed core, with BVSP attached to that group. The remaining indices form a broader mixed peripheral block. The weekly scale is therefore where the global volatility network is most differentiated.

At the short horizon, the structure becomes more dynamic. The long- and middle-horizon U.S.-centered block no longer moves as a single unit: DJI joins a broad developed-market cluster, IXIC and SPX remain together in a separate short-run block, and N225 shifts into a broader peripheral group. A relatively stable developed-market trajectory nevertheless remains visible. In particular, the path $3 \to 2 \to 3$ is shared by AEX, BFX, FCHI, FTMIB, IBEX, SMSI, and BVSP, whereas KS11 and SSEC act more like bridge markets, following the paths $2 \to 1 \to 3$ and $3 \to 1 \to 3$, respectively. Figure~\ref{fig:sanky_VHAR_343} reports the 3--4--3 specification for comparison. The broad long-horizon partition remains similar, but the middle horizon becomes more finely segmented and the short-horizon reallocation becomes more dynamic.

The lasso-based ScBM--VHAR results therefore point to a layered and highly dynamic dependence structure. A compact U.S.-centered block and a Europe-heavy developed core emerge at the long horizon. The middle horizon sharpens this separation, while the short horizon produces the strongest internal reorganization. This long-to-short progression suggests that near-term volatility spillovers are shaped less by simple geographic proximity than by heterogeneous market roles and transmission channels.

\section{Conclusion}
\label{se:conclusion}

This paper establishes a framework for latent community paths in high-dimensional VAR-type models. By combining degree-corrected stochastic co-blockmodels, directed spectral co-clustering, and eigenvector smoothing, the proposed method tracks how directional groups persist, split, merge, and recompose across seasons or dependence horizons. The framework covers both ScBM--PVAR and generalized ScBM--VHAR models. Its theoretical contribution is to provide non-asymptotic guarantees for singular-vector perturbation and community misclassification through a modular link from first-stage estimation error to clustering error.

The empirical applications illustrate the value of this perspective. In U.S. nonfarm payrolls, the estimated paths distinguish a recurrent business-centered core from more mobile sectors with stronger seasonal reallocation. In global realized volatilities, the estimated paths reveal a compact U.S.-centered long-horizon block, a Europe-heavy developed core, and substantial short-horizon reorganization. These findings suggest that latent community paths provide an interpretable summary of dynamic dependence that is difficult to obtain from entrywise coefficient inspection alone.

\small 
\section*{Acknowledgement}
CB was supported by the National Research Foundation of Korea grant funded by the Korean government (MSIT) (RS-2025-00519717). The authors thank Drs. Francis X. Diebold, Majid Al-Sadoon, and Vladas Pipiras for their comments at the 2025 NBER–NSF Time Series Conference, which substantially improved the quality of this paper. The authors are also grateful to Drs. Gu{\dh}mundur Stef{'a}n Gu{\dh}mundsson and Christian Brownlees for insightful discussions regarding the models. The R code for the simulation study and data applications is available at \url{https://github.com/crbaek/dynamic-network-clustering}.

{\footnotesize
	\bibliographystyle{apalike}
	\renewcommand{\baselinestretch}{.8}
	\setlength{\bibsep}{4pt}
	\bibliography{ScBM}
}

%% file: append.tex
\appendix
\section{Proofs of theoretical properties} 
\label{ap:appendix}

\subsection{Proofs for Lemmas in Section \ref{sse:stability}}

\noindent{\textbf{Proof of Lemma \ref{lem:stability}}.} Write the ScBM-PVAR model in the stacked VAR$(p^*)$ form \eqref{e:PVAR_period_cycle_simple},
\begin{displaymath}
    Y_n^* = \sum_{h=1}^{p^*}\Psi_h^* Y_{n-h}^* + \varepsilon_n^*.
\end{displaymath}
Define the companion state vector
\begin{displaymath}
    S_n
    = (Y_n^{*'},Y_{n-1}^{*'},\ldots,Y_{n-p^*+1}^{*'})'
    \in \mathbb{R}^{qsp^*},
    \qquad
    \eta_n = (\varepsilon_n^{*'},0',\ldots,0')'.
\end{displaymath}
Then
\begin{equation}\label{e:augmented_transition}
    S_n = F S_{n-1} + \eta_n,
    \qquad
    F = \begin{pmatrix}
        \Psi_1^* & \Psi_2^* & \ldots & \Psi_{p^*-1}^* & \Psi_{p^*}^* \\
        I_{qs} & 0 & \ldots & 0 & 0 \\
        0 & I_{qs} & \ldots & 0 & 0 \\
        \vdots & \vdots & \ddots & \vdots & \vdots \\
        0 & 0 & \ldots & I_{qs} & 0
    \end{pmatrix}.
\end{equation}
Let $\lambda$ be any eigenvalue of $F$. Standard companion-matrix algebra \cite[e.g., Chapter~2 in][]{lutkepohl2005new} shows that
\begin{displaymath}
    \det\!\Bigl(
    \lambda^{p^*}I_{qs}
    - \lambda^{p^*-1}\Psi_1^*
    - \cdots
    - \lambda \Psi_{p^*-1}^*
    - \Psi_{p^*}^*
    \Bigr) = 0.
\end{displaymath}
If $|\lambda|\ge 1$, then with $z=\lambda^{-1}$ we have $|z|\le 1$ and therefore
\begin{displaymath}
    0
    =
    \det\!\Bigl(
    I_{qs}
    - \Psi_1^* z
    - \cdots
    - \Psi_{p^*}^* z^{p^*}
    \Bigr),
\end{displaymath}
which contradicts Assumption~\ref{assum:PVAR_stability_simple}(i). Hence every eigenvalue of $F$ lies strictly inside the unit disk, so $\rho(F)<1$. Consequently,
\begin{displaymath}
    S_n = \sum_{\ell=0}^{\infty} F^{\ell}\eta_{n-\ell}
\end{displaymath}
converges absolutely and defines the unique causal stationary solution. 
Since $Y_n^*$ is the first $qs$-dimensional block of $S_n$, the stacked ScBM--PVAR process admits a unique causal stationary solution. \qed

\medskip
\noindent{\textbf{Proof of Lemma \ref{cor:stability}}.} 
Let $S_t = (Y_t',Y_{t-1}',\ldots,Y_{t-b_L+1}')' \in \mathbb{R}^{qb_L}$ and define the companion matrix
\begin{displaymath}
	F=
	\begin{pmatrix}
		\Phi_1 & \Phi_2 & \cdots & \Phi_{b_L-1} & \Phi_{b_L} \\
		I_q & 0 & \cdots & 0 & 0 \\
		0 & I_q & \cdots & 0 & 0 \\
		\vdots & \vdots & \ddots & \vdots & \vdots \\
		0 & 0 & \cdots & I_q & 0
	\end{pmatrix},
\end{displaymath}
where $\Phi_1,\ldots,\Phi_{b_L}$ are the constrained VAR($b_L$) coefficients in \eqref{e:VHAR_coefficients}. Then the generalized ScBM-VHAR model can be written as
\begin{displaymath}
	S_t = FS_{t-1} + \eta_t,
	\qquad
	\eta_t = (\varepsilon_t',0',\ldots,0')'.
\end{displaymath}
By the standard companion-form identity \cite[e.g., Chapter 2 in][]{lutkepohl2005new},
\begin{displaymath}
	\det(I_{qb_L}-Fz)
	=
	\det\!\left(I_q-\sum_{h=1}^{b_L}\Phi_h z^h\right),
	\qquad z\in\mathbb{C}.
\end{displaymath}
Hence Assumption~4.2(i) implies $\rho(F)<1$. Therefore the linear process
\begin{displaymath}
	S_t = \sum_{\ell=0}^{\infty} F^\ell \eta_{t-\ell}
\end{displaymath}
converges absolutely and defines a unique causal stationary solution. 
Since $Y_t$ is the first $q$-dimensional block of $S_t$, the generalized ScBM--VHAR model admits a unique causal stationary solution. \qed

\subsection{Proofs for Lemmas in Section \ref{sse:consistency_estimator}}

\noindent{\textbf{Proof of Lemma \ref{lem:estimation}}.} Let
\begin{displaymath}
    \mathbb{Y}_P = (Y_{p^*+1}^{*'},\ldots,Y_N^{*'})',
    \qquad
    \mathbb{X}_P = (W_{p^*+1}^{*'},\ldots,W_N^{*'})',
\end{displaymath}
where $W_n^*=(Y_{n-1}^{*'},\ldots,Y_{n-p^*}^{*'})'$. Then the stacked ScBM-PVAR regression is
\begin{displaymath}
    \mathbb{Y}_P = \mathbb{X}_P B_P + \mathbb{E}_P,
\end{displaymath}
with $B_P=(\Psi_1^{*'},\ldots,\Psi_{p^*}^{*'})'$. Define
\begin{displaymath}
    \hat\Sigma_{X,P}=\frac{1}{N}\mathbb{X}_P'\mathbb{X}_P,
    \qquad
    \hat\Sigma_{YX,P}=\frac{1}{N}\mathbb{Y}_P'\mathbb{X}_P,
\end{displaymath}
and let $\Sigma_{X,P}=\mathbb{E}(W_n^*W_n^{*'})$ and $\Sigma_{YX,P}=\mathbb{E}(Y_n^*W_n^{*'})$. The OLS estimator is
\begin{displaymath}
    \hat B_P = \hat\Sigma_{X,P}^{-1}\hat\Sigma_{YX,P}'.
\end{displaymath}
Let
\begin{displaymath}
    \mathbb{S}_n := (Y_n^{*'},Y_{n-1}^{*'},\ldots,Y_{n-p^*+1}^{*'})'
    \in \mathbb{R}^{qsp^*}
\end{displaymath}
be the companion state vector of the stable stacked VAR$(p^*)$ representation in \eqref{e:PVAR_period_cycle_simple}. Then the regressor-response vector
\begin{displaymath}
    (Y_n^{*'},W_n^{*'})'
\end{displaymath}
is a fixed finite-dimensional measurable transformation of $\mathbb{S}_n$. Since $p^*$ is fixed, Assumptions~\ref{assum:PVAR_stability_simple} and \ref{assum:pvar_simple} imply that $(Y_n^{*'},W_n^{*'})'$ inherits the strong-mixing, tail, and covariance conditions in Assumption~3(i)--(v) of \citet{gudhmundsson2021detecting}, after replacing their cross-sectional dimension $n$ by $qs$ and their sample size $T$ by $N$. Hence Lemma~A.2 in \citet{gudhmundsson2021detecting} gives
\begin{align} \label{eq:lema2-gud}
    \|\hat\Sigma_{X,P}-\Sigma_{X,P}\|
    = \mathcal{O}_{\mathbb P}\!\left(\sqrt{\frac{qs}{N}}\,\|\Sigma_{X,P}\|\right),
    \qquad
    \|\hat\Sigma_{YX,P}-\Sigma_{YX,P}\|
    = \mathcal{O}_{\mathbb P}\!\left(\sqrt{\frac{qs}{N}}\right).
\end{align}
In particular, $\lambda_{\min}(\hat\Sigma_{X,P})$ is bounded away from zero with high probability, so the OLS estimator exists. By the matrix inverse perturbation identity,
\begin{displaymath}
    \|\hat\Sigma_{X,P}^{-1}-\Sigma_{X,P}^{-1}\|
    \le
    \|\hat\Sigma_{X,P}^{-1}\|\,\|\hat\Sigma_{X,P}-\Sigma_{X,P}\|\,\|\Sigma_{X,P}^{-1}\|,
\end{displaymath}
which together with \eqref{eq:lema2-gud} yields
\begin{align*}
    \|\hat B_P-B_P\|
    &\le
    \|\hat\Sigma_{X,P}^{-1}-\Sigma_{X,P}^{-1}\|\,\|\hat\Sigma_{YX,P}'\|
    + \|\Sigma_{X,P}^{-1}\|\,\|\hat\Sigma_{YX,P}-\Sigma_{YX,P}\| \\
    &= \mathcal{O}_{\mathbb P}\!\left(\sqrt{\frac{qs}{N}}\,\|\Sigma_{X,P}\|\,\|\Sigma_{X,P}^{-1}\|\,\|B_P\|\right).
\end{align*}
Lemma~OA.13 in \citet{gudhmundsson2021detecting} implies that $\|\Sigma_{X,P}\|$ and $\|\Sigma_{X,P}^{-1}\|$ are bounded under stability from Assumption~\ref{assum:PVAR_stability_simple}(i) and (ii). For fixed $p^*$, $\|B_P\|=\mathcal{O}(1)$. Therefore
\begin{displaymath}
    \|\hat B_P-B_P\| = \mathcal{O}_{\mathbb P}\!\left(\sqrt{\frac{qs}{N}}\right)
    = \mathcal{O}_{\mathbb P}\!\left(\sqrt{\frac{q}{N}}\right),
\end{displaymath}
since $s$ is fixed. Finally, the season-stacked matrix used in the co-clustering step is obtained from $B_P$ by a deterministic linear aggregation map $L_P$ with $\|L_P\|=\mathcal{O}(1)$, so
\begin{displaymath}
    \|\hat\Phi-\Phi\|
    = \|L_P(\hat B_P-B_P)\|
    \le \|L_P\|\,\|\hat B_P-B_P\|
    = \mathcal{O}_{\mathbb P}\!\left(\sqrt{\frac{q}{N}}\right) \tag*{\ensuremath{\blacksquare}}
\end{displaymath}
%
\medskip
\noindent{\textbf{Proof of Lemma \ref{cor:estimation_vhar}}.} 
Recall that the generalized VHAR regression is rewritten as $Z = X_e B + E$. Let
\begin{displaymath}
	x_t := (Y_{t-1}', \ldots, Y_{t-b_L}')' \in \mathbb{R}^{qb_L},
	\qquad
	x_t^e := R_{b_M,b_L}' x_t \in \mathbb{R}^{3q},
\end{displaymath}
so that the rows of $X_e$ are given by $(x_{b_L+1}^e, \ldots, x_T^e)'$. Since $b_M$ and $b_L$ are fixed, $R_{b_M,b_L}$ is deterministic with $\|R_{b_M,b_L}\| = \mathcal{O}(1)$. Thus, $x_t^e$ is a fixed finite-dimensional linear transformation of the VAR($b_L$) lag vector $x_t$. Now define the companion process
\begin{displaymath}
	V_t := (Y_t', x_t')' \in \mathbb{R}^{q(b_L+1)}.
\end{displaymath}
By Assumptions~\ref{assum:VHAR_stability} and \ref{assum:vhar_simple}, the normalized companion process satisfies the weak-dependence, tail, and covariance conditions required in Assumption~3(i)--(v) of \citet{gudhmundsson2021detecting}. Since the regressor-response pair $(Y_t', (x_t^e)')'$ is a fixed linear transformation of $V_t$, these conditions hold for $(Y_t', (x_t^e)')'$ up to constants. Therefore, Lemma~A.2 of \citet{gudhmundsson2021detecting} applies directly to the restricted design.

Define the sample and population covariance matrices:
\begin{displaymath}
	\widehat{\Sigma}_{X_e} = \frac{1}{N}X_e'X_e, \qquad \widehat{\Sigma}_{X_e Z} = \frac{1}{N}X_e'Z,
\end{displaymath}
and
\begin{displaymath}
	\Sigma_{X_e} = \mathbb{E}[x_t^e (x_t^e)'], \qquad \Sigma_{X_e Z} = \mathbb{E}[x_t^e Y_t'].
\end{displaymath}
Then Lemma~A.2 of \citet{gudhmundsson2021detecting} yields
\begin{displaymath}
	\|\widehat{\Sigma}_{X_e}-\Sigma_{X_e}\|
	=
	\mathcal{O}_{\mathbb{P}}\!\left(\sqrt{\frac{q}{N}}\,\|\Sigma_{X_e}\|\right),
	\qquad
	\|\widehat{\Sigma}_{X_e Z}-\Sigma_{X_e Z}\|
	=
	\mathcal{O}_{\mathbb{P}}\!\left(\sqrt{\frac{q}{N}}\right),
\end{displaymath}
with high probability. In particular, $\lambda_{\min}(\widehat{\Sigma}_{X_e})$ is bounded away from zero with high probability, so the restricted OLS estimator
\begin{displaymath}
	\widehat{B}
	=
	\widehat{\Sigma}_{X_e}^{-1}\widehat{\Sigma}_{X_e Z}
\end{displaymath}
exists with high probability. By the same inverse-perturbation decomposition as in Lemma~\ref{lem:estimation},
\begin{displaymath}
	\|\widehat{B}-B\|
	=
	\mathcal{O}_{\mathbb{P}}\!\left(
	\sqrt{\frac{q}{N}}
	\,
	\|\Sigma_{X_e}\|
	\,
	\|\Sigma_{X_e}^{-1}\|
	\,
	\|B\|
	\right).
\end{displaymath}
The covariance factors are bounded because $x_t^e$ is a fixed-dimensional linear transformation of the stable companion state under Assumptions~\ref{assum:VHAR_stability}(i)--(ii) and \ref{assum:vhar_simple}. Using $\|B\|=\mathcal{O}(1)$, we have
\begin{displaymath}
	\|\widehat{B}-B\|
	=
	\mathcal{O}_{\mathbb{P}}\!\left(\sqrt{\frac{q}{N}}\right).
\end{displaymath}
Since the stacked horizon-specific coefficient matrix has $s=3$ blocks, this is equivalent to
\begin{displaymath}
	\|\widehat{\Phi}^{\mathrm{ols}}-\Phi\|
	=
	\mathcal{O}_{\mathbb{P}}\!\left(\sqrt{\frac{sq}{N}}\right). \tag*{\ensuremath{\blacksquare}}
\end{displaymath}

\medskip
\noindent{\textbf{Proof of Lemma \ref{lem:estimation_lasso}}.} Proposition~4.1 of \citet{basu2015regularized} states that any solution of the penalized VAR problem obeys the $\ell_2$-error bound once the sample Gram matrix satisfies the RE condition in equation~(4.7) and the score satisfies the DB in equation~(4.8) therein. Under Assumption~\ref{assum:pvar_lasso}, these conditions hold for the stacked ScBM-PVAR regression, so
\begin{displaymath}
	\|\hat\alpha_P-\alpha_P^*\|_2
	\le C\sqrt{\ell_P}\,\lambda_{N,P}
\end{displaymath}
with probability tending to one, for a constant $C>0$ depending only on the RE constants. If one wishes to verify Assumption~\ref{assum:pvar_lasso} probabilistically under a Gaussian stable stacked VAR, Proposition~4.2 and Proposition~4.3 of \citet{basu2015regularized} apply directly with process dimension $p=qs$ and lag order $d=p^*$.

Let $L_P$ denote the deterministic linear map extracting the seasonal lag blocks from $(\Psi_1^*,\ldots,\Psi_{p^*}^*)$ and aggregates them into the co-clustering matrix $\Phi=(\Phi_1',\ldots,\Phi_s')'$. For fixed $s$ and $p^*$, $\|L_P\|=\mathcal{O}(1)$. Therefore,
\begin{displaymath}
	\|\hat\Phi^{\mathrm{lasso}}-\Phi\|
	= \|L_P(\hat\alpha_P-\alpha_P^*)\|
	\le \|L_P\|\,\|\hat\alpha_P-\alpha_P^*\|_2
	= \mathcal{O}_{\mathbb P}\!\left(\sqrt{\ell_P}\,\lambda_{N,P}\right),
\end{displaymath}
which is the stated rate with $\lambda_{N,P}\asymp\sqrt{\log(qs)/N}$. \qed


\medskip
\noindent{\textbf{Proof of Lemma \ref{cor:estimation_vhar_lasso}}.} 
The restricted generalized ScBM-VHAR estimator in \eqref{e:vhar_lasso} is an ordinary lasso problem with design matrix $(I_q\otimes X_e)$. Proposition~4.1 of \citet{basu2015regularized} therefore yields the $\ell_2$-error bound once the RE condition and the corresponding DB hold for the restricted design; these are the direct analogues of equation (4.7) and equation (4.8) in \citet{basu2015regularized}.

For the fixed-horizon generalized VHAR design, Proposition~1 of \citet{baek2021sparse} verifies exactly these ingredients for the classical $(5,22)$ specification: their equation (2.10) defines the transformed Gram matrix and score vector, and their equation (2.11) gives the RE and DB by reducing generalized VHAR to a stable high-order VAR and then invoking Propositions~4.2 and 4.3 of \citet{basu2015regularized}. Replacing the specific aggregation matrix in their equation (2.12) by our deterministic $R_{b_M,b_L}$ changes only constants because $b_M$ and $b_L$ are fixed. Hence, the same argument gives
\begin{displaymath}
	\|\widehat{\beta}_V-\beta_V^*\|_2
	\le
	C\sqrt{\ell_V}\lambda_{N,V}
\end{displaymath}
for some constant $C$ with probability tending to one.

If $\widehat{B}$ denotes the matrix obtained by reshaping $\widehat{\beta}_V$, then the stacked horizon-specific matrix used in the co-clustering step is exactly $B$, so
\begin{displaymath}
	\|\widehat{\Phi}^{\,\mathrm{lasso}}-\Phi\|
	=
	\|\widehat{B}-B\|
	\le
	\|\widehat{\beta}_V-\beta_V^*\|_2
	= \mathcal{O}_{\mathbb{P}}\!\left(\sqrt{\ell_V}\lambda_{N,V}\right).
\end{displaymath}
From $s=3$ and $\lambda_{N,V}\asymp \sqrt{\log(sq)/N}$, we have the stated rate.  \qed

\subsection{Proof for Theorem in Section \ref{sse:consistency_random}}

\noindent{\textbf{Proof of Theorem \ref{lem:population}}.} We first establish the bound for one generic normalized weighted-adjacency block and then pass to the stacked matrices. For a generic block, write
\begin{displaymath}
    \bar C := (\mathcal O^\tau)^{-1/2}A'(\mathcal P^\tau)^{-1/2},
    \qquad
    \widetilde \Phi := (\mathcal O^\tau)^{-1/2}\mathcal A'(\mathcal P^\tau)^{-1/2},
    \qquad
    \Phi := (O^\tau)^{-1/2}A'(P^\tau)^{-1/2},
\end{displaymath}
where $A$ is the random weighted adjacency matrix, $\mathcal A = \mathbb E[A]$ is its population counterpart, $O$ and $P$ are the realized out- and in-degree diagonal matrices, $\mathcal O$ and $\mathcal P$ are the corresponding population degree matrices, and
\begin{displaymath}
    O^\tau = O + \tau I_q,
    \qquad
    P^\tau = P + \tau I_q,
    \qquad
    \mathcal O^\tau = \mathcal O + \tau I_q,
    \qquad
    \mathcal P^\tau = \mathcal P + \tau I_q.
\end{displaymath}
Then
\begin{equation}\label{e:laplacian_bound}
    \|\widetilde \Phi - \Phi\|
    \leq
    \|\widetilde \Phi - \bar C\| + \|\bar C - \Phi\|.
\end{equation}
Let
\begin{displaymath}
    \delta := \min\!\left\{\min_i [\mathcal O]_{ii},\min_j [\mathcal P]_{jj}\right\},
    \qquad
    c_2 := \delta + \tau,
\end{displaymath}
and define the deterministic support bound
\begin{displaymath}
    w_{\max} := \max_{m=1,\ldots,s} w_{\max, m},    \qquad
    c_1 := w_{\max}^2.
\end{displaymath}
By the bounded-support assumption in Section~2, both the realized weights and their expectations are bounded by $w_{\max}$ in absolute value.

\smallskip
\noindent\emph{Step 1 (population-degree normalization).} Define
\begin{displaymath}
    X_{m,\ell}(i,j)
    :=
    \frac{\left([w_m]_{ij}\tilde A_{m,\ell}(i,j)-[\omega_m]_{ij}\tilde{\mathcal A}_{m,\ell}(i,j)\right)E_{m,\ell}(i,j)}{\sqrt{([\mathcal O_m]_{ii}+\tau_m)([\mathcal P_m]_{jj}+\tau_m)}},
\end{displaymath}
so that $\bar C^S - \widetilde \Phi^S = \sum_{m,\ell,i,j} X_{m,\ell}(i,j)$ on the stacked symmetrized space. Each summand is centered. Moreover,
\begin{displaymath}
    \|X_{m,\ell}(i,j)\|
    \leq
    \frac{|[w_m]_{ij}|\tilde A_{m,\ell}(i,j)+|[\omega_m]_{ij}|\tilde{\mathcal A}_{m,\ell}(i,j)}{\sqrt{([\mathcal O_m]_{ii}+\tau_m)([\mathcal P_m]_{jj}+\tau_m)}}
    \leq
    \frac{2w_{\max}}{\delta_m+\tau_m}
    \leq
    \frac{2\sqrt{c_1}}{c_2}
    =: L.
\end{displaymath}
Since $\tilde A_{m,\ell}(i,j)\in\{0,1\}$ and $|[w_m]_{ij}|\leq w_{\max}$,
\begin{displaymath}
    \operatorname{Var}\!\left([w_m]_{ij}\tilde A_{m,\ell}(i,j)\right)
    \leq
    \mathbb E\!\left([w_m]_{ij}^2\tilde A_{m,\ell}(i,j)\right)
    \leq
    c_1\,\tilde{\mathcal A}_{m,\ell}(i,j).
\end{displaymath}
Therefore the matrix-variance parameter satisfies
\begin{align*}
    v^2
    &: =
    \Big\|\sum_{m,\ell,i,j}\mathbb E\!\left[X_{m,\ell}(i,j)^2\right]\Big\| \\
    &\leq
    \max_{m=1,\ldots,s}
    \frac{c_1}{\delta_m+\tau_m}
    \max\left\{
        \max_i \sum_j \frac{\tilde{\mathcal A}_{m,\ell}(i,j)}{[\mathcal O_m]_{ii}+\tau_m},
        \max_j \sum_i \frac{\tilde{\mathcal A}_{m,\ell}(i,j)}{[\mathcal P_m]_{jj}+\tau_m}
    \right\}
    \leq
    \frac{c_1}{c_2},
\end{align*}
since each normalized row and column sum is at most $1$. By the matrix Bernstein inequality \citep[Theorem 5.4.1]{vershynin2018high},
\begin{displaymath}
    \mathbb P\!\left(
        \|\bar C^S-\widetilde \Phi^S\|\ge a
    \right)
    \leq
    4sq\exp\!\left(
        -\frac{a^2}{2v^2+\tfrac23La}
    \right).
\end{displaymath}
Take
\begin{displaymath}
    a = M\sqrt{\frac{c_1\log(8sq/\epsilon)}{c_2}}
\end{displaymath}
with $M>0$ sufficiently large. Since Assumption~\ref{assum_degree_simple} gives $c_2\asymp sqB_{sq}$ and $B_{sq}=\Omega(\log(sq)/(sq))$, we have $c_2\gtrsim \log(sq)$, so the linear term $\tfrac23La$ is of smaller order than $a^2$ and the exponent is bounded above by $-\log(8sq/\epsilon)$ for $M$ large enough. Hence
\begin{equation}\label{e:first_inequality}
    \mathbb P\!\left(
        \|\bar C^S-\widetilde \Phi^S\|\le a
    \right)
    \ge 1-\frac{\epsilon}{2}.
\end{equation}

\smallskip
\noindent\emph{Step 2 (degree perturbation).} Write
\begin{displaymath}
    \Phi^S = (D_\tau^S)^{-1/2}A^S(D_\tau^S)^{-1/2},
    \qquad
    \bar C^S = (\widetilde D_\tau^S)^{-1/2}A^S(\widetilde D_\tau^S)^{-1/2},
\end{displaymath}
where $A^S$ is the symmetrized weighted adjacency matrix, $D_\tau^S$ is built from the realized degree matrices, and $\widetilde D_\tau^S$ from the population degree matrices. Set
\begin{displaymath}
    M := (D_\tau^S)^{-1/2}(\widetilde D_\tau^S)^{1/2},
    \qquad
    \Delta := M-I_{2sq}.
\end{displaymath}
Then
\begin{displaymath}
    \Phi^S-\bar C^S
    =
    M\bar C^S M' - \bar C^S
    =
    \Delta\bar C^S M' + \bar C^S\Delta'.
\end{displaymath}
Each diagonal entry of $D^S-\widetilde D^S$ is a sum of independent bounded weighted-edge contributions. Moreover,
\begin{displaymath}
    \operatorname{Var}([D^S]_{ii})
    \leq
    c_1\,([\widetilde D^S]_{ii}+\tau),
\end{displaymath}
so scalar Bernstein's inequality and a union bound over the $2sq$ diagonal entries yield
\begin{displaymath}
    \max_i |[D^S]_{ii}-[\widetilde D^S]_{ii}|
    =
    \mathcal O_{\mathbb P}\!\left(\sqrt{c_1 c_2\log(sq)}\right).
\end{displaymath}
Since $\widetilde D_\tau^S$ and $D_\tau^S$ are diagonal, this implies
\begin{displaymath}
    \|\Delta\|
    =
    \max_i\left|\sqrt{\frac{[\widetilde D_\tau^S]_{ii}}{[D_\tau^S]_{ii}}}-1\right|
    =
    \mathcal O_{\mathbb P}\!\left(\sqrt{\frac{\log(sq)}{c_2}}\right)
    =
    \mathcal O_{\mathbb P}\!\left(\sqrt{\frac{\log(sq)}{sqB_{sq}}}\right).
\end{displaymath}
Moreover, $\|\bar C^S\|=\mathcal O_{\mathbb P}(1)$, because $\bar C^S=\widetilde \Phi^S+(\bar C^S-\widetilde \Phi^S)$ and Step 1 gives $\|\bar C^S-\widetilde \Phi^S\|=o_{\mathbb P}(1)$ while $\|\widetilde \Phi^S\|=\mathcal O(1)$. Therefore
\begin{displaymath}
    \|\Phi^S-\bar C^S\|
    \leq
    \|\Delta\|\,\|\bar C^S\|\,(1+\|\Delta\|)+\|\bar C^S\|\,\|\Delta\|
    =
    \mathcal O_{\mathbb P}\!\left(\sqrt{\frac{\log(sq)}{sqB_{sq}}}\right).
\end{displaymath}
Hence
\begin{equation}\label{e:second_inequality}
    \mathbb P\!\left(
        \|\Phi^S-\bar C^S\|
        \leq
        C a
    \right)
    \ge 1-\frac{\epsilon}{2}
\end{equation}
for a finite constant $C>0$.

\smallskip
\noindent\emph{Step 3 (stacked symmetrized rate).} Combining \eqref{e:laplacian_bound}, \eqref{e:first_inequality}, and \eqref{e:second_inequality} by a union bound,
\begin{displaymath}
    \|\widetilde \Phi^S-\Phi^S\|
    =
    \mathcal O_{\mathbb P}\!\left(\sqrt{\frac{\log(sq)}{sqB_{sq}}}\right).
\end{displaymath}

\smallskip
\noindent\emph{Step 4 (unstacked rate).} Since $\widetilde \Phi^S$ and $\Phi^S$ are symmetric dilations of $\widetilde \Phi$ and $\Phi$, respectively,
\begin{displaymath}
    \|\widetilde \Phi^S-\Phi^S\|
    =
    \left\|
        \begin{pmatrix}
            0 & \widetilde \Phi-\Phi\\
            (\widetilde \Phi-\Phi)' & 0
        \end{pmatrix}
    \right\|
    =
    \|\widetilde \Phi-\Phi\|.
\end{displaymath}
Therefore
\begin{displaymath}
    \|\widetilde \Phi-\Phi\|
    =
    \mathcal O_{\mathbb P}\!\left(\sqrt{\frac{\log(sq)}{sqB_{sq}}}\right),
\end{displaymath}
which proves the theorem for one generic normalized block. Since the numbers of seasons/horizons and lag terms are fixed, finite aggregation over seasonal or horizon-specific blocks changes only the constant. Hence the same rate holds for both the ScBM-PVAR and generalized ScBM-VHAR models. \qed

\subsection{Proof for Theorems in Section \ref{sse:misclassification} and technical perturbation Lemmas}

Throughout, all singular-vector bounds are understood after suitable orthogonal rotations.

\begin{proposition}[A simple sufficient setting for Assumption~4.8]
	\label{prop:ass48-sufficient}
	Fix one seasonal or horizon-specific population matrix entering the co-clustering step, and suppress its season/horizon index. Let
	\[
	\Omega_q
	=
	\widetilde{\Theta}_{y,q}^{1/2} Y_q B_q Z_q^\prime \widetilde{\Theta}_{z,q}^{1/2},
	\]
	where $Y_q \in \{0,1\}^{q \times K_y}$ and $Z_q \in \{0,1\}^{q \times K_z}$ are the sending and receiving membership matrices, $K_y$ and $K_z$ are fixed positive integers, $B_q \in \mathbb{R}^{K_y \times K_z}$ is a block matrix, and $\widetilde{\Theta}_{y,q}$ and $\widetilde{\Theta}_{z,q}$ are diagonal matrices of effective degree-correction weights induced by the regularized population normalization. Define
	\[
	D_{y,q} := Y_q^\prime \widetilde{\Theta}_{y,q} Y_q,
	\qquad
	D_{z,q} := Z_q^\prime \widetilde{\Theta}_{z,q} Z_q,
	\qquad
	M_q := D_{y,q}^{1/2} B_q D_{z,q}^{1/2}.
	\]
	Let $K := \min(K_y,K_z)$. Assume the following.
	
	\begin{itemize}
		\item[(i)] $K_y$ and $K_z$ are fixed.
		
		\item[(ii)] (Balanced communities) There exist constants $0 < c_1 < C_1 < \infty$ such that for every sending block $k$ and receiving block $\ell$,
		\[
		c_1 q \le n_{y,k,q} \le C_1 q,
		\qquad
		c_1 q \le n_{z,\ell,q} \le C_1 q,
		\]
		where $n_{y,k,q}$ and $n_{z,\ell,q}$ denote the corresponding block sizes.
		
		\item[(iii)] (Bounded effective degree correction) There exist constants $0 < c_2 < C_2 < \infty$ such that
		\[
		c_2 q^{-1} \le [\widetilde{\Theta}_{y,q}]_{ii} \le C_2 q^{-1},
		\qquad
		c_2 q^{-1} \le [\widetilde{\Theta}_{z,q}]_{jj} \le C_2 q^{-1},
		\]
		for all $i,j=1,\dots,q$.
		
		\item[(iv)] $D_{y,q} \to D_{y,*}$ and $D_{z,q} \to D_{z,*}$ for some positive diagonal matrices $D_{y,*}$ and $D_{z,*}$.
		
		\item[(v)] $M_q \to M_*$ for some matrix $M_* \in \mathbb{R}^{K_y \times K_z}$ of rank $K$, and the $K$ positive singular values of $M_*$ are simple.
		
		\item[(vi)] (Bounded block separation) The row-normalized rows of $D_{y,*}^{-1/2} R_{y,*}$ are pairwise distinct, and the row-normalized rows of $D_{z,*}^{-1/2} R_{z,*}$ are pairwise distinct, where $R_{y,*}$ and $R_{z,*}$ denote the left and right singular-vector matrices of $M_*$.
	\end{itemize}
	
	Then Assumption~4.8 holds for $\Omega_q$ for all sufficiently large $q$.
\end{proposition}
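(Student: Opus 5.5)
The plan is to reduce the singular structure of $\Omega_q$ to that of the fixed-size matrix $M_q$. Define $\widetilde Y_q := \widetilde\Theta_{y,q}^{1/2}Y_qD_{y,q}^{-1/2}$ and $\widetilde Z_q := \widetilde\Theta_{z,q}^{1/2}Z_qD_{z,q}^{-1/2}$. Since the rows of $Y_q$ are distinct standard basis vectors, $\widetilde Y_q'\widetilde Y_q = D_{y,q}^{-1/2}Y_q'\widetilde\Theta_{y,q}Y_qD_{y,q}^{-1/2}=I_{K_y}$, and likewise $\widetilde Z_q'\widetilde Z_q=I_{K_z}$, so both matrices have orthonormal columns. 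Moreover $\Omega_q=\widetilde Y_q M_q\widetilde Z_q'$. Hence, if $M_q=R_{y,q}\Sigma_qR_{z,q}'$ is a (thin, rank-$K$ truncated) SVD, then the left and right singular vectors of $\Omega_q$ are $X_L=\widetilde Y_qR_{y,q}$ and $X_R=\widetilde Z_qR_{z,q}$, and the nonzero singular values coincide with those of $M_q$. I will verify (i)--(iii) of Assumption~\ref{assum:cluster_regular} from this representation, using a limiting argument $q\to\infty$.

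\emph{Eigengap (i).} By (iii) and (ii), $[D_{y,q}]_{kk}=\sum_{i\in\mathcal V_k}[\widetilde\Theta_{y,q}]_{ii}\in[c_1c_2,C_1C_2]$, and the same holds for $D_{z,q}$. This is consistent with (iv). By (v) and Weyl's inequality, the singular values of $M_q$ converge to those of $M_*$. The $K$ positive ones are simple and bounded away from $0$, and the rest vanish because $\Omega_q$ has rank at most $K$. So for large $q$ the gap between the $K$-th singular value and the remainder, and between consecutive retained ones, is at least half its limiting value. That gives $\delta_0$. Simplicity also makes $R_{y,q}\to R_{y,*}$ and $R_{z,q}\to R_{z,*}$ up to sign by Davis--Kahan/Wedin, with signs fixed columnwise.

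\emph{Row norms (ii) and separation (iii).} For node $i$ in sending block $k$, $[X_L]_{i\cdot}=[\widetilde\Theta_{y,q}]_{ii}^{1/2}[D_{y,q}^{-1/2}R_{y,q}]_{k\cdot}$. Because $R_{y,q}$ is a $K_y\times K$ matrix with orthonormal columns, its $k$-th row could in principle vanish. Here I use (vi): the rows of $D_{y,*}^{-1/2}R_{y,*}$ are assumed row-normalizable, hence nonzero. Since the dimension is fixed, their minimum norm is a positive constant, and by convergence it stays above half that constant for large $q$. Combining with $[\widetilde\Theta_{y,q}]_{ii}\ge c_2q^{-1}$ yields $\|[X_L]_{i\cdot}\|\ge c_0q^{-1/2}$. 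After row normalization the degree factor cancels, so $[X_L^*]_{i\cdot}$ equals the normalized $k$-th row of $D_{y,q}^{-1/2}R_{y,q}$. Thus only $K_y$ distinct centroids occur, and they converge to the normalized rows of $D_{y,*}^{-1/2}R_{y,*}$, which are pairwise distinct by (vi). Taking $\delta_*$ to be half the minimum limiting distance gives separation for large $q$. The right side is identical with $Z$. Uniformity over the finitely many seasons/horizons follows by taking minima of constants.

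The main obstacle is the continuity of singular vectors. Sign/rotation ambiguity must be handled, which is why simplicity in (v) is needed. Row norms and normalized distances are invariant under column sign flips, so after a columnwise sign alignment, Wedin's theorem gives $\|R_{y,q}-R_{y,*}\|\lesssim\|M_q-M_*\|/\mathrm{gap}\to0$. When $K<K_y$, care is needed that only the top-$K$ block is retained. This is fine because the remaining singular values are zero both for $M_q$ and in the limit.
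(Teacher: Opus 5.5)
Your proposal is correct and follows essentially the same route as the paper: the orthonormal factorization $\Omega_q=\widetilde Y_q M_q\widetilde Z_q'$ (the paper's $Q_{y,q},Q_{z,q}$), transfer of the SVD of $M_q$, and verification of the eigengap, row-norm, and centroid-separation conditions by passing to the limit $M_*$. Your row-norm step uses the nonvanishing rows implicit in (vi), and this is more careful than the paper's appeal to the orthonormal columns of $R_{y,q}$, because orthonormal columns alone do not exclude zero rows when $K<K_y$ or $K<K_z$.
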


\begin{proof}
	Write
	\[
	Q_{y,q} := \widetilde{\Theta}_{y,q}^{1/2} Y_q D_{y,q}^{-1/2},
	\qquad
	Q_{z,q} := \widetilde{\Theta}_{z,q}^{1/2} Z_q D_{z,q}^{-1/2}.
	\]
	Since $D_{y,q} = Y_q^\prime \widetilde{\Theta}_{y,q} Y_q$ and $D_{z,q} = Z_q^\prime \widetilde{\Theta}_{z,q} Z_q$, we have
	\[
	Q_{y,q}^\prime Q_{y,q} = I_{K_y},
	\qquad
	Q_{z,q}^\prime Q_{z,q} = I_{K_z},
	\]
	in turn, $\Omega_q	=	Q_{y,q} M_q Q_{z,q}^\prime$.
	Let the singular value decomposition of $M_q$ be
	\[
	M_q = R_{y,q} \Sigma_q R_{z,q}^\prime,
	\]
	where $R_{y,q} \in \mathbb{R}^{K_y \times K}$ and $R_{z,q} \in \mathbb{R}^{K_z \times K}$ have orthonormal columns, and
	\[
	\Sigma_q = \operatorname{diag}\{\sigma_1(M_q),\dots,\sigma_K(M_q)\}.
	\]
	Then the left and right singular-vector matrices of $\Omega_q$ corresponding to its $K$ nonzero singular values can be written as
	\[
	X_{L,q} = Q_{y,q} R_{y,q},
	\qquad
	X_{R,q} = Q_{z,q} R_{z,q}.
	\]
	
	We first verify Assumption~4.8(i). Because $Q_{y,q}$ and $Q_{z,q}$ have orthonormal columns, the nonzero singular values of $\Omega_q$ are exactly those of $M_q$. By Assumption~(v), $M_q \to M_*$ and $M_*$ has rank $K$ with simple positive singular values. Hence
	\[
	\sigma_K(M_q) \to \sigma_K(M_*) > 0,
	\]
	and, for all sufficiently large $q$,
	\[
	\sigma_K(\Omega_q) = \sigma_K(M_q) \ge \frac{1}{2}\sigma_K(M_*),
	\qquad
	\sigma_{K+1}(\Omega_q)=0.
	\]
	Thus the retained singular subspace is separated from the remainder by a positive eigengap bounded away from zero.
	
	We next verify Assumption~4.8(ii). Let $i$ belong to sending block $k$. Since the $i$th row of $Y_q$ equals $e_k^\prime$, the $i$th row of $X_{L,q}$ is
	\[
	[X_{L,q}]_{i\cdot}
	=
	[\widetilde{\Theta}_{y,q}]_{ii}^{1/2} e_k^\prime D_{y,q}^{-1/2} R_{y,q}.
	\]
	Because $R_{y,q}$ has orthonormal columns and $D_{y,q} \to D_{y,*}$ with positive diagonal limit, the matrix $D_{y,q}^{-1/2} R_{y,q}$ has row norms bounded above and below by positive constants uniformly in $q$. Assumption~(iii) gives
	\[
	[\widetilde{\Theta}_{y,q}]_{ii}^{1/2} \asymp q^{-1/2},
	\]
	uniformly in $i$. Hence
	\[
	\|[X_{L,q}]_{i\cdot}\| \asymp q^{-1/2},
	\]
	uniformly over all $i$. The same argument applied to $X_{R,q}$ yields
	\[
	\|[X_{R,q}]_{j\cdot}\| \asymp q^{-1/2},
	\]
	uniformly over all $j$. Therefore Assumption~4.8(ii) holds.
	
	Finally, we verify Assumption~4.8(iii). For any $i$ in sending block $k$,
	\[
	[X_{L,q}^*]_{i\cdot}
	=
	\frac{e_k^\prime D_{y,q}^{-1/2} R_{y,q}}
	{\|e_k^\prime D_{y,q}^{-1/2} R_{y,q}\|},
	\]
	so all nodes in the same sending block share the same row-normalized population centroid. Since $M_q \to M_*$ and the positive singular values of $M_*$ are simple, the singular-vector matrices $R_{y,q}$ and $R_{z,q}$ converge, up to the usual columnwise sign convention, to $R_{y,*}$ and $R_{z,*}$. Together with $D_{y,q} \to D_{y,*}$ and $D_{z,q} \to D_{z,*}$, this implies that the row-normalized sending and receiving centroids converge to the row-normalized rows of $D_{y,*}^{-1/2} R_{y,*}$ and $D_{z,*}^{-1/2} R_{z,*}$, respectively. By Assumption~(vi), these limiting centroids are pairwise distinct. Since the numbers of communities are fixed, the minimum pairwise distance between distinct centroids is therefore bounded away from zero for all sufficiently large $q$.
	Hence Assumption~4.8(i)--(iii) holds for $\Omega_q$.
\end{proof}

\begin{lemma}\label{lem:app_pisces_projector}
	Let $J\in\mathbb N$ be fixed, and for $j=1,\ldots,J$ let $U_j$ and $\hat U_j$ be orthogonal projection matrices of the same rank $K_j$. For $0\le \alpha_N < 1/(4\sqrt{2}+2)$, define the empirical PisCES map $\widehat{\mathcal G}_{\alpha_N}$ and the population PisCES map $\mathcal G_{\alpha_N}$ by
	\begin{align*}
		[\widehat{\mathcal G}_{\alpha_N}(V)]_1
		&:= \Pi_{K_1}(\hat U_1+\alpha_N V_2), \\
		[\widehat{\mathcal G}_{\alpha_N}(V)]_j
		&:= \Pi_{K_j}(\alpha_N V_{j-1}+\hat U_j+\alpha_N V_{j+1}), \qquad j=2,\ldots,J-1, \\
		[\widehat{\mathcal G}_{\alpha_N}(V)]_J
		&:= \Pi_{K_J}(\alpha_N V_{J-1}+\hat U_J),
	\end{align*}
	and analogously with $\hat U_j$ replaced by $U_j$. Let $\bar U_j$ be the empirical PisCES fixed point and let $U_j^{(\alpha_N)}$ be the corresponding population fixed point. Then there exists a constant $C>0$, independent of $N$, such that
	\begin{displaymath}
		\sum_{j=1}^J \|\bar U_j-U_j\|_F
		\leq C\left(
		\sum_{j=1}^J \|\hat U_j-U_j\|_F
		+ \alpha_N
		\right).
	\end{displaymath}
\end{lemma}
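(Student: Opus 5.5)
The bound splits along the triangle inequality
\[
\|\bar U_j-U_j\|_F\le \|\bar U_j-U_j^{(\alpha_N)}\|_F+\|U_j^{(\alpha_N)}-U_j\|_F .
\]
The first term is a \emph{stability} statement for the fixed point of the PisCES map with respect to its input projectors $\hat U_j$ versus $U_j$. The second is a \emph{bias} statement, saying that smoothing moves the population projectors by at most $O(\alpha_N)$. Both rely on the Davis--Kahan $\sin\Theta$ theorem in Frobenius form. For symmetric $M,\tilde M$ with an eigengap $g$ after the $K$-th eigenvalue of $M$,
\[
\|\Pi_K(M)-\Pi_K(\tilde M)\|_F\le \frac{2\sqrt2\,\|M-\tilde M\|_F}{g}.
\]
The constant $4\sqrt2+2$ in the admissible range for $\alpha_N$ comes from exactly this inequality, as in Theorem~1 of \citet{liu2018global}. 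I would take that contraction argument as the template.

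\emph{Step 1 (gap control).} Every argument of $\Pi_{K_j}$ in either map has the form $\hat U_j+\alpha_N(V_{j-1}+V_{j+1})$ or $U_j+\alpha_N(V_{j-1}+V_{j+1})$ with $V_{j\pm1}$ projectors. The matrix $U_j$ has eigenvalues $1$ with multiplicity $K_j$ and $0$ otherwise, and the perturbation has operator norm at most $2\alpha_N$. By Weyl's inequality, the gap between the $K_j$-th and $(K_j+1)$-th eigenvalues of any such argument is at least $1-4\alpha_N$. The same holds with $\hat U_j$ in place of $U_j$. This gap is bounded away from $0$ uniformly, since $\alpha_N<1/(4\sqrt2+2)<1/4$.

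\emph{Step 2 (contraction and stability).} Equip $J$-tuples with the norm $\|V\|:=\sum_j\|V_j\|_F$. For tuples $V,W$ of projectors, Step~1 and Davis--Kahan give
\[
\|[\widehat{\mathcal G}_{\alpha_N}(V)]_j-[\mathcal G_{\alpha_N}(W)]_j\|_F\le \frac{2\sqrt2}{1-4\alpha_N}\Bigl(\|\hat U_j-U_j\|_F+\alpha_N\|V_{j-1}-W_{j-1}\|_F+\alpha_N\|V_{j+1}-W_{j+1}\|_F\Bigr).
\]
Summing over $j$, each neighbour term appears at most twice, so
\[
\|\widehat{\mathcal G}_{\alpha_N}(V)-\mathcal G_{\alpha_N}(W)\|\le \kappa\sum_j\|\hat U_j-U_j\|_F+\kappa_\alpha\|V-W\|,
\]
with $\kappa=2\sqrt2/(1-4\alpha_N)$ and $\kappa_\alpha=4\sqrt2\alpha_N/(1-4\alpha_N)$. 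The condition $\kappa_\alpha<1$ is exactly $\alpha_N<1/(4\sqrt2+4)$. This is the main obstacle: the stated threshold $1/(4\sqrt2+2)$ requires the sharper counting of \citet{liu2018global}, where end blocks have a single neighbour and interior blocks are handled via the symmetric structure. I would reproduce their bookkeeping. Failing that, I would use the hypothesis that $\alpha_N$ is sufficiently small, which all downstream theorems already impose.

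Given $\kappa_\alpha<1$, substitute the fixed points $V=\bar U$ and $W=U^{(\alpha_N)}$ to get
\[
\|\bar U-U^{(\alpha_N)}\|\le \frac{\kappa}{1-\kappa_\alpha}\sum_j\|\hat U_j-U_j\|_F .
\]
The same contraction shows that the fixed points exist and are the limits of the iterations, by Banach's theorem on the closed set of rank-$K_j$ projector tuples.

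\emph{Step 3 (bias).} Compare $\mathcal G_{\alpha_N}$ with $\mathcal G_0$, whose unique fixed point is $(U_1,\ldots,U_J)$, because $\Pi_{K_j}(U_j)=U_j$. By Davis--Kahan with gap $1$,
\[
\|[\mathcal G_{\alpha_N}(V)]_j-U_j\|_F\le 2\sqrt2\,\alpha_N\bigl(\|V_{j-1}\|_F+\|V_{j+1}\|_F\bigr)\le 4\sqrt2\,\alpha_N\max_k\sqrt{K_k}.
\]
Applying this at $V=U^{(\alpha_N)}$ gives $\|U^{(\alpha_N)}_j-U_j\|_F=O(\alpha_N)$. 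The constant depends on the fixed $K_j$ and $J$ but not on $N$. Combining Steps~2 and~3 yields the lemma with $C=\max\{\kappa/(1-\kappa_\alpha),\,4\sqrt2 J\max_k\sqrt{K_k}\}$.
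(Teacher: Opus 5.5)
Your architecture matches the paper's: a triangle inequality through $U^{(\alpha_N)}$, a contraction giving fixed-point stability, and a separate $O(\alpha_N)$ bias term. Your bias step, which applies Davis--Kahan directly at the fixed point where $U_j$ has gap exactly $1$, is valid and slightly more direct than the paper's route via $(1-\rho(\alpha_N))^{-1}d(\mathcal G_{\alpha_N}(U),U)$.

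\textbf{The gap is in the contraction step.} As you note, your bounds give $\kappa_\alpha<1$ only for $\alpha_N<1/(4\sqrt2+4)$. So you prove the lemma on a strictly smaller range than stated. Retreating to ``$\alpha_N$ sufficiently small'' weakens the hypothesis rather than closing the gap.

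Your diagnosis of the fix is also off. The paper's neighbour counting is the same as yours: each $\|V_k-W_k\|_F$ appears at most twice. The loss is in your Step~1. The perturbation $\alpha_N(V_{j-1}+V_{j+1})$ is positive semidefinite, so the one-sided Weyl inequality gives $\lambda_{K_j}\ge\lambda_{K_j}(\hat U_j)=1$, not $1-2\alpha_N$. Meanwhile $\lambda_{K_j+1}\le 2\alpha_N$ still holds. The gap is therefore at least $1-2\alpha_N$, not $1-4\alpha_N$. With this gap, even your Davis--Kahan constant $2\sqrt2$ yields $\kappa_\alpha=4\sqrt2\alpha_N/(1-2\alpha_N)$, which is below $1$ exactly when $\alpha_N<1/(4\sqrt2+2)$. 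That is where the stated threshold comes from.

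\textbf{A second issue is uniformity of $C$.} With the constant $2\sqrt2$, $\kappa_\alpha\uparrow 1$ as $\alpha_N\uparrow 1/(4\sqrt2+2)$. Then $\kappa/(1-\kappa_\alpha)$ is unbounded over the stated range, which matters because $\alpha_N$ may vary with $N$.

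The paper avoids this with the classical mixed-gap form of Davis--Kahan:
$\|\Pi_{K_j}(A_j)-\Pi_{K_j}(B_j)\|_F\le\sqrt2\,\|A_j-B_j\|_F/(1-2\alpha_N)$.
This is legitimate because \emph{both} arguments satisfy $\lambda_{K_j}\ge 1$ and $\lambda_{K_j+1}\le 2\alpha_N$. Hence the retained eigenvalues of one are separated by $1-2\alpha_N$ from the discarded eigenvalues of the other. The same holds for your combined inequality with $\hat U_j$ versus $U_j$. This gives $\rho(\alpha_N)=2\sqrt2\alpha_N/(1-2\alpha_N)<1/2$ and $(1-2\alpha_N)^{-1}<2$ on the whole range, hence an absolute constant.

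With these two corrections to Steps~1--2, the rest of your argument goes through as written.
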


\begin{proof}
	We adapt the contraction argument in Lemma~S1.4 of the supplementary material of \citet{liu2018global}; see also Theorem~S1.2 for the Banach fixed-point step therein.
	
	\smallskip
	\noindent\emph{Step 1 (contraction).} Equip the product space of projector chains with the metric
    \begin{displaymath}
      d(U,V):=\sum_{j=1}^J \|U_j-V_j\|_F.
    \end{displaymath}
    Fix two projector chains $V=(V_j)_{j=1}^J$ and $W=(W_j)_{j=1}^J$.
    First, fix $j=2,\ldots,J-1$. Define
    \begin{displaymath}
      A_j:=\alpha_N V_{j-1}+\hat U_j+\alpha_N V_{j+1},
      \qquad
      B_j:=\alpha_N W_{j-1}+\hat U_j+\alpha_N W_{j+1}.
    \end{displaymath}
    Consider the eigenvalue bounds for $A_j$. Since $\hat U_j$ is an orthogonal projector of rank $K_j$, its eigenvalues are exactly $1$ (with multiplicity $K_j$) and $0$ (with the remaining
    multiplicity). Since $V_{j-1}$ and $V_{j+1}$ are orthogonal projectors, they are positive semidefinite (PSD). Applying Weyl's inequality to $A_j = \hat U_j + \alpha_N(V_{j-1}+V_{j+1})$ yields the following:
    \begin{itemize}
      \item The $K_j$-th eigenvalue of $\hat U_j$ is $1$, and $\alpha_N(V_{j-1}+V_{j+1})$ is PSD, so its smallest eigenvalue is $\ge 0$. Hence
        \begin{displaymath}
          \lambda_{K_j}(A_j)
          \geq \lambda_{K_j}(\hat U_j)+\lambda_{\min}\left(\alpha_N(V_{j-1}+V_{j+1})\right) \geq 1.
        \end{displaymath}
      \item The $(K_j+1)$-st eigenvalue of $\hat U_j$ is $0$. Each of $V_{j-1}$ and $V_{j+1}$ is an orthogonal projector, so $\|V_{j\pm1}\|_2=1$, giving $\lambda_{\max}(\alpha_N(V_{j-1}+V_{j+1}))\le 2\alpha_N$.
        Hence
        \begin{displaymath}
          \lambda_{K_j+1}(A_j)
          \leq \lambda_{K_j+1}(\hat U_j)+\lambda_{\max}\left(\alpha_N(V_{j-1}+V_{j+1})\right)
          \leq 2\alpha_N.
        \end{displaymath}
    \end{itemize}
    The same argument, with $W_{j\pm1}$ in place of $V_{j\pm1}$, gives identical bounds for $B_j$.
    Therefore, the eigengap separating the top-$K_j$ eigenspace from the rest satisfies
    \begin{displaymath}
      \delta_j
      := \lambda_{K_j}(A_j)-\lambda_{K_j+1}(A_j)
      \;\ge\;
      1-2\alpha_N
      \;>\;0,
    \end{displaymath}
    where the strict inequality uses $\alpha_N<1/(4\sqrt{2}+2)<1/2$. Since $A_j - B_j = \alpha_N(V_{j-1}-W_{j-1})+\alpha_N(V_{j+1}-W_{j+1})$, the Davis-Kahan theorem \citep{davis1970rotation} gives
    \begin{align*}
      \left\|[\widehat{\mathcal G}_{\alpha_N}(V)]_j
             -[\widehat{\mathcal G}_{\alpha_N}(W)]_j\right\|_F
      &\;\le\;
      \frac{\sqrt{2}\,\|A_j-B_j\|_F}{\delta_j} \\
      &\;\le\;
      \frac{\sqrt{2}\,\alpha_N}{1-2\alpha_N}
      \Bigl(\|V_{j-1}-W_{j-1}\|_F+\|V_{j+1}-W_{j+1}\|_F\Bigr).
    \end{align*}
    Next, for $j=1$, define $A_1:=\hat U_1+\alpha_N V_2$ and $B_1:=\hat U_1+\alpha_N W_2$.
    The same Weyl argument gives $\lambda_{K_1}(A_1)\ge 1$ and $\lambda_{K_1+1}(A_1)\le\alpha_N$, so the eigengap is still greater than $1-2\alpha_N$. Hence, by the Davis-Kahan Theorem,
    \begin{displaymath}
      \left\|[\widehat{\mathcal G}_{\alpha_N}(V)]_1
             -[\widehat{\mathcal G}_{\alpha_N}(W)]_1\right\|_F
      \;\le\;
      \frac{\sqrt{2}\,\alpha_N}{1-2\alpha_N}\|V_2-W_2\|_F.
    \end{displaymath}
    A similar argument is applied to $j=J$. Finally, for each $k=1,\ldots,J$, the term $\|V_k-W_k\|_F$ appears on the right-hand side as a neighbor contribution from index $j=k-1$ and from index $j=k+1$. Since every term $\|V_k-W_k\|_F$ appears at most twice in the total sum,
    \begin{displaymath}
      d\left(\widehat{\mathcal G}_{\alpha_N}(V),\widehat{\mathcal G}_{\alpha_N}(W)\right)
      \leq \frac{2\sqrt{2}\,\alpha_N}{1-2\alpha_N}
      \sum_{k=1}^J\|V_k-W_k\|_F
      = \rho(\alpha_N)\,d(V,W),
      \qquad
      \rho(\alpha_N):=\frac{2\sqrt{2}\,\alpha_N}{1-2\alpha_N}.
    \end{displaymath}
    Since $\alpha_N<1/(4\sqrt{2}+2)$, we have $\rho(\alpha_N)<1$, so $\widehat{\mathcal G}_{\alpha_N}$ is a strict contraction. The same argument is applied to $\mathcal G_{\alpha_N}$.
	
	\smallskip
	\noindent\emph{Step 2 (empirical and population fixed points).}
	Let
	$\bar U : =(\bar U_j)_{j=1}^J$ and 	$U^{(\alpha_N)} :=(U_j^{(\alpha_N)})_{j=1}^J$. Since both maps are contractions, their fixed points are unique. By the triangle inequality and the contraction bound,
	\begin{align*}
		d\bigl(\bar U,U^{(\alpha_N)}\bigr)
		&=
		d\bigl(\widehat{\mathcal G}_{\alpha_N}(\bar U),\mathcal G_{\alpha_N}(U^{(\alpha_N)})\bigr) \\
		&\le
		d\bigl(\widehat{\mathcal G}_{\alpha_N}(\bar U),\widehat{\mathcal G}_{\alpha_N}(U^{(\alpha_N)})\bigr)
		+
		d\bigl(\widehat{\mathcal G}_{\alpha_N}(U^{(\alpha_N)}),\mathcal G_{\alpha_N}(U^{(\alpha_N)})\bigr) \\
		&\le
		\rho(\alpha_N)d\bigl(\bar U,U^{(\alpha_N)}\bigr)
		+
		d\bigl(\widehat{\mathcal G}_{\alpha_N}(U^{(\alpha_N)}),\mathcal G_{\alpha_N}(U^{(\alpha_N)})\bigr).
	\end{align*}
	Hence
	\begin{displaymath}
		d\bigl(\bar U,U^{(\alpha_N)}\bigr)
		\le
		\frac{1}{1-\rho(\alpha_N)}
		d\bigl(\widehat{\mathcal G}_{\alpha_N}(U^{(\alpha_N)}),\mathcal G_{\alpha_N}(U^{(\alpha_N)})\bigr).
	\end{displaymath}
	
	\smallskip
	\noindent\emph{Step 3 (perturbation of the map itself).}
	Fix a projector chain $V=(V_1,\ldots,V_J)$. For $j=2,\ldots,J-1$,
	\begin{displaymath}
		\widehat A_j:=\alpha_N V_{j-1}+\hat U_j+\alpha_N V_{j+1},
		\qquad
		A_j:=\alpha_N V_{j-1}+U_j+\alpha_N V_{j+1}.
	\end{displaymath}
	The same eigengap argument as in Step~1 gives
	\begin{displaymath}
		\bigl\|[\widehat{\mathcal G}_{\alpha_N}(V)]_j-[\mathcal G_{\alpha_N}(V)]_j\bigr\|_F
		\le
		\frac{\sqrt{2}}{1-2\alpha_N}\|\hat U_j-U_j\|_F,
	\end{displaymath}
	and similarly to $j=1$ and $J$. Therefore
	\begin{displaymath}
		d\bigl(\widehat{\mathcal G}_{\alpha_N}(V),\mathcal G_{\alpha_N}(V)\bigr)
		\le
		\frac{\sqrt{2}}{1-2\alpha_N}
		\sum_{j=1}^J \|\hat U_j-U_j\|_F.
	\end{displaymath}
	Applying this with $V=U^{(\alpha_N)}$ in Step~2 yields
	\begin{displaymath}
		d\bigl(\bar U,U^{(\alpha_N)}\bigr)
		\le
		\frac{\sqrt{2}}{(1-\rho(\alpha_N))(1-2\alpha_N)}
		\sum_{j=1}^J \|\hat U_j-U_j\|_F.
	\end{displaymath}
	Since $\alpha_N$ is uniformly bounded by $1/(4\sqrt{2}+2)$, the coefficient above is bounded by an absolute constant.
	
	\smallskip
	\noindent\emph{Step 4 (deterministic smoothing bias).}
	It remains to bound the distance between the population PisCES fixed point and the raw population projector chain. Since $U^{(\alpha_N)}$ is the fixed point of $\mathcal{G}_{\alpha_N}$,
	\begin{displaymath}
		d\bigl(U^{(\alpha_N)},U\bigr)
		=
		d\bigl(\mathcal{G}_{\alpha_N}(U^{(\alpha_N)}),U\bigr)
	\end{displaymath}
	and therefore
	\begin{displaymath}
		d\bigl(U^{(\alpha_N)},U\bigr)
		\le
		d\bigl(\mathcal{G}_{\alpha_N}(U^{(\alpha_N)}),\mathcal{G}_{\alpha_N}(U)\bigr)
		+
		d\bigl(\mathcal{G}_{\alpha_N}(U),U\bigr)
		\le
		\rho(\alpha_N)d\bigl(U^{(\alpha_N)},U\bigr)
		+
		d\bigl(\mathcal{G}_{\alpha_N}(U),U\bigr).
	\end{displaymath}
	Hence
	\begin{displaymath}
		d\bigl(U^{(\alpha_N)},U\bigr)
		\le
		\frac{1}{1-\rho(\alpha_N)}\,d\bigl(\mathcal{G}_{\alpha_N}(U),U\bigr).
	\end{displaymath}
	
	For each $j=1,\ldots,J$, write
	\begin{displaymath}
		E_j := \alpha_N \sum_{j' \sim j} U_{j'},
		\qquad
		M_j := U_j + E_j,
		\qquad
		[\mathcal{G}_{\alpha_N}(U)]_j = \Pi_{K_j}(M_j),
	\end{displaymath}
	where $j' \sim j$ denotes the neighbors of $j$ in the chain. Since $U_j$ is a rank-$K_j$ orthogonal projector and $E_j$ is positive semidefinite,
	\begin{displaymath}
		\lambda_{K_j}(M_j) \ge 1,
		\qquad
		\lambda_{K_j+1}(M_j) \le 2\alpha_N,
	\end{displaymath}
	so the eigengap is at least $1-2\alpha_N$. By the Davis--Kahan theorem,
	\begin{displaymath}
		\bigl\|[\mathcal{G}_{\alpha_N}(U)]_j - U_j\bigr\|_F
		\le
		\frac{\sqrt{2}}{1-2\alpha_N}\,\|E_j\|_F.
	\end{displaymath}
	Let $K_{\max}:=\max_{1\le j\le J}K_j$. Since each $U_{j'}$ is an orthogonal projector and each index has at most two neighbors,
	\begin{displaymath}
		\|E_j\|_F
		\le
		\alpha_N \sum_{j' \sim j}\|U_{j'}\|_F
		\le
		2\sqrt{K_{\max}}\,\alpha_N.
	\end{displaymath}
	Therefore
	\begin{displaymath}
		d\bigl(\mathcal{G}_{\alpha_N}(U),U\bigr)
		=
		\sum_{j=1}^J \bigl\|[\mathcal{G}_{\alpha_N}(U)]_j-U_j\bigr\|_F
		\le
		C\,\alpha_N
	\end{displaymath}
	for some constant $C>0$. Since $\alpha_N \in [0,1/(4\sqrt{2}+2))$, both $(1-2\alpha_N)^{-1}$ and $(1-\rho(\alpha_N))^{-1}$ are uniformly bounded, so
	\begin{displaymath}
		d\bigl(U^{(\alpha_N)},U\bigr)\le C\,\alpha_N.
	\end{displaymath}
	Combining this with Step~3 proves the lemma. 	
\end{proof}

\begin{lemma}\label{lem:app_rownorm}
	Let $X,\bar X\in\mathbb R^{q\times K}$ have orthonormal columns, and let $R$ be orthogonal. Suppose the population row norms satisfy
	\begin{displaymath}
		\min_{1\leq i\leq q}\|[X]_{i\cdot}\| \geq c_0 q^{-1/2}
	\end{displaymath}
	for some constant $c_0>0$, and assume the perturbation is small enough that
	\begin{displaymath}
		\|\bar X-XR\|_F \leq \frac{c_0}{2}q^{-1/2}.
	\end{displaymath}
	Let $X^*$ and $\bar X^*$ denote the corresponding row-normalized matrices. Then
	\begin{displaymath}
		\|\bar X^*-X^*R\|_F
		\leq C q^{1/2}\|\bar X-XR\|_F,
	\end{displaymath}
	for a constant $C>0$ depending only on $c_0$.
\end{lemma}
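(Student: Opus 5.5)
The argument is row by row. Set $x_i:=[X]_{i\cdot}$, $r_i:=[XR]_{i\cdot}=x_iR$ and $\bar x_i:=[\bar X]_{i\cdot}$. Because $R$ is orthogonal, $\|r_i\|=\|x_i\|\ge c_0q^{-1/2}$. Row normalization also commutes with right multiplication by $R$, so $[X^*R]_{i\cdot}=x_iR/\|x_iR\|=r_i/\|r_i\|$.

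The bound therefore reduces to the elementary inequality for nonzero vectors $u,v$:
\begin{displaymath}
\Bigl\|\frac{u}{\|u\|}-\frac{v}{\|v\|}\Bigr\|\le \frac{2\|u-v\|}{\|v\|}.
\end{displaymath}
To prove it, add and subtract $u/\|v\|$. This gives
\begin{displaymath}
\Bigl\|\frac{u}{\|u\|}-\frac{v}{\|v\|}\Bigr\|\le \|u\|\Bigl|\frac{1}{\|u\|}-\frac{1}{\|v\|}\Bigr|+\frac{\|u-v\|}{\|v\|}=\frac{\bigl|\|v\|-\|u\|\bigr|}{\|v\|}+\frac{\|u-v\|}{\|v\|},
\end{displaymath}
and the reverse triangle inequality finishes it. We need $u=\bar x_i\neq 0$ for $\bar X^*$ to be defined. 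This follows from the smallness hypothesis: $\|\bar x_i-r_i\|\le\|\bar X-XR\|_F\le \tfrac{c_0}{2}q^{-1/2}$, so $\|\bar x_i\|\ge \tfrac{c_0}{2}q^{-1/2}>0$ for every $i$.

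Apply the inequality with $u=\bar x_i$ and $v=r_i$, using the lower bound $\|r_i\|\ge c_0q^{-1/2}$:
\begin{displaymath}
\|[\bar X^*]_{i\cdot}-[X^*R]_{i\cdot}\|\le \frac{2}{c_0}\,q^{1/2}\,\|\bar x_i-r_i\|.
\end{displaymath}
Squaring and summing over $i=1,\ldots,q$ gives $\|\bar X^*-X^*R\|_F\le (2/c_0)\,q^{1/2}\|\bar X-XR\|_F$. So the claim holds with $C=2/c_0$.

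None of these steps is a real obstacle. The only point needing care is well-definedness, namely that no estimated row vanishes. The hypothesis $\|\bar X-XR\|_F\le \tfrac{c_0}{2}q^{-1/2}$ is exactly what rules this out. The orthonormal-column assumption is not needed beyond what is stated.
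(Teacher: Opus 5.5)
Your proof is correct and follows the paper's argument: both apply the row-wise inequality $\|a/\|a\|-b/\|b\|\|\le 2\|a-b\|/\|b\|$ with $a=[\bar X]_{i\cdot}$ and $b=[XR]_{i\cdot}$, then square and sum to obtain $C=2/c_0$. You also prove that inequality and note that the smallness hypothesis is needed only to keep each row of $\bar X$ nonzero, a point the paper leaves implicit.
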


\begin{proof}
	The inequality
	\begin{displaymath}
		\left\|\frac{a}{\|a\|}-\frac{b}{\|b\|}\right\|
		\leq \frac{2\|a-b\|}{\|b\|}
	\end{displaymath}
	holds whenever $\|a-b\|\leq \|b\|/2$. Apply this row by row with $a=[\bar X]_{i\cdot}$ and $b=[XR]_{i\cdot}$. The assumed Frobenius bound implies
	\begin{displaymath}
		\|[\bar X]_{i\cdot}-[XR]_{i\cdot}\|
		\leq \|\bar X-XR\|_F
		\leq \frac{1}{2}\min_{1\leq i\leq q}\|[X]_{i\cdot}\|,
	\end{displaymath}
	so that the row-wise inequality is applicable. Therefore,
	\begin{align*}
		\|\bar X^*-X^*R\|_F^2
		&= \sum_{i=1}^q
		\left\| \frac{[\bar X]_{i\cdot}}{\|[\bar X]_{i\cdot}\|} - \frac{[XR]_{i\cdot}}{\|[XR]_{i\cdot}\|} \right\|^2 \\
		&\leq \sum_{i=1}^q \frac{4\|[\bar X-XR]_{i\cdot}\|^2}{\|[XR]_{i\cdot}\|^2}
		\leq \frac{4q}{c_0^2}
		\sum_{i=1}^q \|[\bar X-XR]_{i\cdot}\|^2
		= \frac{4q}{c_0^2}\|\bar X-XR\|_F^2,
	\end{align*}
	which proves the claim. 
\end{proof}

\begin{lemma}\label{lem:app_misclass}
	Let $X^*\in\mathbb R^{q\times K}$ be a population row-normalized singular-vector matrix whose rows take only finitely many distinct values, and let $\bar X^*\in\mathbb R^{q\times K}$ be its empirical counterpart. Suppose the population centroids are separated by at least $\delta_*>0$. If $\bar{\mathcal{N}}$ denotes the set of misclustered nodes obtained by applying $k$-means to the rows of $\bar X^*$, then there exists a constant $C>0$, depending only on $\delta_*$, such that
	\begin{displaymath}
		\frac{|\bar{\mathcal{N}}|}{q}
		\leq \frac{C}{q}\|\bar X^*-X^*R\|_F^2
	\end{displaymath}
	for a suitable orthogonal matrix $R$.
\end{lemma}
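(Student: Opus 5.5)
The argument is the standard counting bound for $k$-means on perturbed spectral embeddings, as in \citet{rohe2016co} and \citet{qin2013regularized}. We compare the $k$-means output with the population centroids through the $k$-means objective and then count the nodes whose estimated centroid lies far from their true one.

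Let $\{c_1,\ldots,c_K\}$ be the distinct rows of $X^*R$, so they are pairwise separated by at least $\delta_*$. Let $\bar C\in\mathbb R^{q\times K}$ be the $k$-means solution, whose $i$th row $\bar c_i$ is the centroid assigned to node $i$. The population assignment $X^*R$ uses at most $K$ distinct rows, so it is feasible for the $k$-means problem. Optimality of $\bar C$ therefore gives
\begin{displaymath}
\|\bar C-\bar X^*\|_F\le \|X^*R-\bar X^*\|_F.
\end{displaymath}
By the triangle inequality,
\begin{displaymath}
\|\bar C-X^*R\|_F\le 2\|\bar X^*-X^*R\|_F.
\end{displaymath}
If $k$-means is only solved to within a $(1+\epsilon)$ approximation, the constant changes but nothing else does.

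Define the set of "bad" nodes
\begin{displaymath}
\mathcal S=\{i:\|\bar c_i-[X^*R]_{i\cdot}\|\ge \delta_*/2\}.
\end{displaymath}
Markov's inequality on row norms gives
\begin{displaymath}
|\mathcal S|(\delta_*/2)^2\le \|\bar C-X^*R\|_F^2\le 4\|\bar X^*-X^*R\|_F^2,
\end{displaymath}
so $|\mathcal S|\le 16\delta_*^{-2}\|\bar X^*-X^*R\|_F^2$.

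It remains to show that nodes outside $\mathcal S$ are correctly clustered up to a label permutation. Let $T_k$ be the good nodes of true community $k$. Every $i\in T_k$ has $\bar c_i$ strictly inside the ball of radius $\delta_*/2$ around $c_k$, and these balls are disjoint by the separation condition. Hence nodes in $T_k$ and $T_l$ with $k\ne l$ receive different estimated centroids. If $T_k$ is nonempty for every $k$, then each of the $K$ balls contains at least one of the at most $K$ estimated centroids. Consequently each ball contains exactly one, and good nodes in the same community share a centroid. This defines a bijection between estimated and true labels under which every node outside $\mathcal S$ is correct. Therefore $\bar{\mathcal N}\subseteq\mathcal S$ and $|\bar{\mathcal N}|/q\le (C/q)\|\bar X^*-X^*R\|_F^2$ with $C=16\delta_*^{-2}$.

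The main obstacle is the requirement that every $T_k$ be nonempty. This is the usual condition $|\mathcal S|<\min_k n_k$, and it is not stated in the lemma. I would handle it in one of two ways. The first is to note that if $|\mathcal S|\ge \min_k n_k$, and block sizes are of order $q$ (as in Proposition~\ref{prop:ass48-sufficient}), then the right-hand side is already of order one. Enlarging $C$ then makes the bound trivial, since $|\bar{\mathcal N}|\le q$. The second is to state the lemma under the implicit small-perturbation regime, which is what Theorems~\ref{thm:missclassification_pvar} and \ref{thm:missclassification_vhar} use. I would adopt the first option so that $C$ depends only on $\delta_*$ together with the block-balance constant.
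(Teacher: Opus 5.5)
Your argument is the paper's argument: feasibility of the population centroid matrix gives $\|\bar C-X^*R\|_F\le 2\|\bar X^*-X^*R\|_F$, and you then count the nodes whose assigned centroid lies at least $\delta_*/2$ from their true centroid. The paper simply asserts that every misclustered node is such a node, whereas your disjoint-ball/pigeonhole step proves it, and your caveat is a genuine correction rather than pedantry: without a condition guaranteeing that each community keeps a good node (block balance as in Proposition~\ref{prop:ass48-sufficient}, or $\|\bar X^*-X^*R\|_F^2<\delta_*^2\min_k n_k/16$), $k$-means can absorb a tiny community and split a large one, so the constant $C$ cannot depend on $\delta_*$ alone.
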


\begin{proof}
	This is the standard $k$-means reduction used in spectral clustering. For completeness, we include the short argument; compare Appendix~B of \citet{rohe2011spectral}, Theorem~4.4 of \citet{qin2013regularized}, and Section~F of \citet{rohe2016co}. Let $C^*$ and $\bar C$ denote the population and empirical centroid matrices, respectively. Because $C^*$ is feasible for the empirical $k$-means problem,
	\begin{displaymath}
		\|\bar X^*-\bar C\|_F \leq \|\bar X^*-C^*\|_F.
	\end{displaymath}
	Hence
	\begin{displaymath}
		\|\bar C-C^*\|_F
		\leq \|\bar C-\bar X^*\|_F+\|\bar X^*-C^*\|_F
		\leq 2\|\bar X^*-C^*\|_F
		= 2\|\bar X^*-X^*R\|_F.
	\end{displaymath}
	If node $i$ is misclustered, then the centroid assigned to it must be at least $\delta_*/2$ away from the correct population centroid. Therefore,
	\begin{displaymath}
		|\bar{\mathcal{N}}|\frac{\delta_*^2}{4}
		\leq \sum_{i\in\bar N}\|\bar C_{i\cdot}-C^*_{i\cdot}\|^2
		\leq \|\bar C-C^*\|_F^2
		\leq 4\|\bar X^*-X^*R\|_F^2,
	\end{displaymath}
	which yields the result. 
\end{proof}
\begin{lemma}\label{lem:app_pvar_sv}
	Suppose that the season-stacked estimator satisfies
	\begin{displaymath}
		\|\widetilde{\Phi}_S-\widehat{\Phi}_S\|
		= \mathcal{O}_{\mathbb{P}}\left(
		\sqrt{\eta_{P,N}}
		+  \sqrt{\frac{\log(sq)}{sqB_{sq}}}
		\right).
	\end{displaymath}
	Let $\alpha_N$ satisfy $\alpha_N\in(0,1/(4\sqrt{2}+2))$ and $\alpha_N$ is sufficiently small. Then there exist orthogonal matrices $R_{m,L}$ and $R_{m,R}$, $m=1,\ldots,s$, such that
	\begin{displaymath}
		\sum_{m=1}^s \sum_{\bullet \in \{L, R\}} \|\bar X_{m,\bullet}^* - X_{m,\bullet}^*R_{m,\bullet}\|_F
		\leq \mathcal{O}_{\mathbb{P}}\left(
		\sqrt{sq\,\eta_{P,N}} + \sqrt{\frac{\log(sq)}{B_{sq}}} + \alpha_N \right).
	\end{displaymath}
	Moreover, when $\alpha_N=0$, the same bound holds with $\bar X_{m,\bullet}^*$ replaced by $\hat X_{m,\bullet}^*$.
\end{lemma}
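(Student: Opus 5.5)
The bound follows by chaining four steps: operator-norm perturbation of each seasonal matrix, a singular-subspace perturbation bound, Lemma~\ref{lem:app_pisces_projector} to pass through PisCES, and Lemma~\ref{lem:app_rownorm} to handle row normalization. Throughout, write $\varepsilon_N:=\sqrt{\eta_{P,N}}+\sqrt{\log(sq)/(sqB_{sq})}$ for the assumed rate of $\|\widetilde\Phi_S-\widehat\Phi_S\|$.

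\emph{Step 1: seasonal singular subspaces.} Each seasonal block $\hat\Phi_m-\widetilde\Phi_m$ is a submatrix of the stacked difference, so $\|\hat\Phi_m-\widetilde\Phi_m\|\le\|\widehat\Phi_S-\widetilde\Phi_S\|=\mathcal O_{\mathbb P}(\varepsilon_N)$ for every $m$. By Assumption~\ref{assum:cluster_regular}(i) the eigengap is at least $\delta_0$. Wedin's $\sin\Theta$ theorem (Davis--Kahan for singular subspaces) then gives
\[
\|\hat U_{m,\bullet}-U_{m,\bullet}\|_F\le \frac{2\sqrt{2K_{\max}}}{\delta_0}\,\|\hat\Phi_m-\widetilde\Phi_m\|=\mathcal O_{\mathbb P}(\varepsilon_N),\qquad \bullet\in\{L,R\}.
\]
The Frobenius factor costs only $\sqrt{K_{\max}}$, and $K_{\max}$ is fixed.

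\emph{Step 2: PisCES smoothing.} Apply Lemma~\ref{lem:app_pisces_projector} separately to the left chain and the right chain, with $J=s$, using the rank constraints imposed in the algorithm. This yields $\sum_m\|\bar U_{m,\bullet}-U_{m,\bullet}\|_F\le C(s\varepsilon_N+\alpha_N)$. To pass from projectors to bases, use the standard fact that for rank-$K$ projectors there is an orthogonal $R$ with $\|\bar X-XR\|_F\le\|\bar U-U\|_F$ (up to a constant, via Procrustes/$\sin\Theta$). This gives $\|\bar X_{m,\bullet}-X_{m,\bullet}R_{m,\bullet}\|_F=\mathcal O_{\mathbb P}(\varepsilon_N+\alpha_N)$, since $s$ is fixed.

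\emph{Step 3: row normalization.} Invoke Lemma~\ref{lem:app_rownorm} using the row-norm lower bound in Assumption~\ref{assum:cluster_regular}(ii). This multiplies the bound by $q^{1/2}$:
\[
\|\bar X^*_{m,\bullet}-X^*_{m,\bullet}R_{m,\bullet}\|_F\le Cq^{1/2}(\varepsilon_N+\alpha_N).
\]
Since $q^{1/2}\sqrt{\eta_{P,N}}\le\sqrt{sq\,\eta_{P,N}}$ and $q^{1/2}\sqrt{\log(sq)/(sqB_{sq})}\asymp\sqrt{\log(sq)/B_{sq}}$ for fixed $s$, this reproduces the first two terms of the stated bound. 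Summing over the finitely many $m$ and $\bullet$ changes only constants. The case $\alpha_N=0$ follows identically, because the PisCES map is then the identity and $\bar X=\hat X$.

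\emph{Main obstacle.} Two points need care here.

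First, Lemma~\ref{lem:app_rownorm} requires the smallness condition $\|\bar X-XR\|_F\le (c_0/2)q^{-1/2}$. This fails unless $\varepsilon_N+\alpha_N=o(q^{-1/2})$, and that is exactly where the hypothesis that $\alpha_N$ is ``sufficiently small'' enters. I would handle it by restricting to the high-probability event on which this smallness holds, which requires $q\,\varepsilon_N^2\to0$ and $q^{1/2}\alpha_N$ small. Off that event, the left side is trivially bounded by $2\sqrt{2q}$, so a separate argument is needed there.

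Second, the $\alpha_N$ term does not pick up the $q^{1/2}$ factor in the stated bound. I would try to avoid this factor by applying the row-normalization argument to the smoothing bias $U^{(\alpha_N)}-U$ directly. Because the population chain is block-structured, the population fixed point should preserve row-norm comparability, so the $\alpha_N$ contribution could be carried at order $\alpha_N$ without inflation. If that fails, I would instead absorb $q^{1/2}\alpha_N$ into the smallness assumption on $\alpha_N$, so that it is $\mathcal O(\alpha_N)$ up to constants under $\alpha_N\lesssim q^{-1/2}$.
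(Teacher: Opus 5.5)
\textbf{Verdict: genuine gap, confined to the $\alpha_N$ term.} Your estimation terms are correct and more carefully tracked than the paper's. The $\alpha_N$ term is the gap, and it cannot be closed because that part of the statement is too strong.

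\textbf{Comparison with the paper.} You use the same ingredients as the paper: per-season subspace perturbation, Lemma~\ref{lem:app_pisces_projector} on the left and right chains, Procrustes, and Lemma~\ref{lem:app_rownorm}. Your bookkeeping of the estimation terms is the correct one. In its Davis--Kahan step the paper uses $\|\widehat S_m-\widetilde S_m\|_F\le\sqrt{2q}\,\|\widehat\Phi_m-\widetilde\Phi_m\|$, which spends a factor $\sqrt q$ at the subspace level. It then drops the $\sqrt q$ that Lemma~\ref{lem:app_rownorm} produces. For the two estimation terms these slips offset, which is why the paper's rate agrees with yours. Your rank-$K$ Frobenius bound in Wedin's theorem puts the single $\sqrt q$ where it belongs.

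\textbf{Your first obstacle is moot.} For all nonzero $a,b$,
$\|a/\|a\|-b/\|b\|\|\le \bigl|\|b\|-\|a\|\bigr|/\|b\|+\|a-b\|/\|b\|\le 2\|a-b\|/\|b\|$,
and the case $a=0$ is trivial. So the conclusion of Lemma~\ref{lem:app_rownorm} holds without its smallness hypothesis. You do not need to restrict to an event on which $q\varepsilon_N^2\to0$.

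\textbf{The gap.} Your chain delivers $q^{1/2}\alpha_N$, and neither remedy removes the factor.
\begin{itemize}
\item Your fallback is an arithmetic slip: $\alpha_N\lesssim q^{-1/2}$ gives $q^{1/2}\alpha_N=\mathcal{O}(1)$, not $\mathcal{O}(\alpha_N)$.
\item Your first remedy fails for a structural reason. The smoothing bias does not merely rescale rows; it rotates row directions. When $U_{m-1,L}+U_{m+1,L}$ does not commute with $U_{m,L}$ (the generic case when communities split or merge), consider nodes that share a block at season $m$ but lie in different blocks at a neighbouring season. They receive different first-order corrections, so their smoothed rows differ in direction by an angle of order $\alpha_N$, while their rows in $X^*_{m,L}R$ coincide.
\end{itemize}

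\textbf{Counterexample.} Take blocks $A,B,C$ of size $q/3$, with left partitions $\{A\cup B,C\}$ and $\{A,B\cup C\}$ in two adjacent seasons. The non-constant direction of the first season's smoothed subspace tilts by about $(\sqrt3/4)\alpha_N$ toward the second season's subspace, which separates the rows of $A$ from those of $B$. The population PisCES fixed point alone therefore already gives $\min_R\|X^{(\alpha_N)*}_{1,L}-X^*_{1,L}R\|_F\gtrsim\sqrt q\,\alpha_N$. If $q\to\infty$ and $\varepsilon_N=o(\alpha_N)$, the stated bound $\sqrt q\,\varepsilon_N+\alpha_N$ cannot hold. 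No argument can supply the missing step.

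\textbf{The paper has the same hole, unacknowledged.} It bounds $\sum\|\bar U_{m,\bullet}-U_{m,\bullet}\|_F$ by $\mathcal{O}_{\mathbb{P}}(\cdots+\alpha_N)$ and then invokes Lemma~\ref{lem:app_rownorm} without recording the resulting factor $\sqrt q$.

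\textbf{What to prove instead.} State and prove the version with $\sqrt{sq}\,\alpha_N$ in place of $\alpha_N$; your argument gives this as it stands. This is exactly what Theorem~\ref{thm:missclassification_pvar} needs, since $(sq)^{-1}(\sqrt{sq}\,\alpha_N)^2=\alpha_N^2$, so the misclassification bound there is unaffected. The $\alpha_N=0$ case and the first two terms of your argument need no change.
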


\begin{proof}
	Let $\widehat S_m$ and $\widetilde S_m$ denote the symmetric dilations of $\widehat\Phi_m$ and $\widetilde\Phi_m$ respectively. Since  $\widehat\Phi_S = \operatorname{diag}(\widehat\Phi_1,\ldots,\widehat\Phi_s)$ and similarly for $\widetilde\Phi_S$, the difference of the stacked
	dilations is itself block-diagonal:
	\begin{displaymath}
		\operatorname{diag}(\widehat S_1,\ldots,\widehat S_s) - \operatorname{diag}(\widetilde S_1,\ldots,\widetilde S_s)
		= \operatorname{diag}(\widehat S_1-\widetilde S_1,\ldots,\widehat S_s-\widetilde S_s).
	\end{displaymath}
	This implies that
	\begin{displaymath}
		\left\| \operatorname{diag}(\widehat S_1,\ldots,\widehat S_s) - \operatorname{diag}(\widetilde S_1,\ldots,\widetilde S_s)
		\right\|
		= \max_{1\leq m\leq s}\|\widehat S_m-\widetilde S_m\|
		= \max_{1\leq m\leq s}\|\widehat\Phi_m-\widetilde\Phi_m\|
		\leq \|\widetilde\Phi_S-\widehat\Phi_S\|,
	\end{displaymath}
	where the last inequality holds since each block
	$\widehat\Phi_m-\widetilde\Phi_m$ appears as a submatrix of $\widehat\Phi_S-\widetilde\Phi_S$. Let
	$\delta>0$ denote the minimum eigengap of $\widetilde S_m$ across $m=1,\ldots,s$, which is bounded away from zero by Assumption~\ref{assum:cluster_regular}(i). Applying the Davis-Kahan theorem \citep{davis1970rotation} to each block separately gives
	\begin{displaymath}
		\|\sin\Theta(\widehat E_m,\widetilde E_m)\|_F
		\leq \frac{\sqrt{2}}{\delta}\|\widehat S_m-\widetilde S_m\|_F,
		\qquad m=1,\ldots,s,
	\end{displaymath}
	where $\widehat E_m$ and $\widetilde E_m$ are the top eigenspaces of $\widehat S_m$ and $\widetilde S_m$. Using $\|\widehat S_m-\widetilde S_m\|_F\leq\sqrt{2}\|\widehat\Phi_m -\widetilde\Phi_m\|_F\leq\sqrt{2q}\|\widehat\Phi_m-\widetilde\Phi_m\|$ and $\|\widehat\Phi_m-\widetilde\Phi_m\|\leq\|\widetilde\Phi_S-\widehat\Phi_S\|$, summing over $m$ yields
	\begin{displaymath}
		\sum_{m=1}^s\|\sin\Theta(\widehat E_m,\widetilde E_m)\|_F
		\;\le\;
		\frac{2\sqrt{sq}}{\delta}\|\widetilde\Phi_S-\widehat\Phi_S\| 
	\end{displaymath}
	since each dilation $\widehat S_m$ is a $2q\times 2q$ symmetric. Applying the orthogonal Procrustes bound of
	\citet{schonemann1966generalized} to convert the $\sin\Theta$ distance into the Frobenius distance with optimal rotation,
	\begin{displaymath}
		\sum_{m=1}^s \sum_{\bullet \in \{L, R\}} \|\hat X_{m,\bullet}-X_{m,\bullet}R_{m,\bullet}\|_F
		\leq  C\sqrt{sq}\,\|\widetilde\Phi_S-\widehat\Phi_S\|
	\end{displaymath}
	for a constant $C>0$. From the assumption on $\|\widetilde\Phi_S-\widehat\Phi_S\|$, we have
	\begin{equation}\label{eq:sv_bound}
		\sum_{m=1}^s \sum_{\bullet \in \{L, R\}} \|\hat X_{m,\bullet}-X_{m,\bullet}R_{m,\bullet}\|_F
		\leq \mathcal{O}_{\mathbb{P}}\left(
		\sqrt{sq\,\eta_{P,N}} + \sqrt{\frac{\log(sq)}{B_{sq}}}\right).
	\end{equation}
	By Assumption~\ref{assum:cluster_regular}(ii), $\min_i\|[X_{m,\bullet}]_{i\cdot}\|\ge c_0 q^{-1/2}$ is satisfied for $X_{m,L}$ and $X_{m,R}$. Then \eqref{eq:sv_bound} implies $\|\hat X_{m,\bullet}-X_{m,\bullet}R_{m,\bullet}\|_F=o(q^{-1/2})$ with high probability. By applying Lemma \ref{lem:app_rownorm} to each $(m,L)$ and $(m,R)$ pair and summing over $m$, we have
	\begin{align}
		& \sum_{m=1}^s \sum_{\bullet \in \{L, R\}} \|\hat X_{m,\bullet}^*-X_{m,\bullet}^*R_{m,\bullet}\|_F \nonumber\\
		& \leq C \sqrt{q}
		\sum_{m=1}^s \sum_{\bullet \in \{L, R\}} \|\hat X_{m,\bullet}-X_{m,\bullet}R_{m,\bullet}\|_F
		\leq \mathcal{O}_{\mathbb{P}}\bigg( \sqrt{sq\,\eta_{P,N}} + \sqrt{\frac{\log(sq)}{B_{sq}}} \bigg). \label{eq:rownorm_bound}
	\end{align}
	Write $\hat X_{m,L} = X_{m,L}R_{m,L}+E_{m,L}$, where $E_{m,L}:=\hat X_{m,L}-X_{m,L}R_{m,L}$. This gives
	\begin{displaymath}
		\hat U_{m,L}-U_{m,L}
		= E_{m,L}(X_{m,L}R_{m,L})' + (X_{m,L}R_{m,L})E_{m,L}' + E_{m,L}E_{m,L}'.
	\end{displaymath}
	Taking Frobenius norms and using $\|X_{m,L}R_{m,L}\|_2=1$,
	\begin{displaymath}
		\|\hat U_{m,L}-U_{m,L}\|_F
		\leq 2\|E_{m,L}\|_F + \|E_{m,L}\|_F^2
		\leq C\|E_{m,L}\|_F,
	\end{displaymath}
	where the last equality holds since $\|E_{m,L}\|_F=o(1)$ with high probability. The identical argument applies to $\hat U_{m,R}$. Summing over the $2s$ normed terms yields
	\begin{align}
		& \sum_{m=1}^s \sum_{\bullet \in \{L, R\}} \|\hat U_{m,\bullet}-U_{m,\bullet}\|_F \nonumber\\
		& \leq C \sum_{m=1}^s \sum_{\bullet \in \{L, R\}} \|\hat X_{m,\bullet}-X_{m,\bullet}R_{m,\bullet}\|_F 
		\leq \mathcal{O}_{\mathbb{P}}\left( \sqrt{sq\,\eta_{P,N}} + \sqrt{\frac{\log(sq)}{B_{sq}}} \right). \label{eq:proj_bound}
	\end{align}
	If Lemma~\ref{lem:app_pisces_projector} is applied to left and right projector chains with \eqref{eq:proj_bound},
	\begin{displaymath}
		\sum_{m=1}^s \sum_{\bullet \in \{L, R\}} \|\bar U_{m,\bullet}-U_{m,\bullet}\|_F
		\leq \mathcal{O}_{\mathbb{P}}\!\left(
		\sqrt{sq\,\eta_{P,N}} + \sqrt{\frac{\log(sq)}{B_{sq}}} + \alpha_N \right).
	\end{displaymath}
	Finally, by Lemma~\ref{lem:app_rownorm} to $\bar X_{m,\bullet}$ and $U_{m,\bullet}$ whose conditions are verified identically to $\hat X_{m,\bullet}$, we have
	\begin{displaymath}
		\sum_{m=1}^s \sum_{\bullet \in \{L, R\}} \|\bar X_{m,\bullet}^*-X_{m,\bullet}^*R_{m,\bullet}\|_F
		\leq \mathcal{O}_{\mathbb{P}}\!\left(
		\sqrt{sq\,\eta_{P,N}} + \sqrt{\frac{\log(sq)}{B_{sq}}} + \alpha_N \right).
	\end{displaymath}
	When $\alpha_N=0$, the PisCES smoothing acts as the identity map, so that $[\widehat{\mathcal{G}}_0(V)]_j =
	\Pi_{K_j}(\hat U_j)=\hat U_j$ implies $\bar U_{m,\bullet}=\hat U_{m,\bullet}$. Hence, it reduces the result to the bound of \eqref{eq:rownorm_bound}. 
\end{proof}

\begin{lemma}\label{lem:app_vhar_sv}
	Let $\widehat{\Phi}_{\mathrm{sym}}$ and $\widetilde{\Phi}_{\mathrm{sym}}$ denote the block matrices formed from the estimated and population horizon-specific autoregressive matrices in the generalized ScBM-VHAR model. Suppose that
	\begin{displaymath}
		\|\widetilde{\Phi}_{\mathrm{sym}}-\widehat{\Phi}_{\mathrm{sym}}\|
		=
		\mathcal{O}_{\mathbb{P}}\!\left(
		\sqrt{\eta_{V,N}}
		+
		\sqrt{\frac{\log(sq)}{sqB_{sq}}}
		\right),
		\qquad s=3.
	\end{displaymath}
	Let $\alpha_N$ satisfy $\alpha_N\in(0,1/(4\sqrt{2}+2))$ and $\alpha_N$ is sufficiently small. Then there exist orthogonal matrices $R_{(h),L}$ and $R_{(h),R}$, $h\in\{S,M,L\}$, such that
	\begin{displaymath}
		\sum_{h\in\{S,M,L\}} \sum_{\bullet \in \{L, R\}}
		\|\bar X_{(h),\bullet}^* - X_{(h),\bullet}^*R_{(h),\bullet}\|_F
		\le
		\mathcal{O}_{\mathbb{P}}\!\left(
		\sqrt{sq\,\eta_{V,N}}
		+
		\sqrt{\frac{\log(sq)}{B_{sq}}}
		+
		\alpha_N
		\right).
	\end{displaymath}
	Moreover, when $\alpha_N=0$, the same bound holds with $\bar X_{(h),\bullet}^*$ replaced by $\hat X_{(h),\bullet}^*$.
\end{lemma}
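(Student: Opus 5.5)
The argument follows the proof of Lemma~\ref{lem:app_pvar_sv} essentially line by line, with the $s$ seasons replaced by the three horizons $h\in\{L,M,S\}$. The only genuinely new point is how the projector chain fed into Lemma~\ref{lem:app_pisces_projector} is arranged.

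\emph{Step 1: perturbation of each horizon block.} Write $\widehat S_{(h)}$ and $\widetilde S_{(h)}$ for the symmetric dilations of $\widehat\Phi_{(h)}'$ and $\widetilde\Phi_{(h)}'$. The difference of the block-diagonal stacks $\operatorname{diag}(\widehat S_{(h)})$ and $\operatorname{diag}(\widetilde S_{(h)})$ has operator norm
\[
\max_h\|\widehat\Phi_{(h)}-\widetilde\Phi_{(h)}\|,
\]
and this is bounded by $\|\widetilde\Phi_{\mathrm{sym}}-\widehat\Phi_{\mathrm{sym}}\|$, since each block is a submatrix. Assumption~\ref{assum:cluster_regular}(i) gives an eigengap of at least $\delta_0$ for every horizon. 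Davis--Kahan applied block by block, together with $\|\cdot\|_F\le\sqrt{2q}\,\|\cdot\|$ on $2q\times 2q$ dilations and a Procrustes rotation, then yields
\[
\sum_{h,\bullet}\|\hat X_{(h),\bullet}-X_{(h),\bullet}R_{(h),\bullet}\|_F\le C\sqrt{sq}\,\|\widetilde\Phi_{\mathrm{sym}}-\widehat\Phi_{\mathrm{sym}}\|
\]
with $s=3$. Substituting the hypothesis gives the rate $\sqrt{sq\,\eta_{V,N}}+\sqrt{\log(sq)/B_{sq}}$, up to the harmless $\sqrt{s}$ factor. The hypothesis can be fed by Lemmas~\ref{cor:estimation_vhar}, \ref{cor:estimation_vhar_lasso} and Theorem~\ref{lem:population}.

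\emph{Step 2: passing to projectors.} As in the PVAR proof, I expand $\hat U-U=EX'+XE'+EE'$ to get $\|\hat U_{(h),\bullet}-U_{(h),\bullet}\|_F\le C\|\hat X_{(h),\bullet}-X_{(h),\bullet}R_{(h),\bullet}\|_F$.

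\emph{Step 3: smoothing.} The PisCES recursion in Section~\ref{sse:algorithm_co} is applied to the ordered triple $(\hat\Phi_1,\hat\Phi_2,\hat\Phi_3)=(\hat\Phi_{(L)}',\hat\Phi_{(M)}',\hat\Phi_{(S)}')$. The left and right projectors therefore form two chains of length $J=3$: $(U_{(L),L},U_{(M),L},U_{(S),L})$ and the analogous right chain. Within each chain the ranks of adjacent members need not match. This causes no difficulty, because Lemma~\ref{lem:app_pisces_projector} only requires that $\hat U_j$ and $U_j$ share the rank $K_j$ at the same position $j$, which holds by construction. The rank constraints $K_{y_{(L)}}=K_{z_{(L)}}=K_{y_{(M)}}$ and $K_{z_{(M)}}=K_{y_{(S)}}$ are needed only later, in the joint $k$-means step, not in the perturbation bound.

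Applying Lemma~\ref{lem:app_pisces_projector} with $J=3$ to each chain gives
\[
\sum_{h,\bullet}\|\bar U_{(h),\bullet}-U_{(h),\bullet}\|_F=\mathcal O_{\mathbb P}\!\left(\sqrt{sq\,\eta_{V,N}}+\sqrt{\log(sq)/B_{sq}}+\alpha_N\right).
\]
Writing $\bar U=\bar X\bar X'$, the Procrustes/$\sin\Theta$ equivalence converts this back into a bound on $\|\bar X_{(h),\bullet}-X_{(h),\bullet}R_{(h),\bullet}\|_F$ with new rotations.

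\emph{Step 4: row normalization.} Lemma~\ref{lem:app_rownorm} then transfers the bound to the row-normalized matrices, with a factor $C\sqrt q$. Its smallness condition $\|\bar X-XR\|_F\le (c_0/2)q^{-1/2}$ uses the row-norm bound of Assumption~\ref{assum:cluster_regular}(ii).

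The main obstacle is bookkeeping the $\sqrt q$ scalings consistently, in exactly the same (somewhat loose) manner as the PVAR lemma. Steps 1--3 must be read so that, after the row-normalization factor $\sqrt q$, the stated rate is what results, and the smallness condition must hold with high probability. That condition requires the unnormalized errors to be $o(q^{-1/2})$, and I would take it as the meaning of ``$\alpha_N$ sufficiently small'' together with the rates being $o(1)$, exactly as in Lemma~\ref{lem:app_pvar_sv}.

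Finally, for $\alpha_N=0$ the PisCES map is the identity on projectors, so $\bar U=\hat U$. The bound then follows directly from Steps 1 and 4, and the $\alpha_N$ term disappears.
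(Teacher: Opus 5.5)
Your route is the paper's route. The paper's proof of this lemma is only the instruction to rerun Lemma~\ref{lem:app_pvar_sv} with horizons in place of seasons. Your remarks on the $(L,M,S)$ chains and on unequal ranks of adjacent chain members are correct and supply what the paper leaves implicit: Lemma~\ref{lem:app_pisces_projector} only uses $\operatorname{rank}\hat U_j=\operatorname{rank}U_j=K_j$ and that the neighbours $V_{j\pm1}$ are projectors.

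\textbf{Gap 1: the $\sqrt q$ bookkeeping.} The bookkeeping you defer is a genuine gap, inherited from the paper, and reading the steps generously does not close it. Write $\Delta:=\widetilde\Phi_{\mathrm{sym}}-\widehat\Phi_{\mathrm{sym}}$. Your Step~1, via $\|\cdot\|_F\le\sqrt{2q}\,\|\cdot\|$, already gives $\|\hat X_{(h),\bullet}-X_{(h),\bullet}R_{(h),\bullet}\|_F\lesssim\sqrt q\,\|\Delta\|$. Step~4 then multiplies by another $\sqrt q$, so you land at $q\|\Delta\|$, a factor $\sqrt q$ above the claim.

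The fix is the rank-$K$ form of Davis--Kahan, $\|\sin\Theta\|_F\le 2\sqrt K\,\|\widehat S_{(h)}-\widetilde S_{(h)}\|/\delta_0$. With $K$ fixed, the unnormalized error is $O(\|\Delta\|)$. The single $\sqrt q$ from row normalization then gives exactly $\sqrt q\,\|\Delta\|\asymp\sqrt{sq\,\eta_{V,N}}+\sqrt{\log(sq)/B_{sq}}$.

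Your reading of ``sufficiently small'' also fails. Even after this fix, the hypothesis $\|\hat X-XR\|_F\le (c_0/2)q^{-1/2}$ of Lemma~\ref{lem:app_rownorm} requires $\|\Delta\|\lesssim q^{-1/2}$, i.e.\ $\log(sq)/B_{sq}=O(1)$. That is impossible, because $sqB_{sq}\lesssim\delta+\tau\le 2w_{\max}q$ forces $B_{sq}=O(1)$; so the right-hand side of the lemma is never $o(1)$.

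The hypothesis is, however, superfluous. The identity $a/\|a\|-b/\|b\|=(a-b)/\|b\|+a(\|a\|^{-1}-\|b\|^{-1})$ shows $\|a/\|a\|-b/\|b\|\|\le 2\|a-b\|/\|b\|$ for every nonzero $a$. Hence Lemma~\ref{lem:app_rownorm} holds unconditionally once the rows of $\hat X$ and $\bar X$ are nonzero. With these two fixes, the $\alpha_N=0$ statement follows as claimed.

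\textbf{Gap 2: the smoothing term.} This is a separate problem that no bookkeeping repairs. Lemma~\ref{lem:app_pisces_projector} bounds the bias $d(U^{(\alpha_N)},U)$ by $C\alpha_N$ in Frobenius norm on the projectors, and this bound does not shrink with $q$. It is the generic size when adjacent horizons carry different partitions, since each row of the smoothed population basis is then perturbed by a relative amount of order $\alpha_N$. Passing through $\bar X$ and Lemma~\ref{lem:app_rownorm} therefore yields $\sqrt q\,\alpha_N$, not $\alpha_N$.

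So neither your argument nor the paper's establishes the stated bound. What the repaired argument actually proves is
\[
\sum_{h\in\{S,M,L\}}\sum_{\bullet\in\{L,R\}}\|\bar X^*_{(h),\bullet}-X^*_{(h),\bullet}R_{(h),\bullet}\|_F=\mathcal O_{\mathbb P}\Bigl(\sqrt{sq\,\eta_{V,N}}+\sqrt{\log(sq)/B_{sq}}+\sqrt q\,\alpha_N\Bigr).
\]
After squaring and dividing by $sq$, this still yields the $\alpha_N^2$ term in Theorem~\ref{thm:missclassification_vhar}.
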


\begin{proof}
	Repeat the proof of Lemma~\ref{lem:app_pvar_sv} with seasons $m=1,\ldots,s$ replaced by horizons $h\in\{S,M,L\}$ and with $s=3$ fixed. The generalized ScBM-VHAR model uses the restricted VAR($b_L$) representation, but once the operator-norm bound for $\widetilde{\Phi}_{\mathrm{sym}}-\widehat{\Phi}_{\mathrm{sym}}$ is available, the rest of the arguments are identical. 
\end{proof}

\noindent{\textbf{Proof of Theorem \ref{thm:missclassification_pvar}.}} Combining Lemmas \ref{lem:estimation} and \ref{lem:estimation_lasso} with Theorem \ref{lem:population} and the definition of $\eta_{P,N}$, we obtain
\begin{displaymath}
	\|\widetilde{\Phi}_S-\widehat{\Phi}_S\|
	= \mathcal{O}_{\mathbb{P}}\!\left(
	\sqrt{\eta_{P,N}} + \sqrt{\frac{\log(sq)}{sqB_{sq}}} \right)
\end{displaymath}
for both the OLS and entrywise-lasso first-stage estimators. Consequently, Lemma~\ref{lem:app_pvar_sv} establishes the stated PisCES-smoothed singular-vector bound.

It remains to translate this singular-vector bound into a misclassification bound. Let $\bar{\mathcal{N}}_m^y$ and $\bar{\mathcal{N}}_m^z$ denote the sets of misclustered sending and receiving nodes, respectively, at season $m$, as determined by the PisCES-smoothed singular vectors. Applying Lemma~\ref{lem:app_misclass} to both sets of nodes and averaging over $m=1,\ldots,s$ yields
\begin{displaymath}
	\frac{1}{sq}\sum_{m=1}^s \big( |\bar{\mathcal{N}}_m^y| + |\bar{\mathcal{N}}_m^z| \big)
	\leq \frac{C}{sq}
	\sum_{m=1}^s \sum_{\bullet \in \{L, R\}}
	\|\bar X_{m,\bullet}^*-X_{m,\bullet}^*R_{m,\bullet}\|_F^2
\end{displaymath}
for some constant $C>0$ that depends only on the population centroid-separation constants in Assumption~\ref{assum:cluster_regular}(iii). Substituting the singular-vector bound from Lemma~\ref{lem:app_pvar_sv} and applying the basic inequality $(a+b+c)^2 \lesssim a^2+b^2+c^2$, we have
\begin{align*}
	\frac{1}{sq}\sum_{m=1}^s \big( |\bar{\mathcal{N}}_m^y| + |\bar{\mathcal{N}}_m^z| \big)
	\leq \mathcal{O}_{\mathbb{P}}\!\bigg(
	\frac{1}{sq} \left[ \sqrt{sq\,\eta_{P,N}} + \sqrt{\frac{\log(sq)}{B_{sq}}} + \alpha_N
	\right]^2 \bigg) \leq \mathcal{O}_{\mathbb{P}}\Big( \eta_{P,N}
	+ \frac{\log(sq)}{sqB_{sq}} + \alpha_N^2
	\Big).
\end{align*}
This establishes the claimed bound. The remark following the theorem follows immediately from the special case $\alpha_N=0$ in Lemma~\ref{lem:app_pvar_sv}. \qed

\medskip
\noindent{\textbf{Proof of Theorem \ref{thm:missclassification_vhar}.}} Combining Lemmas \ref{cor:estimation_vhar} and \ref{cor:estimation_vhar_lasso} with Theorem \ref{lem:population} and the definition of $\eta_{V,N}$, we obtain
\begin{displaymath}
	\|\widetilde{\Phi}_{\mathrm{sym}}-\widehat{\Phi}_{\mathrm{sym}}\|
	= \mathcal{O}_{\mathbb{P}}\!\left(
	\sqrt{\eta_{V,N}} + \sqrt{\frac{\log(sq)}{sqB_{sq}}} \right),
	\qquad s=3.
\end{displaymath}
Consequently, Lemma~\ref{lem:app_vhar_sv} establishes the stated PisCES-smoothed singular-vector bound.

It remains to translate this singular-vector bound into a misclassification bound. Let $\bar{\mathcal{N}}_{(h)}^y$ and $\bar{\mathcal{N}}_{(h)}^z$ denote the sets of misclustered sending and receiving nodes, respectively, at horizon $h\in\{S,M,L\}$, as determined by the PisCES-smoothed singular vectors. Applying Lemma~\ref{lem:app_misclass} to both sets of nodes and averaging over the horizons yields
\begin{displaymath}
	\frac{1}{3q}\sum_{h\in\{S,M,L\}} \big( |\bar{\mathcal{N}}_{(h)}^y| + |\bar{\mathcal{N}}_{(h)}^z| \big)
	\leq \frac{C}{3q}
	\sum_{h\in\{S,M,L\}} \sum_{\bullet \in \{L, R\}}
	\|\bar X_{(h),\bullet}^* - X_{(h),\bullet}^*R_{(h),\bullet}\|_F^2
\end{displaymath}
for some constant $C>0$ that depends only on the population centroid-separation constants in Assumption~\ref{assum:cluster_regular}(iii). Substituting the singular-vector bound from Lemma~\ref{lem:app_vhar_sv} and applying the basic inequality $(a+b+c)^2 \lesssim a^2+b^2+c^2$, we have
\begin{align*}
	\frac{1}{3q}\sum_{h\in\{S,M,L\}} \big( |\bar{\mathcal{N}}_{(h)}^y| + |\bar{\mathcal{N}}_{(h)}^z| \big)
	\leq \mathcal{O}_{\mathbb{P}}\!\bigg(
	\frac{1}{sq} \left[ \sqrt{sq\,\eta_{V,N}} +
	\sqrt{\frac{\log(sq)}{B_{sq}}} + \alpha_N \right]^2
	\bigg) 
	\leq \mathcal{O}_{\mathbb{P}}\Big( \eta_{V,N} +
	\frac{\log(sq)}{sqB_{sq}} + \alpha_N^2 \Big).
\end{align*}
This establishes the claimed bound. The remark following the theorem follows immediately from the special case $\alpha_N=0$ in Lemma~\ref{lem:app_vhar_sv}. \qed

\clearpage
\section{Additional results of finite sample performances}
\label{ap:additional_sim}

\subsection{ScBM-PVAR: OLS results}
The following tables present the PisCES-smoothed OLS estimates and evaluate their similarity to the lasso estimation results using the ARI.

\begin{table}[ht]
	\centering
	\scriptsize
	\setlength{\tabcolsep}{5pt}
	\begin{adjustbox}{max width=\textwidth}
		\centering
		\renewcommand{\arraystretch}{1.15}
		\begin{tabular}{lllcccccccc}
			\hline
			\multirow{2}{*}{Dim} & \multirow{2}{*}{Path} & \multirow{2}{*}{Type} &
			\multicolumn{4}{c}{Spectral Norm} & \multicolumn{4}{c}{Accuracy} \\
			\cline{4-11}
			& & & $T=200$ & $T=500$ & $T=1000$ & $T=2000$ & $T=200$ & $T=500$ & $T=1000$ & $T=2000$ \\
			\hline
			\multirow{6}{*}{$q=18$}
			& path1 & type1 & 1.581 & 1.004 & 0.739 & 0.546 & 0.563 & 0.972 & 1.000 & 1.000 \\
			& path1 & type2 & 1.591 & 1.020 & 0.793 & 0.589 & 0.520 & 0.803 & 0.964 & 0.995 \\
			& path2 & type1 & 1.616 & 0.973 & 0.718 & 0.528 & 0.598 & 0.789 & 0.889 & 0.942 \\
			& path2 & type2 & 1.592 & 0.989 & 0.743 & 0.542 & 0.573 & 0.665 & 0.785 & 0.835 \\
			& path3 & type1 & 1.635 & 1.020 & 0.749 & 0.542 & 0.594 & 0.728 & 0.793 & 0.822 \\
			& path3 & type2 & 1.604 & 1.000 & 0.755 & 0.563 & 0.565 & 0.651 & 0.712 & 0.726 \\
			\hline
			\multirow{6}{*}{$q=36$}
			& path1 & type1 & 4.958 & 2.518 & 1.401 & 0.814 & 0.381 & 0.476 & 0.760 & 0.960 \\
			& path1 & type2 & 4.971 & 2.573 & 1.553 & 0.931 & 0.385 & 0.428 & 0.554 & 0.764 \\
			& path2 & type1 & 4.838 & 2.485 & 1.386 & 0.796 & 0.502 & 0.565 & 0.719 & 0.891 \\
			& path2 & type2 & 4.835 & 2.497 & 1.431 & 0.852 & 0.505 & 0.534 & 0.591 & 0.668 \\
			& path3 & type1 & 4.887 & 2.533 & 1.461 & 0.852 & 0.489 & 0.551 & 0.646 & 0.722 \\
			& path3 & type2 & 4.871 & 2.509 & 1.461 & 0.857 & 0.489 & 0.517 & 0.586 & 0.676 \\
			\hline
			\multirow{6}{*}{$q=60$}
			& path1 & type1 & 8.063 & 3.763 & 2.028 & 1.101 & 0.350 & 0.362 & 0.460 & 0.726 \\
			& path1 & type2 & 8.126 & 3.850 & 2.161 & 1.192 & 0.355 & 0.362 & 0.396 & 0.486 \\
			& path2 & type1 & 8.064 & 3.803 & 2.032 & 1.076 & 0.483 & 0.491 & 0.549 & 0.720 \\
			& path2 & type2 & 7.879 & 3.756 & 2.058 & 1.146 & 0.481 & 0.490 & 0.516 & 0.588 \\
			& path3 & type1 & 7.924 & 3.799 & 2.119 & 1.211 & 0.470 & 0.476 & 0.532 & 0.644 \\
			& path3 & type2 & 8.004 & 3.806 & 2.111 & 1.191 & 0.468 & 0.473 & 0.492 & 0.549 \\
			\hline
		\end{tabular}
	\end{adjustbox}
	\caption{Spectral norm and accuracy for the PisCES-smoothed OLS results in the ScBM-PVAR simulation.}
	\label{tab:pvar_pisces_ols}
\end{table}

\begin{figure}[b!]
	\centering
	\includegraphics[width=.9\linewidth]{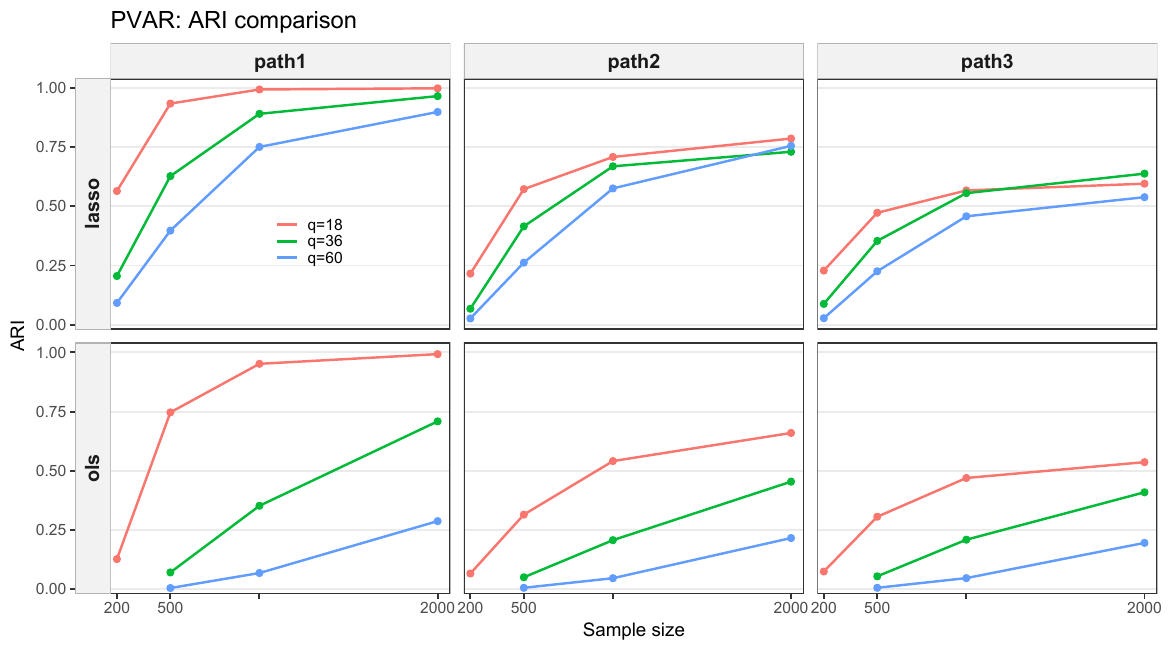}
	\vspace{-5 mm}
	\caption{ARI comparison for the ScBM-PVAR models (average of type1 and type2).} 
	\label{fig:pvar_ari_comparison}
\end{figure}

\subsection{Generalized ScBM-VHAR: OLS and ARI results}

The following tables report the PisCES-smoothed OLS estimates and the corresponding ARI comparison with lasso results. 

\begin{table}[ht!]
	\centering
	\scriptsize
	\setlength{\tabcolsep}{4.5pt}
	\begin{adjustbox}{max width=\textwidth}
		\renewcommand{\arraystretch}{1.12}
		\begin{tabular}{lllcccccccc}
			\hline
			\multirow{2}{*}{Dim} & \multirow{2}{*}{Path} & \multirow{2}{*}{Type} & \multicolumn{4}{c}{Spectral Norm} & \multicolumn{4}{c}{Accuracy} \\
			\cline{4-11}
			& & & $T=500$ & $T=1000$ & $T=2000$ & $T=3000$ & $T=500$ & $T=1000$ & $T=2000$ & $T=3000$ \\
			\hline
			\multirow{6}{*}{$q=18$} & path1 & type1 & 1.102 & 0.686 & 0.451 & 0.363 & 0.505 & 0.559 & 0.707 & 0.829 \\
			& path1 & type2 & 1.101 & 0.686 & 0.452 & 0.363 & 0.500 & 0.534 & 0.626 & 0.705 \\
			& path2 & type1 & 1.113 & 0.695 & 0.458 & 0.368 & 0.574 & 0.598 & 0.639 & 0.680 \\
			& path2 & type2 & 1.114 & 0.696 & 0.458 & 0.368 & 0.569 & 0.589 & 0.621 & 0.659 \\
			& path3 & type1 & 1.113 & 0.694 & 0.457 & 0.367 & 0.573 & 0.629 & 0.729 & 0.807 \\
			& path3 & type2 & 1.111 & 0.695 & 0.457 & 0.367 & 0.566 & 0.603 & 0.684 & 0.751 \\
			\hline
			\multirow{6}{*}{$q=24$} & path1 & type1 & 1.430 & 0.838 & 0.544 & 0.431 & 0.476 & 0.507 & 0.591 & 0.711 \\
			& path1 & type2 & 1.425 & 0.837 & 0.544 & 0.431 & 0.472 & 0.496 & 0.556 & 0.641 \\
			& path2 & type1 & 1.433 & 0.845 & 0.548 & 0.435 & 0.555 & 0.566 & 0.609 & 0.656 \\
			& path2 & type2 & 1.434 & 0.844 & 0.548 & 0.435 & 0.556 & 0.558 & 0.593 & 0.619 \\
			& path3 & type1 & 1.431 & 0.843 & 0.547 & 0.434 & 0.547 & 0.585 & 0.672 & 0.755 \\
			& path3 & type2 & 1.432 & 0.845 & 0.548 & 0.435 & 0.541 & 0.572 & 0.637 & 0.713 \\
			\hline
			\multirow{6}{*}{$q=36$} & path1 & type1 & 2.133 & 1.146 & 0.712 & 0.559 & 0.443 & 0.462 & 0.508 & 0.584 \\
			& path1 & type2 & 2.135 & 1.146 & 0.713 & 0.559 & 0.443 & 0.457 & 0.486 & 0.530 \\
			& path2 & type1 & 2.138 & 1.151 & 0.715 & 0.561 & 0.539 & 0.543 & 0.557 & 0.579 \\
			& path2 & type2 & 2.137 & 1.151 & 0.715 & 0.561 & 0.535 & 0.540 & 0.550 & 0.566 \\
			& path3 & type1 & 2.137 & 1.151 & 0.715 & 0.561 & 0.513 & 0.536 & 0.593 & 0.654 \\
			& path3 & type2 & 2.138 & 1.151 & 0.716 & 0.561 & 0.511 & 0.528 & 0.566 & 0.614 \\
			\hline
		\end{tabular}
	\end{adjustbox}
	\caption{Spectral norm and accuracy for the PisCES-smoothed OLS results in the generalized ScBM-VHAR simulation.}
	\label{tab:vhar_pisces_ols}
\end{table}

\begin{figure}[b!]
	\centering
	\includegraphics[width=.95\linewidth]{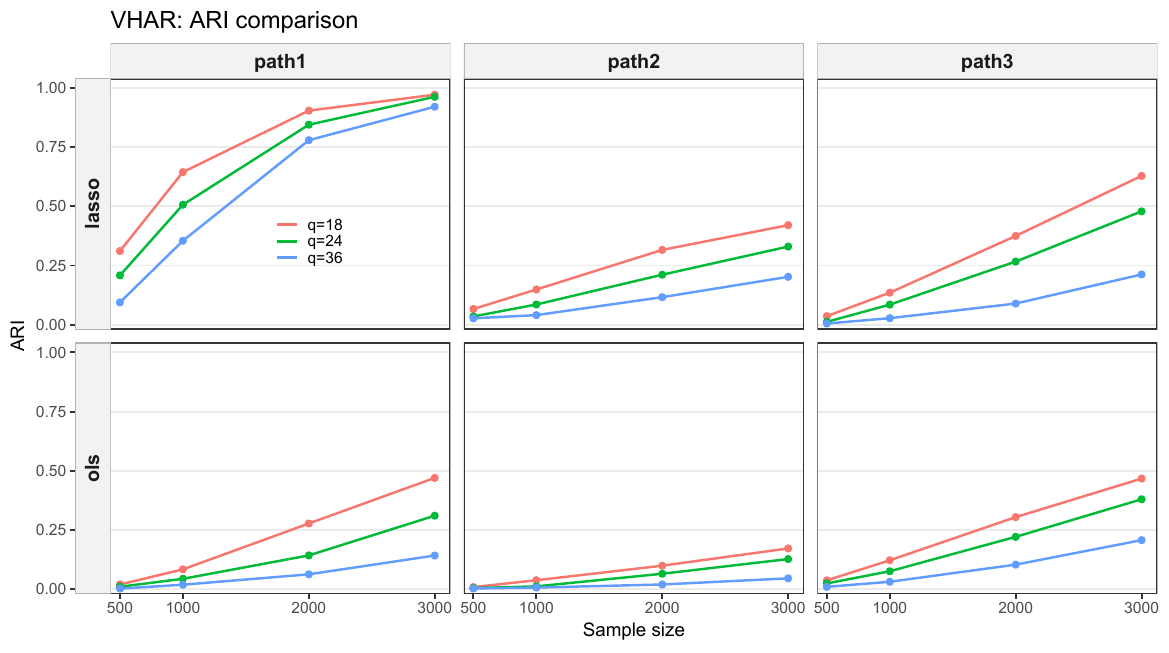}
	\vspace{-5 mm}
	\caption{ARI comparison for the generalized ScBM-VHAR models (average of type1 and type2).} 
	\label{fig:vhar_ari_comparison}
\end{figure}

\clearpage

\section{Additional tables and figures for data applications}
\label{ap:additional_data}

\subsection{Data application of ScBM-PVAR model}
\label{ap:additional_data_pvar}

\begin{table}[h]
	\centering
	\resizebox{1\columnwidth}{!}{%
		\begin{tabular}{|l|l|l|l|}
			\hline
			\textbf{Category} & \textbf{Level 2 Subcategory} & \textbf{Level 3 Subcategory} & \textbf{Code} \\ 
			\hline
			\multirow{28}{*}{\textbf{Total Private}} & \textbf{Goods-Producing} &  &  \\ 
			& \textbf{Mining and Logging} &  & Mining \\ 
			& \textbf{Construction} &  & Construction \\ 
			& \textbf{Manufacturing} &  &  \\ 
			&  & - Durable Goods & Durable \\ 
			&  & - Nondurable Goods & Nondurable \\ 
			\cline{2-4}
			& \textbf{Private Service-Provision} &  &  \\ 
			& \textbf{Trade, Transportation, and Utilities} &  &  \\ 
			&  & - Wholesale Trade & Wholesale \\ 
			&  & - Retail Trade & Retail \\ 
			&  & - Transportation and Warehousing & Transportation \\ 
			&  & - Utilities & Utility \\ 
			& \textbf{Information} &  & Information \\ 
			& \textbf{Financial Activities} &  &  \\ 
			&  & - Finance and Insurance & Finance \\ 
			&  & - Real Estate and Rental and Leasing & RealEstate \\ 
			& \textbf{Professional and Business Services} &  &  \\ 
			&  & - Professional, Scientific, and Technical Services & Professional \\ 
			&  & - Management of Companies and Enterprises & Management \\ 
			&  & - Administrative and Support and Waste Management and Remediation Services & Administration \\ 
			& \textbf{Private Education and Health Services} &  &  \\ 
			&  & - Private Educational Services & Education \\ 
			&  & - Health Care and Social Assistance & HealthCare \\ 
			& \textbf{Leisure and Hospitality} &  &  \\ 
			&  & - Arts, Entertainment, and Recreation & Arts \\ 
			&  & - Accommodation and Food Services & Accommodation \\ 
			& \textbf{Other Services} &  & Other \\ 
			\hline
			\multirow{3}{*}{\textbf{Government}} & \textbf{Federal} &  & Federal \\ 
			& \textbf{State Government} &  & State \\ 
			& \textbf{Local Government} &  & Local \\ 
			\hline
	\end{tabular}}
	\caption{Hierarchy of industry sectors in Nonfarm Payroll Employment.}
	\label{tab:employment_sectors}
\end{table}

\begin{figure}[h]
	\centering
	\includegraphics[width=0.9\textwidth,height=0.3\textheight]{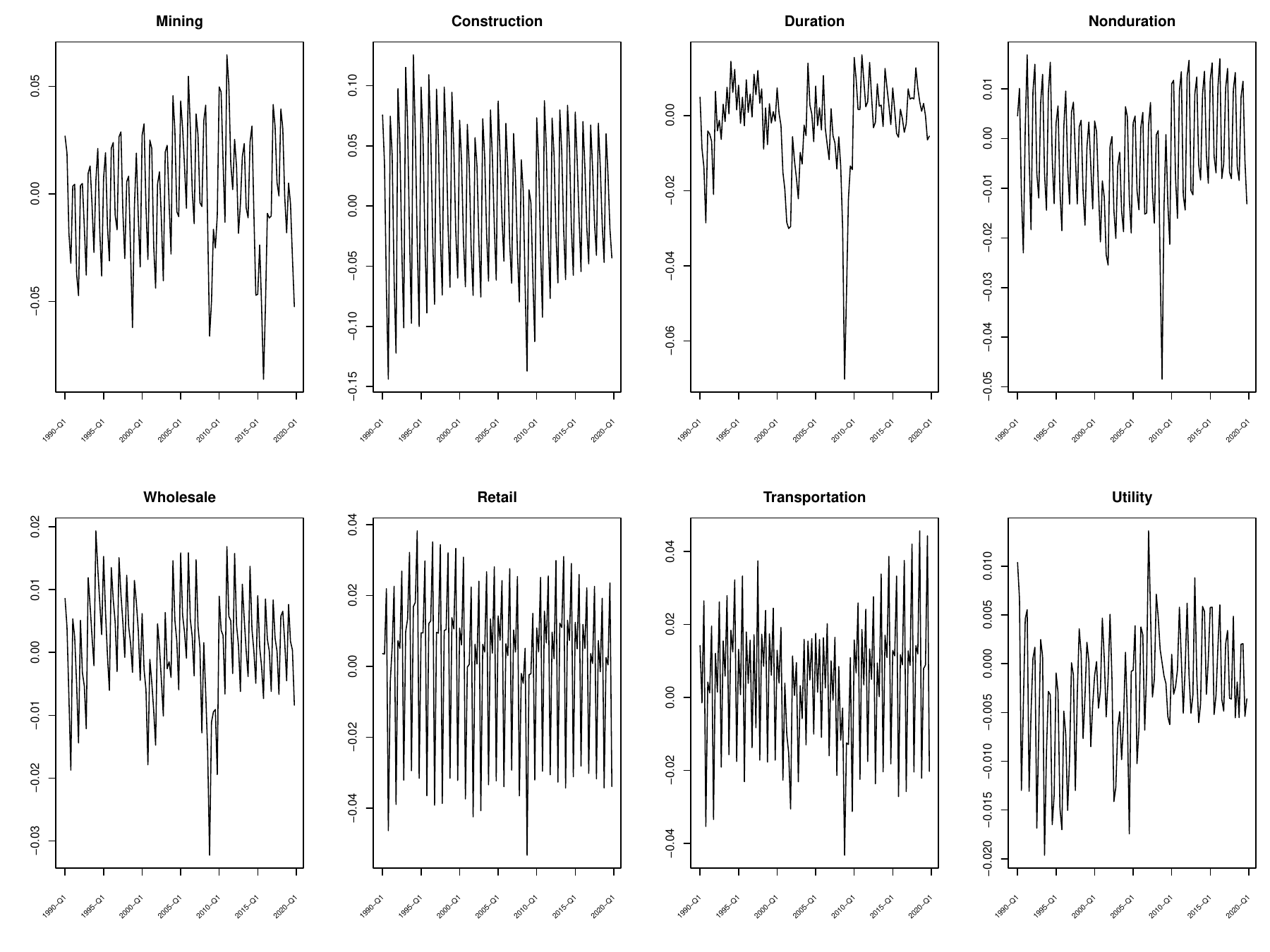}
	\includegraphics[width=0.45\textwidth,height=0.35\textheight]{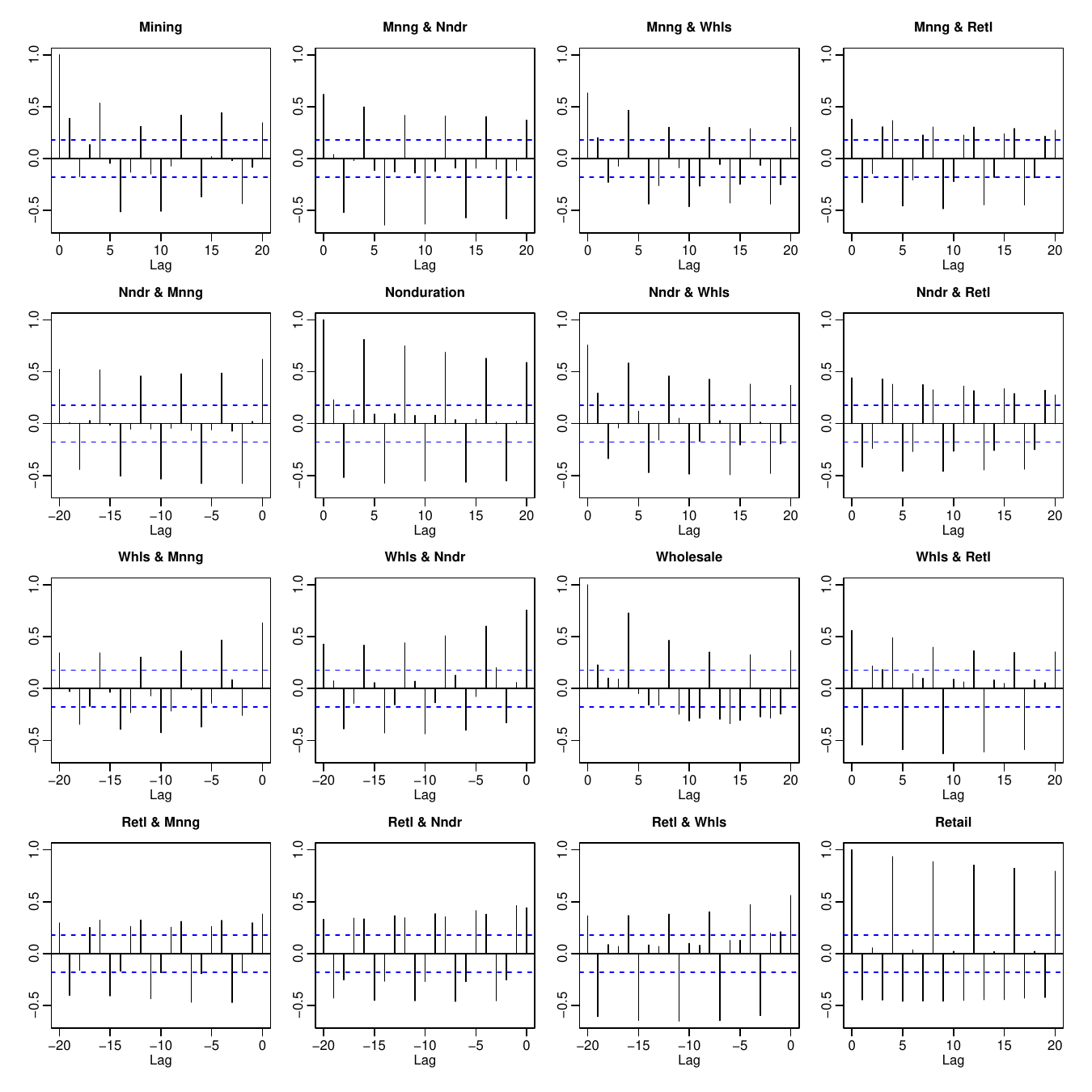}
	\includegraphics[width=0.45\textwidth,height=0.35\textheight]{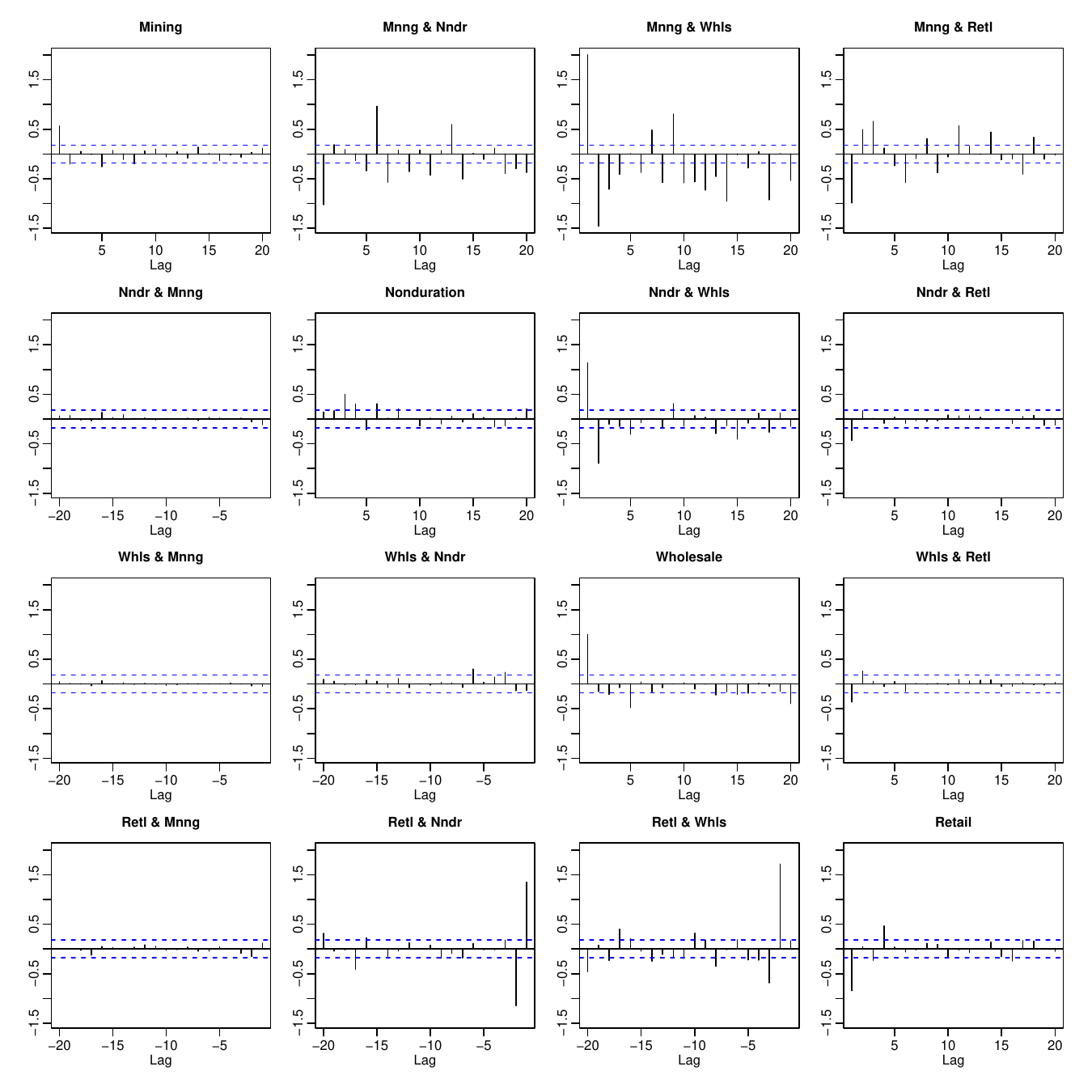}
	\caption{(Top) Time plots of the first eight time series (see Table \ref{tab:employment_sectors} for code). (Left Bottom) Sample ACF plots for selected sectors (Mining, Nondurable, Wholesale, and Retail). (Right Bottom) Sample PACF plots for the same sectors.}
	\label{fig:timeplot_PVAR}
\end{figure}

\clearpage
\begin{figure}[t!]
	\centering
	\includegraphics[width=\textwidth]{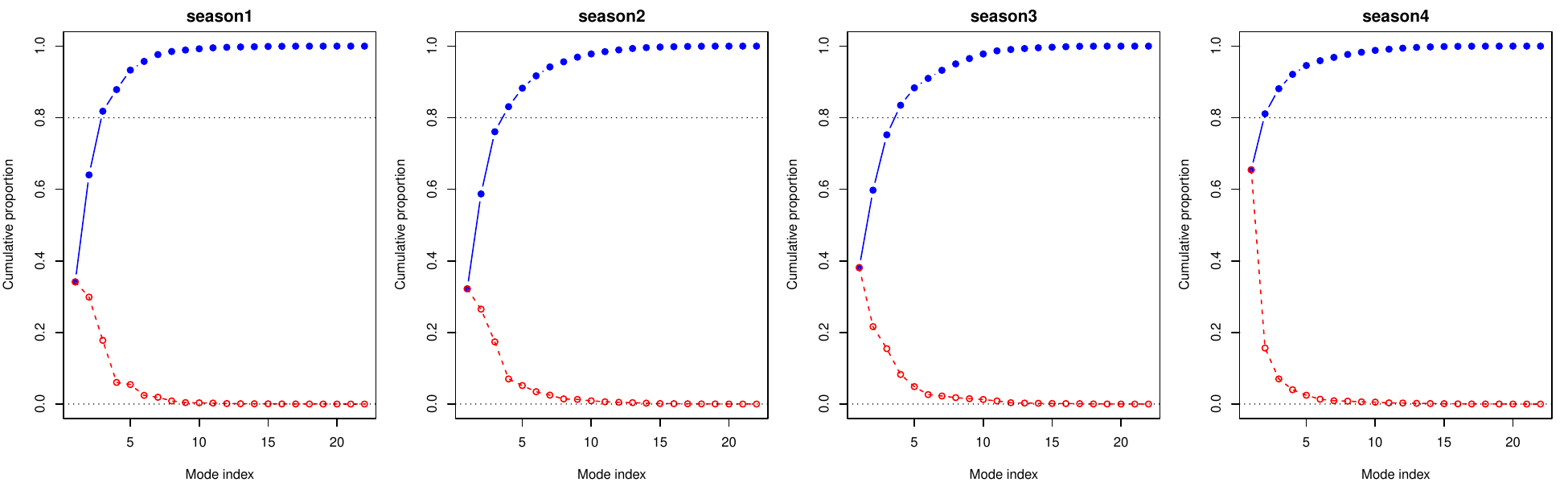}
	\caption{Scree plots of singular values from the estimated transition matrices of the ScBM-PVAR model with four seasons. The red solid curves represent the proportion of variation explained by each singular value, while the black dashed curves indicate the cumulative proportion. The black dotted lines mark the 80\% threshold.}
	\label{fig:rank_PVAR}
\end{figure}

\begin{figure}[b!]
	\centering
	\includegraphics[width=\textwidth]{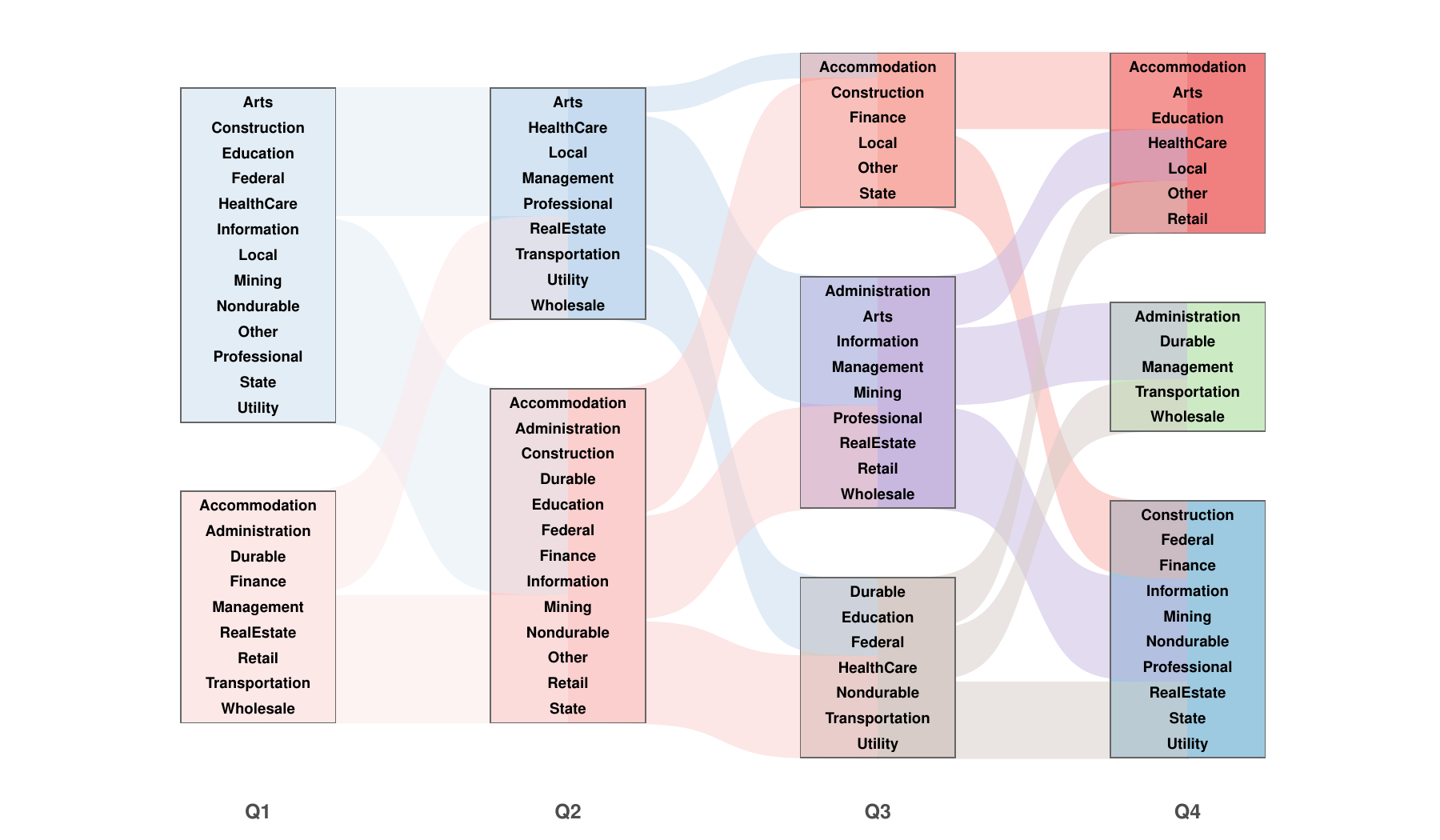}
	\caption{Alternative aligned quarterly community paths of the 22 industry sectors in U.S.\ nonfarm payroll employment under the lasso-based ScBM--PVAR model with effective seasonal path $2 \to 2 \to 3 \to 2$. Relative to the main $2 \to 3 \to 3 \to 2$ specification, the Q2 split is coarser and the stronger mid-year differentiation appears in Q3, while the broad contrast between a recurrent business-centered core and more seasonally mobile sectors remains similar.}
	\label{fig:pvar_payroll_sankey_2232}
\end{figure}

\clearpage
\subsection{Data application of generalized ScBM-VHAR model}
\label{ap:additional_data_Vhar}

\begin{table}[ht!]
	\centering
	\small
	\resizebox{0.9\columnwidth}{!}{%
		\begin{tabular}{|l|l|l|l|l|}
			\hline
			\textbf{Stock Index} & \textbf{MSCI Classification} & \textbf{Continent} & \textbf{Market} & \textbf{Ticker} \\
			\hline
			AEX Index & Developed & Europe & Netherlands & AEX \\
			All Ordinaries & Developed & Oceania & Australia & AORD \\
			BFX Index & Developed & Europe & Belgium & BFX \\
			BSESN Index & Emerging & Asia & India & BSESN \\
			BVSP Index & Emerging & South America & Brazil & BVSP \\
			DJIA & Developed & North America & United States & DJI \\
			CAC 40 & Developed & Europe & France & FCHI \\
			FTSE MIB & Developed & Europe & Italy & FTMIB \\
			FTSE 100 & Developed & Europe & United Kingdom & FTSE \\
			DAX & Developed & Europe & Germany & GDAXI \\
			S\&P/TSX Composite Index & Developed & North America & Canada & GSPTSE \\
			Hang Seng & Developed & Asia & Hong Kong & HSI \\
			IBEX 35 & Developed & Europe & Spain & IBEX \\
			Nasdaq Composite & Developed & North America & United States & IXIC \\
			KOSPI Composite Index & Developed & Asia & South Korea & KS11 \\
			KSE 100 & Emerging & Asia & Pakistan & KSE \\
			IPC Mexico & Emerging & North America & Mexico & MXX \\
			Nikkei 225 & Developed & Asia & Japan & N225 \\
			NSE Nifty 50 & Emerging & Asia & India & NSEI \\
			OMX C20 & Developed & Europe & Denmark & OMXC20 \\
			OMX HPI & Developed & Europe & Finland & OMXHPI \\
			OMX SPI & Developed & Europe & Sweden & OMXSPI \\
			Oslo All Share & Developed & Europe & Norway & OSEAX \\
			Russell 2000 & Developed & North America & United States & RUT \\
			S\&P 500 & Developed & North America & United States & SPX \\
			Shanghai Composite & Emerging & Asia & China & SSEC \\
			Swiss Market Index & Developed & Europe & Switzerland & SSMI \\
			Madrid General Index  & Developed & Europe & Spain &  SMSI \\
			Euro STOXX 50 & Developed & Europe & Eurozone & STOXX50E \\
			\hline
	\end{tabular}}
	\caption{The list of 29 stock indices ordered by MSCI market classification, continent, and country.}
	\label{tab:stock_indices}
\end{table}

\begin{figure}[h]
	\centering
	\includegraphics[width=0.9\textwidth,height=0.3\textheight]{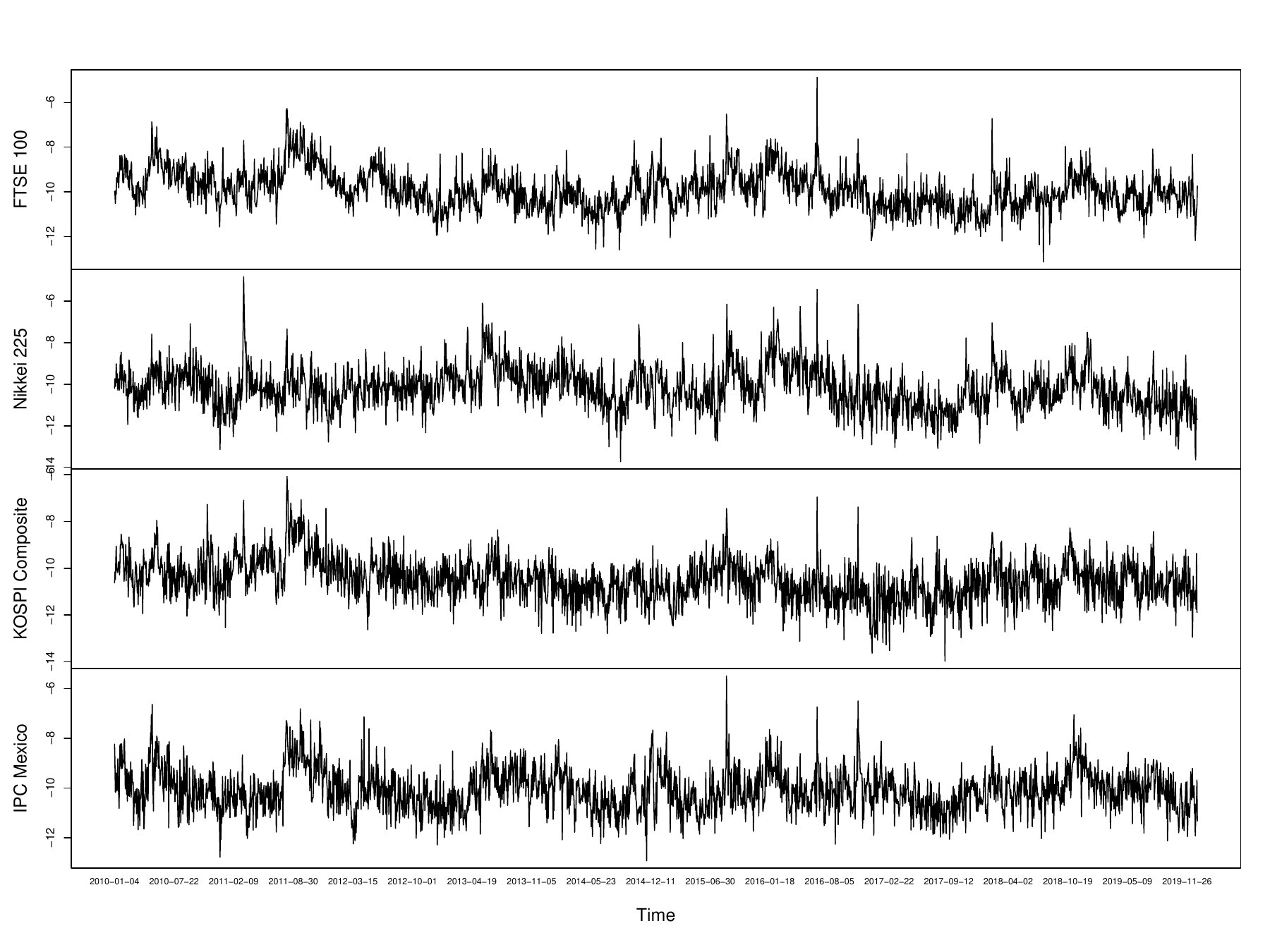}
	\includegraphics[width=0.45\textwidth,height=0.35\textheight]{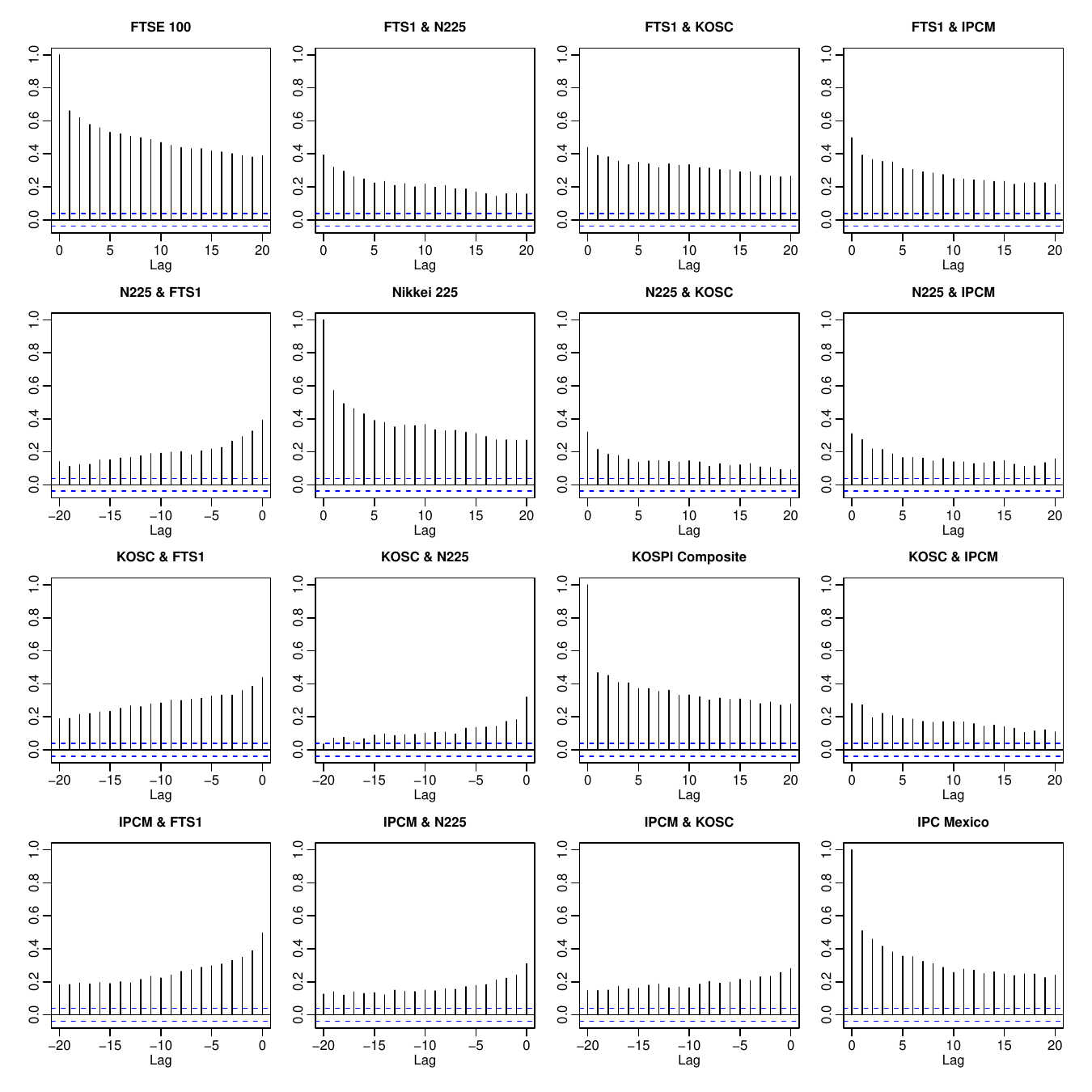}
	\includegraphics[width=0.45\textwidth,height=0.35\textheight]{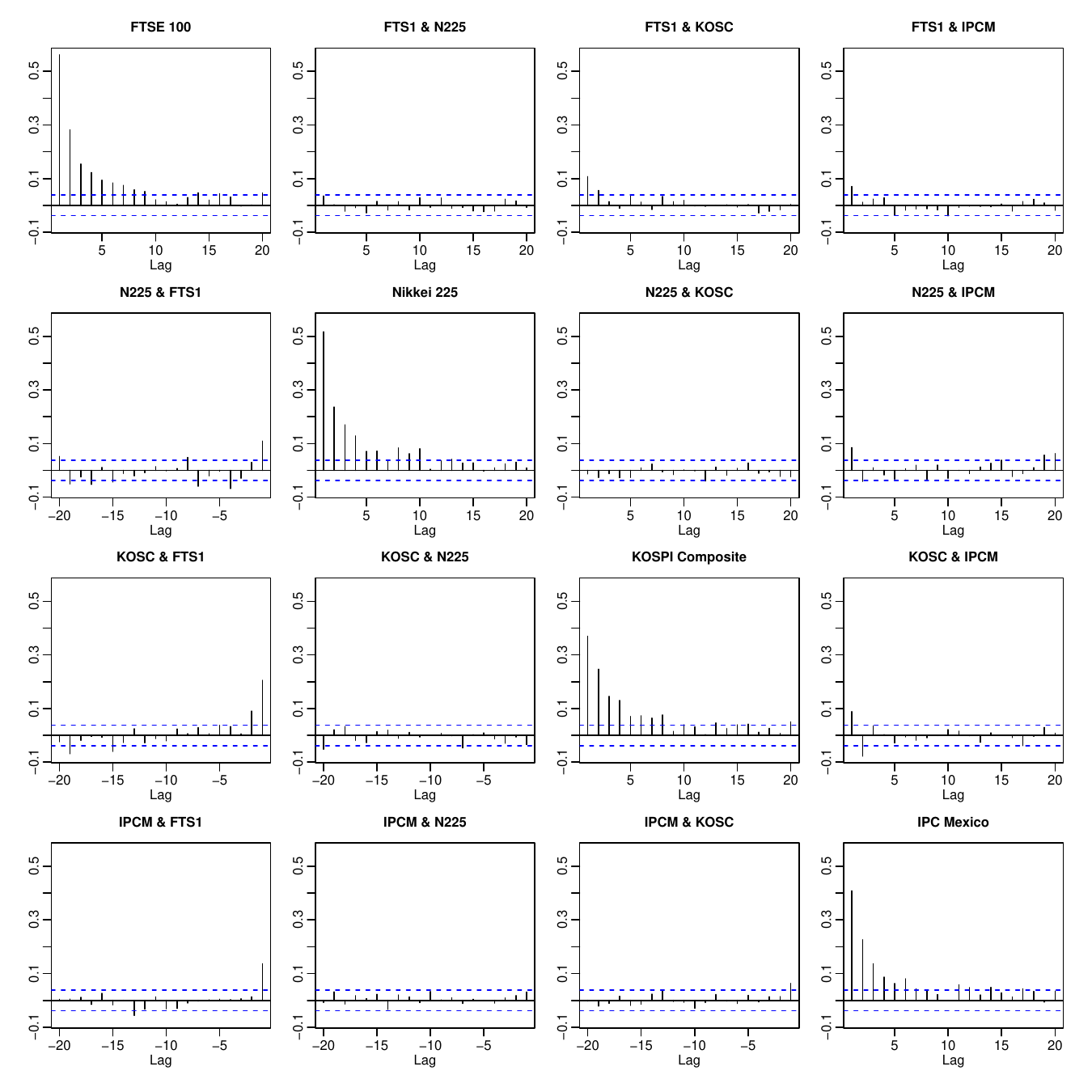}
	\caption{(Top) Time plots of four selected stock indices (FTSE 100, Nikkei 225, KOSPI Composite Index, and IPC Mexico). (Bottom Left) Sample ACF plots of the series. (Bottom Right) Sample PACF plots of the series.}
	\label{fig:timeplot_VHAR}
\end{figure}

\clearpage
\begin{figure}[t!]
	\centering
	\includegraphics[width=1\textwidth]{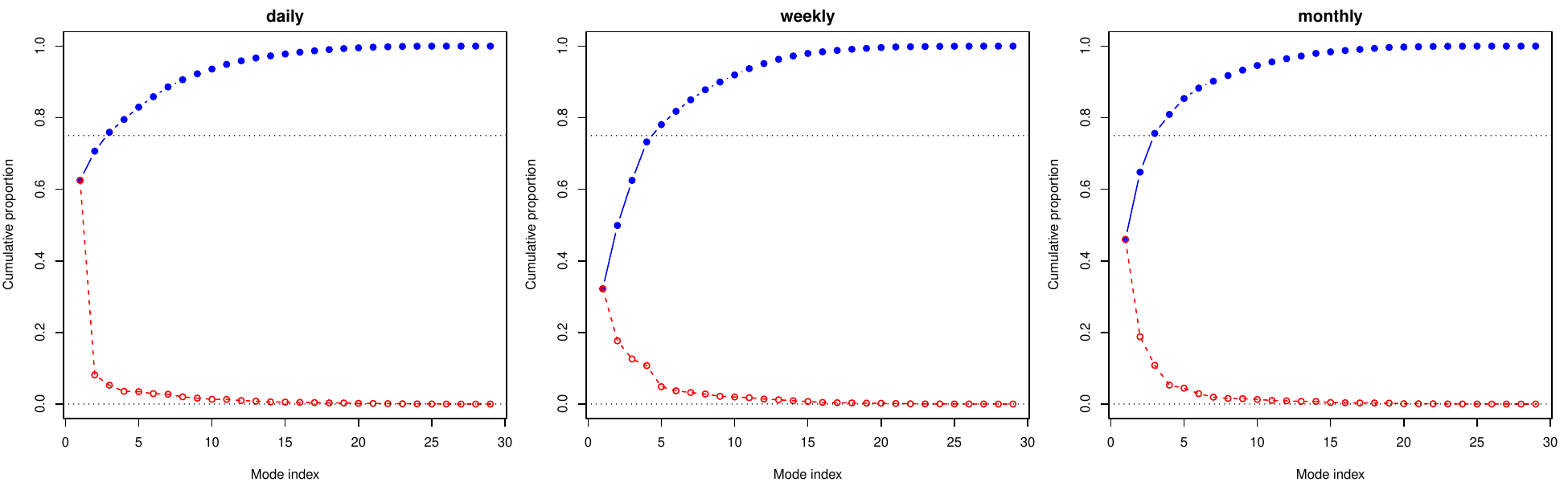}
	\caption{Scree plots for the realized-volatility application from the lasso estimation of the generalized ScBM--VHAR model with $(b_M,b_L)=(5,22)$. The red solid curves show the proportion of variation explained by each singular value, the black dashed curves show the cumulative proportion, and the black dotted lines mark the 75\% threshold.}
	\label{fig:rank_VHAR}
	\vspace{-1mm}
\end{figure}

\begin{figure}[b!]
	\centering
	\vspace{-1mm}
	\includegraphics[width=1\textwidth,height=0.45\textheight]{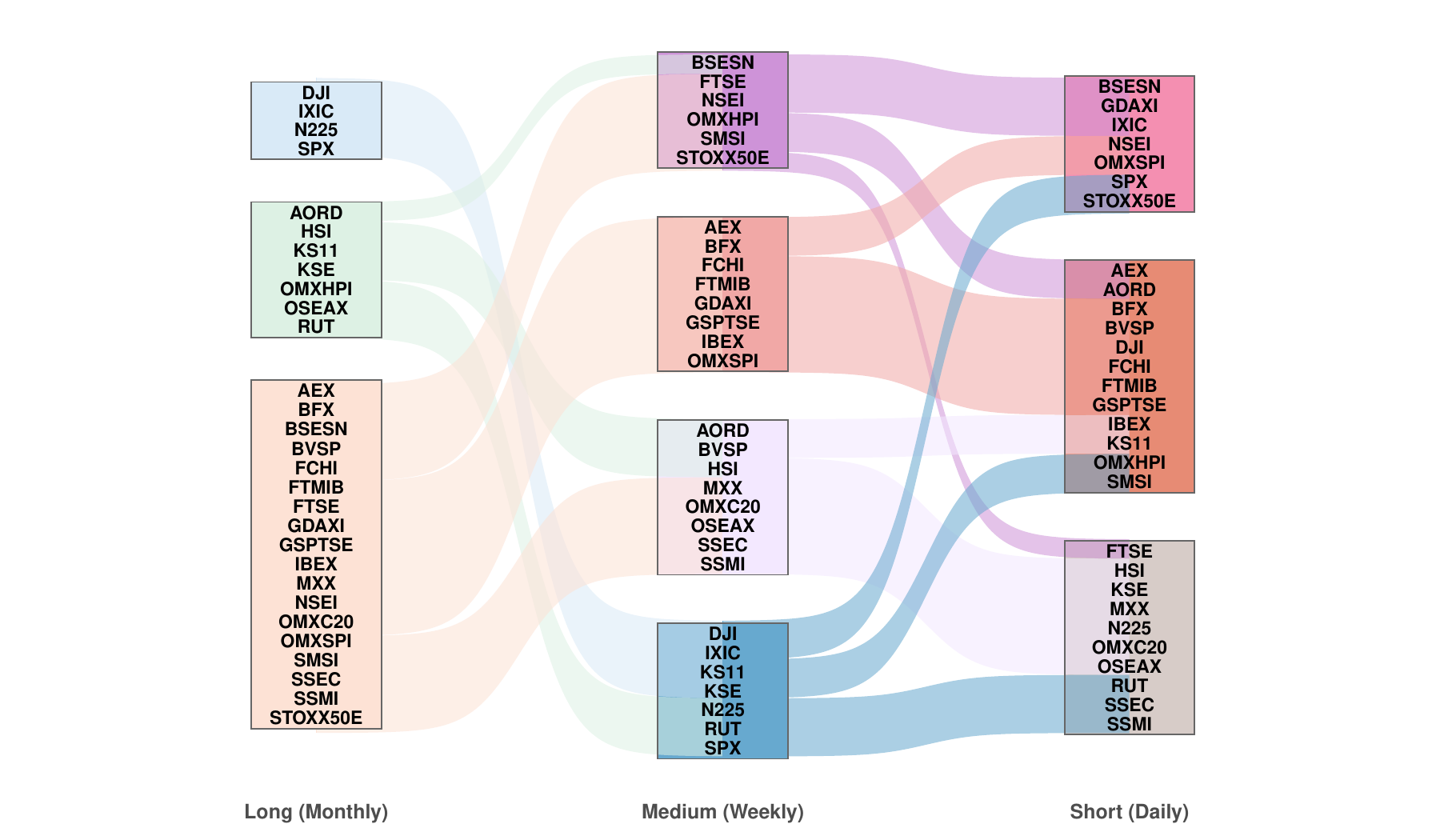}
	\caption{Relative to the parsimonious 3--3--3 specification discussed in the text, the 3--4--3 setting yields a finer middle-horizon segmentation: besides the Europe-heavy developed core and the U.S.-centered block, the peripheral markets split into two distinct groups. At the short horizon, these four middle-horizon communities partly recombine into three broader groups. AEX, BFX, FCHI, FTMIB, GSPTSE, and IBEX follow the stable path $1 \to 3 \to 2$, whereas IXIC and SPX follow $3 \to 1 \to 3$ and N225 follows $3 \to 1 \to 1$. The long-horizon partition remains broadly similar.}
	\label{fig:sanky_VHAR_343}
\end{figure}